%% file: LCHS_arxiv_v1.tex
\documentclass[11pt]{article}

\usepackage{authblk}
\usepackage{amsmath,amssymb,amsthm}
\usepackage{mathtools}
\usepackage{geometry}
\usepackage{graphicx}
\usepackage[percent]{overpic}
\usepackage{float}
\usepackage{url}
\usepackage{color}
\usepackage{enumitem}
\usepackage[numbers,sort]{natbib}
\usepackage[colorlinks=true, linkcolor=blue, citecolor=blue, urlcolor=blue]{hyperref}
\usepackage[normalem]{ulem}
\usepackage{cancel}
\usepackage{booktabs,siunitx}
\title{Optimal kernel functions for linear combination of \\ Hamiltonian simulation}

\author{Harriet Apel\thanks{hapel@psiquantum.com} }
\author{Bjorn K. Berntson\thanks{bberntson@psiquantum.com}}
\affil{PsiQuantum, 700 Hansen Way, Palo Alto, CA 94304, USA}

\theoremstyle{plain}
\newtheorem{thm}{Theorem}
\newtheorem{lem}[thm]{Lemma}
\newtheorem{prop}[thm]{Proposition}
\newtheorem{cor}[thm]{Corollary}

\theoremstyle{definition}
\newtheorem{prob}[thm]{Problem}
\newtheorem{defn}[thm]{Definition}

\newcommand{\mc}[1]{\mathcal{#1}}
\newcommand{\m}[1]{\mathsf{#1}}

\newcommand{\ee}{\mathrm{e}}
\newcommand{\ii}{\mathrm{i}}

\newcommand{\C}{\mathbb{C}}
\newcommand{\N}{\mathbb{N}}
\renewcommand{\H}{\mathbb{H}}
\newcommand{\R}{\mathbb{R}}
\newcommand{\Z}{\mathbb{Z}}

\newcommand{\ccdot}{\,\cdot\,}
\newcommand{\abs}[1]{\lvert#1\rvert}

\DeclarePairedDelimiter{\norm}{\lVert}{\rVert}

\begin{document}

\maketitle

\begin{abstract}
Linear combination of Hamiltonian simulation (LCHS) provides an asymptotically optimal framework for the quantum simulation of linear non-unitary dynamics.
The LCHS approach approximately implements the Peano--Baker
propagator through an integral representation determined by a
\textit{kernel function}. We establish an LCHS approximation theorem
for a broadened class of admissible kernel functions, defined by natural complex-analytic
conditions.
In the standard time-independent block-encoding access model, we explicitly construct, for each target error $\varepsilon$, the unique kernel function in this class minimizing the query-cost functional and show that the minimum of this functional has the asymptotic expansion
$\tfrac{2\ee}{\pi}\big(\log\frac{1}{\varepsilon}
-\log\log\frac{1}{\varepsilon}+o(1)\big)$ as
${\varepsilon\downarrow0}$.
The provably optimal leading coefficient $2\ee/\pi$ improves by a factor of approximately 2.1 on the coefficient obtained empirically by Low and Somma through numerical minimization over their family of optimally scaling kernel functions.
We construct a spectrally convergent quadrature with explicit bounds on operator error and linear combination of unitaries (LCU) subnormalization, proving that this asymptotic improvement persists after discretization into a finite linear combination of unitaries suitable for implementation. Our results are based on Hardy space theory and convex analysis.
\end{abstract}

\newpage

\setcounter{tocdepth}{2} 
\tableofcontents

\section{Introduction}
Linear combination of Hamiltonian simulation (LCHS) is a quantum algorithm
for solving linear systems of ordinary differential equations (ODEs), extending quantum simulation to general linear dynamics, including non-unitary evolution \cite{an2023linear,An_2025}. The central idea is to represent the solution of
a linear ODE system as a continuous superposition of solutions to Hamiltonian
simulation problems, i.e., Schr\"{o}dinger equations generated by Hermitian
matrices.
The choice of kernel function affects the cost of the resulting quantum algorithm.
Our aim is to characterize the optimal choice within a natural complex-analytic class
and to construct a quadrature that preserves the resulting cost improvement.

\newpage

Fix $n\in\N$ and $\tau\in (0,\infty)$; the initial-value problem
treated by LCHS is as follows. 

\begin{prob}[Initial value problem for linear ODEs]
\label{prob:IVP}
Given $\m{A}:[0,\tau]\to \C^{2^n\times 2^n}$ and 
$\m{u}_0\in \C^{2^n}$, find
an absolutely continuous function $\m{u}:[0,\tau]\to \C^{2^n}$ such that $\m{u}(0)=\m{u}_0$ and
\begin{equation}\label{eq:ODE}
\frac{\mathrm{d}}{\mathrm{d}t}\m{u}(t)=-\m{A}(t)\m{u}(t),
\end{equation}
for almost every $t\in[0,\tau]$.
\end{prob}
Under mild regularity conditions on $\m{A}$, Problem~\ref{prob:IVP} has a unique solution represented by a time-ordered exponential \cite[Chaps.~2--3]{coddington1955}.

\begin{thm}[Peano--Baker]
\label{thm:PB}
Suppose that $\m{A}$ is measurable and satisfies
\begin{equation}
\int_{0}^{\tau} \lVert \m{A}(t)\rVert_2\,\mathrm{d}t < \infty ,	
\end{equation}
where $\lVert\ccdot\rVert_2$ is the spectral norm. Then, Problem~\ref{prob:IVP} has a unique solution
\begin{equation}
\label{eq:propagator}
\m{u}(t)=\m{U}(t)\m{u}_0 \qquad (t\in [0,\tau]),
\end{equation}
where
\begin{align}\label{eq:U}
\m{U}(t)\coloneqq&\;  \mc{T}\exp\Bigg(-\int_0^t \m{A}(s)\,\mathrm{d}s\Bigg) \nonumber \\
\coloneqq &\; \m{I}+\sum_{j=1}^{\infty} (-1)^j \int_{0\leq t_j\leq \cdots\leq t_1\leq t} \m{A}(t_1)\cdots \m{A}(t_j)\,\mathrm{d}t_j\cdots \mathrm{d}t_1. 
\end{align}
\end{thm}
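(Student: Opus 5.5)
The plan is to realize the Dyson series \eqref{eq:U} as the fixed point of the integral equation
\[
\m{U}(t)=\m{I}-\int_0^t \m{A}(s)\m{U}(s)\,\mathrm{d}s \qquad (t\in[0,\tau]),
\]
obtained by Picard iteration in the Banach space $C([0,\tau];\C^{2^n\times 2^n})$ with the sup-norm. Set $a(t)\coloneqq\lVert \m{A}(t)\rVert_2$, which is integrable by hypothesis, and write $\alpha(t)\coloneqq\int_0^t a(s)\,\mathrm{d}s$.

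First I show that the series in \eqref{eq:U} converges absolutely and uniformly on $[0,\tau]$. Submultiplicativity of $\lVert\ccdot\rVert_2$ bounds the $j$-th term by
\[
\int_{0\leq t_j\leq\cdots\leq t_1\leq t} a(t_1)\cdots a(t_j)\,\mathrm{d}t_j\cdots\mathrm{d}t_1=\frac{\alpha(t)^j}{j!},
\]
where the equality holds by symmetry of the integrand under permutations of the $t_i$. The integrals themselves are well defined as Bochner/Lebesgue integrals, since the integrand is measurable and bounded in norm by an integrable function. Hence the series is dominated by $\ee^{\alpha(\tau)}$, and each partial sum is continuous in $t$ (indeed absolutely continuous, since $\alpha$ is). Uniform convergence then gives continuity of $\m{U}$.

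Next I verify the integral equation. Writing the $j$-th term as $\m{U}_j(t)=-\int_0^t \m{A}(t_1)\m{U}_{j-1}(t_1)\,\mathrm{d}t_1$ (with $\m{U}_0=\m{I}$) is just Fubini on the simplex. Summing and exchanging sum and integral, justified by dominated convergence using the bound $a(t_1)\ee^{\alpha(\tau)}$, gives $\m{U}(t)=\m{I}-\int_0^t\m{A}\m{U}$. Consequently $\m{U}$ is absolutely continuous, and by the Lebesgue differentiation theorem $\m{U}'(t)=-\m{A}(t)\m{U}(t)$ for a.e.\ $t$. It follows that $\m{u}=\m{U}\m{u}_0$ solves Problem~\ref{prob:IVP}.

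Finally, for uniqueness, I take two absolutely continuous solutions. Their difference $\m{w}$ is absolutely continuous with $\m{w}(0)=0$, so $\m{w}(t)=-\int_0^t\m{A}\m{w}$. Hence $\lVert\m{w}(t)\rVert\le\int_0^t a(s)\lVert\m{w}(s)\rVert\,\mathrm{d}s$, and the integral form of Grönwall's inequality with an $L^1$ coefficient gives $\m{w}\equiv0$. Alternatively, iterating the inequality $j$ times yields $\lVert\m{w}\rVert\le \sup\lVert\m{w}\rVert\,\alpha(\tau)^j/j!\to0$.

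The main obstacle is the measure-theoretic bookkeeping needed because $\m{A}$ is merely integrable rather than continuous. This affects three points. The equivalence between a.e.\ differentiability and the integral equation for absolutely continuous functions must be justified; it holds entrywise in finite dimensions. The symmetrization identity for the simplex integral must be checked. The interchange of sum and integral must be justified. I would handle all three via Fubini--Tonelli applied to the nonnegative scalar majorants.
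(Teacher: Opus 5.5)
Your proposal is correct. The simplex bound $\alpha(t)^j/j!$, the Fubini recursion $\m{U}_j(t)=-\int_0^t\m{A}(t_1)\m{U}_{j-1}(t_1)\,\mathrm{d}t_1$ leading to the integral equation, and the Gr\"onwall iteration for uniqueness all hold under the $L^1$ hypothesis. The paper gives no proof of its own and defers to Coddington--Levinson (Chaps.~2--3), whose treatment rests on the same successive-approximation and Gr\"onwall-type estimates, so your route is essentially the one the paper relies on.
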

LCHS approximates the propagator $\m{U}(t)$ by a weighted integral of unitary propagators. To identify these propagators, we write the Cartesian decomposition of $\m{A}(t)$,
\begin{equation}
\label{eq:Cartesian}
\m{A}(t)=\m{L}(t)+\ii\m{H}(t), \qquad \m{L}(t)\coloneqq \frac{\m{A}(t)+\m{A}(t)^{\dag}}{2}, \qquad \m{H}(t)\coloneqq \frac{\m{A}(t)-\m{A}(t)^{\dag}}{2\ii}.	
\end{equation}
Suppose that $\m{L}(t)\succeq 0$ for almost every $t\in[0,\tau]$.\footnote{This assumption
can always be achieved by a change of variables in \eqref{eq:ODE}. Let
$\m{v}(t)\coloneqq\ee^{-\int_0^t \lambda_-(s)\,\mathrm{d}s}\m{u}(t)$,
where $\lambda_-(t)\coloneqq \max \{0,-\lambda_{\min}(\m{L}(t))\}$. Then $\m{v}(t)$ solves \eqref{eq:ODE} with $\m{A}(t)$ replaced by $\m{A}(t)+\lambda_-(t)\m{I}$, whose Hermitian part is positive semidefinite.
The transformed matrix remains integrable since $\lambda_-(t)\leq \lVert\m{L}(t)\rVert_2\leq\lVert\m{A}(t)\rVert_2$ for every $t\in [0,\tau]$.
This rescaling preserves the normalized solution but can affect
the state-preparation cost; see~\cite[Sec.~1.1.4]{low2025}.
Recovering the original propagator multiplies absolute operator
errors by $\ee^{\int_0^t\lambda_-(s)\,\mathrm{d}s}$, which can
increase the cost of achieving a fixed target accuracy.}
For $z\in\C$, define
\begin{equation}
\label{eq:UE}
\mc{U}(z;t)
\coloneqq
\mc{T}\exp\Bigg(
-\ii\int_0^t
(z\m{L}(t')+\m{H}(t'))\,\mathrm{d}t'
\Bigg)
\qquad
(t\in[0,\tau]).
\end{equation}
At $z=-\ii$, we recover the target propagator, $
\mc{U}(-\ii;t)=\m{U}(t)$,
whereas for $z=x\in\R$, the matrix
$x\m{L}(t)+\m{H}(t)$ is Hermitian and
$\mc{U}(x;t)$ is correspondingly unitary for every $t\in [0,\tau]$. \\

We consider kernel functions $f$ defined on certain domains in the lower half-plane, having a unique simple pole at $z=-\ii$ with
prescribed residue
\begin{equation}
\label{eq:res_cond}
\operatorname*{Res}_{z=-\ii}f(z)
=
\frac{\ii}{\sqrt{2\pi}},
\end{equation}
and admitting an integrable real-axis boundary value $f_0$.\footnote{See \eqref{eqn:upper_boundary_value} for the precise
definition. Under the stronger boundary regularity assumptions imposed
in existing treatments of LCHS, $f_0$ is simply the pointwise
restriction of $f$ to $\R$. In the present work, the existence of
$f_0$ as an $L^1$ boundary value follows from the admissibility
conditions introduced below.}
For a truncation radius $R>0$, the corresponding LCHS approximation to the Peano--Baker propagator \eqref{eq:U} is
\begin{equation}
\label{eq:Of}
\m{O}_f(t;R)
\coloneqq
\frac{1}{\sqrt{2\pi}}
\int_{-R}^{R}
f_0(x)\mc{U}(x;t)\,\mathrm{d}x
\qquad
(t\in[0,\tau]).
\end{equation}
Theorem~\ref{thm:LCHS} gives an explicit bound on $\lVert\m{O}_f(t;R)-\m{U}(t)\rVert_2$
for the admissible kernel functions defined below.\\

Applying a quadrature rule to \eqref{eq:Of} gives a finite linear combination of the unitaries $\mc{U}(x;t)$, generated by the Hamiltonians $x\m{L}(t)+\m{H}(t)$. Given controlled implementations of these unitaries at the quadrature nodes, the linear combination of unitaries (LCU) construction block-encodes the resulting sum. For a nonzero initial datum $\m{u}_0$, and given a preparation procedure for its normalized state, this block encoding can then be used to prepare an approximation to $\m{u}(t)/\lVert\m{u}(t)\rVert_2$.\\

Earlier LCHS analyses \cite{An_2025,low2025,an2024laplace,huang2025,pocrnic2025} impose conditions on the kernel function that are
stronger than necessary for the underlying contour integration
argument.
These conditions restrict the class over which the kernel
function can be optimized.
We work with a natural Hardy space class\footnote{A Hardy space formulation for LCHS on the lower half-plane appears in \cite[Thm.~3]{an2024laplace}, with the additional assumption of continuity on the boundary.} on strips and the lower half-plane,
prove an LCHS approximation theorem under weaker
hypotheses, and solve the resulting infinite-dimensional constrained optimization
problem exactly.
For each target error $\varepsilon\in(0,1)$, the solution is a unique kernel function in closed form, up to solving a scalar transcendental equation.
The optimal domain is the lower half-plane and the minimum of our continuous query-cost functional has the asymptotic expansion
$\tfrac{2\ee}{\pi}\big(\log\frac{1}{\varepsilon}
-\log\log\frac{1}{\varepsilon}+o(1)\big)$ as
$\varepsilon\downarrow0$.
The optimal kernel function has branch points at the endpoints of the truncated integration interval, which are admitted by our weaker boundary assumptions. We construct a spectrally convergent quadrature adapted to this analytic structure, thereby connecting the continuous optimizer to a finite LCU implementation~\cite{berry2014}.

Concurrently and independently, Ni and Ying \cite{ni2026quantumeigenvaluetransformationlinear} proposed a numerically motivated kernel function ansatz of the same functional form as our optimizer \eqref{eq:f_optimal} and established the same leading asymptotic cost coefficient $2\ee/\pi$. Our approach derives the kernel function, without prescribing its functional form, by directly solving the underlying infinite-dimensional optimization problem. It establishes global optimality and uniqueness within the Hardy space admissible class, and yields the sharper asymptotic expansion \eqref{eq:optscaling}. We also construct a spectrally convergent quadrature that rigorously preserves this expansion after discretization. 
 \\

In the remainder of this section, we give the requisite technical
background, state our results with supporting numerics, and provide remarks on their interpretation and relation to prior work.
The proof of our main result, which constructs the extremal kernel function that minimizes the continuous LCHS cost, is given in Section~\ref{sect:proof_thmmain}.
The construction and analysis of the spectrally convergent quadrature for this optimal kernel function are given in Section~\ref{subsection:quadproof}.

\subsection{Admissible kernel functions and the LCHS approximation}

To motivate and contextualize our results, we state an LCHS theorem for a broadened class of kernel functions.
This theorem unifies \cite[Thm.~1]{low2025} and \cite[Thm.~6]{An_2025} while requiring weaker analytic hypotheses.
Our approach relies on tools from classical Hardy space theory, which we review in Appendix~\ref{appen:hardy}. \\

We begin by specifying the admissible class of kernel functions we study. Let $y_0\in (1,\infty]$. Our kernel functions are defined on the domain
\begin{equation}\label{eqn:lower_strip}
S_{y_0}\coloneqq \begin{cases}
    \{z\in \C: -y_0 <\operatorname{Im}z< 0\} & y_0\in (1,\infty) \\
    \H_- & y_0=\infty,
\end{cases}
\end{equation}
where $\H_-$ is the lower half-plane,
\begin{equation}
\H_{-}\coloneqq \{z\in \C:  \operatorname{Im}z< 0\}.
\end{equation}
Let $\mc{M}(\Omega)$ and $\mc{O}(\Omega)$ denote the classes of meromorphic and holomorphic functions, respectively, on a set $\Omega\subseteq\C$.\footnote{If $\Omega$ is not open, a function is said to be meromorphic or holomorphic on $\Omega$ if it admits, respectively, a meromorphic or holomorphic extension to an open neighborhood of $\Omega$.}
We factor out the prescribed simple pole using the reciprocal of the \textit{Blaschke factor} $B(\ccdot;y_0)\in\mc{O}(S_{y_0})$,
\begin{equation}
\label{eq:Blaschke}
B(z;y_0)\coloneqq     \begin{dcases}
\frac{\sinh\big(\frac{\pi}{2y_0}(z+\ii)\big)}{\sinh\big(\frac{\pi}{2y_0}(z-\ii)\big)} & y_0\in (1,\infty) \\
\frac{z+\ii}{z-\ii} & y_0=\infty,
\end{dcases}
\end{equation}
which has a unique simple zero in $S_{y_0}$ at $z=-\ii$.
The relevant Hardy space is defined by
\begin{equation}
\label{eqn:def_H1}
H^1(S_{y_0})
\coloneqq
\Bigg\{
h\in\mc{O}(S_{y_0}):
\sup_{y\in(0,y_0)}
\int_{\R}\lvert h(x-\ii y)\rvert\,\mathrm{d}x<\infty
\Bigg\}.
\end{equation}
The remaining freedom in the kernel function is parameterized by a function $h\in H^1(S_{y_0})\subset \mc{O}(S_{y_0})$ subject to a normalization that enforces the prescribed residue at $z=-\ii$ in \eqref{eq:res_cond}.

\begin{defn}[Admissible kernel function]
\label{def:admissible}
A function $f\in \mc{M}(S_{y_0})$ is called admissible, and we write $f\in\mc{A}(S_{y_0})$, if it admits a representation
\begin{equation}\label{eqn:kernel_f_h}
f(z)=\frac{\ii}{\sqrt{2\pi}}\frac{h(z)}{B(z;y_0)},
\end{equation}
where $h\in H^1(S_{y_0})$ satisfies 
  \begin{equation}\label{eq:norm_h}
    h(-\ii)=\lim_{z\to -\ii}\frac{B(z;y_0)}{z+\ii}
=
\begin{cases}
\dfrac{\ii\pi}{2y_0\sin\big(\tfrac{\pi}{y_0}\big)} & y_0\in(1,\infty)\\
\dfrac{\ii}{2} & y_0=\infty.
\end{cases}
\end{equation}
\end{defn}

The class $\mc{A}(S_{y_0})$ generalizes the kernel function classes considered in \cite{low2025,an2023linear,An_2025}.
Early constructions of LCHS kernel functions
\cite{an2023linear,An_2025} required $f$ to decay uniformly throughout
$\H_-$.
Together with holomorphicity in $\H_-\setminus \{-\ii\}$ and continuity on $\overline{\H}_-\setminus \{-\ii\}$,
this hypothesis precludes exponential decay along the real axis
\cite[Prop.~8]{An_2025}.
Recently, Low and Somma \cite{low2025} circumvented this restriction by requiring decay only as
$\lvert\operatorname{Re}z\rvert\to\infty$, uniformly across a finite strip.\\

Our approach further weakens the hypotheses of \cite{low2025} by replacing the boundary regularity and uniform pointwise decay assumptions with the Hardy space condition \eqref{eqn:def_H1}, which applies
both to finite strips and to the lower half-plane.
This condition uniformly
controls the $L^1$ norms of $h$ along horizontal lines, which is precisely
what is needed to bound the approximation error in the LCHS analysis
below. The comparison of admissible classes is made precise in Proposition~\ref{prop:assumptions} in Appendix~\ref{subsec:assumptionsproof}.
There, we show that if $f$ satisfies the hypotheses of \cite[Thm.~1]{low2025} with a finite error bound,
then $f\in\mc{A}(S_{y_0})$, whereas the converse fails. \\

The half-plane class $\mc{A}(\H_-)$ also has an intrinsic characterization: its boundary
values are precisely the functions $\varphi\in L^1(\R)$ satisfying the
exact scalar LCHS identity
\begin{equation}
\label{eq:exact_scalar_LCHS}
\frac{1}{\sqrt{2\pi}}\int_{\R}\varphi(x)\ee^{-\ii kx}\,\mathrm{d}x
=\ee^{-k}
\qquad (k\geq0).
\end{equation}
This connects our Hardy space formulation with the Fourier
criterion of \cite[Thm.~9]{huang2025},
without the additional regularity assumptions imposed there.
This characterization is proved in Lemma~\ref{lem:exact_LCHS_characterization}
in Appendix~\ref{subsec:exact_LCHS_characterization}.\\

We emphasize that Definition~\ref{def:admissible} does not require
$f$ to be defined on $\partial S_{y_0}$. Instead, the representation
\eqref{eqn:kernel_f_h} and Hardy space theory guarantee that
the boundary value
\begin{equation}
\label{eqn:upper_boundary_value}
f_0(x)
\coloneqq
\lim_{\delta\downarrow0}f(x-\ii\delta)
\end{equation}
exists for almost every $x\in\R$ and defines a function
$f_0\in L^1(\R)$. When $y_0\in(1,\infty)$, the boundary value
\begin{equation}
\label{eqn:lower_boundary_value}
f_{y_0}(x)
\coloneqq
\lim_{\delta\downarrow0}
f(x-\ii y_0+\ii\delta)
\end{equation}
likewise exists for almost every $x\in\R$ and defines a function
$f_{y_0}\in L^1(\R)$. \\

The following theorem provides error guarantees for the LCHS approximation, in the spectral norm.

\begin{thm}[Approximate LCHS under Hardy space hypotheses]
\label{thm:LCHS}
Let $R>0$ and $y_0\in (1,\infty]$. Suppose that the measurable functions $\m{L},\m{H}:[0,\tau]\to \C^{2^n\times 2^n}$ satisfy $\m{L}(t)^{\dag}=\m{L}(t)$, $\m{H}(t)^{\dag}=\m{H}(t)$, $\m{L}(t)\succeq 0$ for almost every $t\in [0,\tau]$ and
\begin{equation}
\int_{0}^{\tau} \lVert \m{L}(t)\rVert_2\,\mathrm{d}t<\infty, \qquad  \int_{0}^{\tau} \lVert \m{H}(t)\rVert_2\,\mathrm{d}t < \infty. 
\end{equation}
Let $f\in \mc{A}(S_{y_0})$. Then $\m{O}_f(\ccdot;R)$ and $\m{U}(\ccdot)$, defined by \eqref{eq:Of} and \eqref{eq:U}, respectively, satisfy
\begin{equation}
\lVert \m{O}_f(t;R)-\m{U}(t)\rVert_2\leq \mc{E}_{R,y_0}[f] \qquad (t\in [0,\tau]),
\end{equation}
where
\begin{equation}
\label{eq:E}
\mc{E}_{R,y_0}[f]	\coloneqq  \begin{dcases} \frac{1}{\sqrt{2\pi}}\int_{\R\setminus[-R,R]} \lvert {f}_0(x)\rvert\,\mathrm{d}x+\frac{1}{\sqrt{2\pi}}\int_{\R} \lvert f_{y_0}(x)\rvert\,\mathrm{d}x & y_0\in (1,\infty)\\
\frac{1}{\sqrt{2\pi}}\int_{\R\setminus[-R,R]} \lvert {f}_0(x)\rvert\,\mathrm{d}x	& y_0=\infty.
  \end{dcases}
\end{equation}
\end{thm}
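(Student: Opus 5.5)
The plan is to reduce the operator estimate to a scalar contour identity via the Dyson series, and then to justify the contour argument with Hardy space boundary theory. First I would split the error:
\[
\m{O}_f(t;R)-\m{U}(t) = \Big(\tfrac{1}{\sqrt{2\pi}}\int_{\R} f_0(x)\mc{U}(x;t)\,\mathrm{d}x-\m{U}(t)\Big) - \tfrac{1}{\sqrt{2\pi}}\int_{\R\setminus[-R,R]} f_0(x)\mc{U}(x;t)\,\mathrm{d}x .
\]
Since $\mc{U}(x;t)$ is unitary for real $x$, the second term is bounded by the first summand of $\mc{E}_{R,y_0}[f]$. It therefore suffices to show that the full-line integral differs from $\m{U}(t)$ by at most $\tfrac{1}{\sqrt{2\pi}}\norm{f_{y_0}}_{L^1}$ when $y_0<\infty$, and that it equals $\m{U}(t)$ exactly when $y_0=\infty$.

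For the full-line integral, I would use the entire function $z\mapsto\mc{U}(z;t)$, defined by its Dyson series in $z$. The key a priori bound is the contractivity estimate $\norm{\mc{U}(x-\ii y;t)}_2\le 1$ for $y\ge0$. This follows because the generator $-\ii(z\m{L}+\m{H})$ has Hermitian part $-y\m{L}\preceq0$, so a Grönwall/energy estimate $\frac{\mathrm{d}}{\mathrm{d}t}\norm{v}^2\le 0$ holds for absolutely continuous solutions. Consider the horizontal line integrals
\[
I(y)\coloneqq\frac{1}{\sqrt{2\pi}}\int_{\R}f(x-\ii y)\,\mc{U}(x-\ii y;t)\,\mathrm{d}x, \qquad y\in(0,y_0)\setminus\{1\}.
\]
By the residue theorem applied to rectangles, together with the residue condition \eqref{eq:res_cond}, the jump across $y=1$ is $I(y^+)-I(y^-)= -2\pi\ii\cdot\frac{1}{\sqrt{2\pi}}\cdot\frac{\ii}{\sqrt{2\pi}}\,\mc{U}(-\ii;t)=\m{U}(t)$ (orientation to be checked). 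On each of the intervals $(0,1)$ and $(1,y_0)$, $I$ is constant. The limits $y\downarrow0$ and $y\uparrow y_0$ give the boundary integrals with $f_0$ and $f_{y_0}$. The final bound then follows from $\norm{\mc{U}(x-\ii y_0;t)}_2\le1$.

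The main obstacle is the justification without pointwise decay. Vertical sides of rectangles cannot be controlled pointwise, so I would integrate in $y$ instead. Since $h\in H^1$ and $1/B$ is bounded away from $-\ii$ on the strip, Fubini gives $\int_{y_1}^{y_2}\int_{\R}|f|\,\mathrm{d}x\,\mathrm{d}y<\infty$. Hence along a sequence $X_k\to\infty$, the vertical integrals $\int_{y_1}^{y_2}|f(\pm X_k-\ii y)|\,\mathrm{d}y$ tend to $0$, and $\mc{U}$ is bounded by $1$ there. This proves that $I$ is piecewise constant, with the prescribed jump.

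For the boundary limits, I would use the standard $H^1$ fact from Appendix~\ref{appen:hardy}: $h(\cdot-\ii y)\to h_0$ in $L^1(\R)$ as $y\downarrow0$, and similarly at $y_0$. Because $1/B$ is bounded near the boundary lines and converges uniformly there, it follows that $f(\cdot-\ii y)\to f_0$ in $L^1$. Combined with the local uniform continuity of $\mc{U}$ and its uniform bound, this gives $I(y)\to\frac{1}{\sqrt{2\pi}}\int f_0\,\mc{U}(x;t)\,\mathrm{d}x$ as $y\downarrow0$; the tails are handled by the $L^1$ convergence.

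In the half-plane case $y_0=\infty$, I would instead show $I(y)\to0$ as $y\to\infty$. Since $\mc{U}$ is bounded by $1$, it suffices that $\int_{\R}|f(x-\ii y)|\,\mathrm{d}x\to0$. This follows from the Poisson representation of $H^1(\H_-)$ functions, whose line $L^1$ norms are bounded, while $|1/B|\to1$. That alone is not enough, so I would use $h=P_y*h_0$ and the fact that $\int_{\R} h_0=0$. An alternative is to argue via the Fourier support of $h_0$ and the exact scalar identity \eqref{eq:exact_scalar_LCHS} from Lemma~\ref{lem:exact_LCHS_characterization}: expand $\mc{U}$ in its Dyson series and apply the scalar identity term by term. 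I expect this limiting step to be the most delicate.
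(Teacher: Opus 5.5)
Your argument is correct, but it departs from the paper's route in two places.

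\emph{Vertical sides.} The paper uses the Poisson representation to bound $\int_\delta^Y\abs{h(\pm M-\ii y)}\,\mathrm{d}y$ by convolutions of $\abs{h_0}$ (and $\abs{h_{y_0}}$) with kernels in $L^1\cap C_0$, obtained by integrating the Poisson kernels in $y$. This gives the limit along every $M\to\infty$. Your Fubini argument only produces sequences, chosen independently on the left and right. That is all the rectangle identity needs, so your step is lighter and needs nothing beyond the slice bound and boundedness of $1/B$ away from $-\ii$.

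\emph{Half-plane bottom edge.} This is the more substantive difference. The paper never shows that $\norm{f(\cdot-\ii Y)}_1$ decays. Instead it regularizes $\m{L}\mapsto\m{L}+\eta\m{I}$, so that $\norm{\mc{U}_\eta(x-\ii Y;t)}_2\le\ee^{-\eta Yt}$, and uses only the uniform slice bound of Lemma~\ref{lem:squareint}. It then removes $\eta$ by dominated convergence and treats $t=0$ separately by continuity. Your route puts the decay in the kernel. This is uniform in $t$ and avoids the regularized propagator, at the price of the mean-zero property of $H^1(\H_-)$ boundary values, which the appendix does not record.

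Both facts you need are short.
\begin{itemize}
\item Apply your rectangle argument to $h(z)\ee^{-\ii kz}$. This shows $\ee^{-ky}\hat h_y(k)$ is independent of $y$, where $\hat h_y$ is the Fourier transform of $h(\cdot-\ii y)$. Since $\abs{\hat h_y(k)}$ is bounded in $y$, this forces $\hat h_y(k)=0$ for $k>0$. The Poisson representation then gives $\hat h_0(k)=\ee^{yk}\hat h_y(k)=0$ for $k>0$, and $\int_\R h_0=0$ follows by continuity of $\hat h_0$ at $0$.
\item Using the mean-zero property,
\begin{equation*}
\norm{h(\cdot-\ii y)}_1
\le\int_\R\abs{h_0(x')}\,
\norm{\hat P_0(\cdot-x'/y,1)-\hat P_0(\cdot,1)}_1\,\mathrm{d}x'
\longrightarrow0
\qquad(y\to\infty),
\end{equation*}
by continuity of translation in $L^1$ and dominated convergence.
\end{itemize}
Together with $\abs{1/B(\cdot-\ii y;\infty)}\le3$ for $y\ge2$, this gives $I(y)\to0$.

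On orientation: the clockwise rectangle gives $I(y_1)-I(y_2)=\m{U}(t)$ for $y_1<1<y_2$. Hence $I\equiv\m{U}(t)$ on $(0,1)$ and $I\equiv0$ on $(1,\infty)$.

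Drop the suggested alternative via the exact scalar identity, for two reasons. First, the forward direction of Lemma~\ref{lem:exact_LCHS_characterization} is proved in the paper by applying Theorem~\ref{thm:LCHS}, so invoking it here is circular. Second, the Dyson terms of $\mc{U}(x;t)$ grow polynomially in $x$, so they cannot be integrated term by term against a kernel that is merely in $L^1(\R)$.
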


The proof of Theorem~\ref{thm:LCHS} uses the prescribed simple pole of $f$ together
with the fact that $\mc{U}(\ccdot;t)$ is entire and contractive on $\H_-$. If
$\Gamma\subset S_{y_0}$ is a negatively oriented rectangular contour
enclosing $z=-\ii$, Cauchy's residue theorem gives
\begin{equation}
    \frac{1}{\sqrt{2\pi}}
    \oint_{\Gamma} f(z)\mc{U}(z;t)\,\mathrm{d}z
    = \m{U}(t).
\end{equation}
The contributions from the vertical sides vanish as their real parts
tend to $\pm\infty$, leaving only the two horizontal contributions.
Under the analyticity assumptions of \cite[Thm.~1]{low2025}, the
horizontal sides of the contour may be taken directly on the strip boundary.
Here, the Hardy space hypothesis guarantees that the boundary values $f_0$ and, when $y_0<\infty$, $f_{y_0}$, belong to $L^1$
but does not guarantee
continuity up to the boundary. We therefore first work with contours
whose horizontal edges lie strictly inside $S_{y_0}$ and then use
Hardy space boundary convergence in $L^1$ as the upper edge approaches $\R$
and, when $y_0<\infty$, the lower edge approaches
$\R-\ii y_0$. Bounding the error caused by truncating the
upper-boundary integral to $[-R,R]$ gives the first term in
\eqref{eq:E}, while, for finite $y_0$, bounding the lower-boundary
integral gives the second. In the half-plane case $y_0=\infty$, a
separate regularization argument shows that the contribution from the
lower horizontal edge $\R-\ii Y$ vanishes as $Y\to\infty$, leaving only
the real-axis truncation error. \\

A detailed proof of Theorem~\ref{thm:LCHS} is given in Appendix~\ref{subsec:LCHSproof}.

\subsection{Statement of results}

While Theorem~\ref{thm:LCHS} establishes a continuous approximation guarantee, implementing the resulting operator $\m{O}_f(t;R)$ as a quantum algorithm requires discretizing the integral into a finite linear combination of the unitaries $\mc{U}(x;t)$. The two leading cost contributions retained in our model are the LCU subnormalization\footnote{Preparing a normalized output state may introduce an additional success-probability factor depending on the target and input state, see~\cite[Eqs.~(14)--(15)]{low2025}. We omit this factor because it is outside the cost model for kernel functions considered here.} and the cost of simulating the constituent unitaries \cite{an2023linear,gilyen2019}.
By \eqref{eq:Of} and the unitarity of $\mc{U}(x;t)$ for $x\in\R$, the truncated operator $\m{O}_f(t;R)$ has the natural continuous LCU subnormalization
\begin{equation}
\label{eq:alpha}
\alpha_R[f]
\coloneqq
\frac{1}{\sqrt{2\pi}}
\int_{-R}^{R}\lvert f_0(x)\rvert\,\mathrm{d}x.
\end{equation}
After quadrature, the block-encoding subnormalization is given by the $\ell^1$-norm of its coefficients via the standard LCU construction \cite{berry2014,gilyen2019}. \\

Although Theorem~\ref{thm:LCHS} applies to general time-dependent dynamics, the optimality results below concern the following time-independent cost model.
The Cartesian decomposition \eqref{eq:Cartesian} gives the identity
\begin{equation}\label{eq:hamiltonian_from_A}
x\m{L}+\m{H}=\frac{(x-\ii)\m{A}+(x+\ii)\m{A}^{\dag}}{2}.
\end{equation}
In the standard block-encoding access model for $\m{A}$, this yields a block encoding of $x\m{L}+\m{H}$ with subnormalization proportional to $\sqrt{1+x^2}$. Consequently, the largest block-encoding subnormalization for the
Hamiltonians $x\m{L}+\m{H}$ with $\lvert x\rvert\leq R$ grows as
$\sqrt{1+R^2}$.
Up to fixed access-model parameters and the usual logarithmic dependence on the simulation accuracy, the corresponding Hamiltonian-simulation cost therefore scales linearly with $R$ as $R\to\infty$ \cite{low2019,low2025}. For state preparation, amplitude amplification contributes a multiplicative
factor proportional to $\alpha_R[f]$.
Therefore, we model the leading dependence of the implementation cost on the kernel function by the functional 
\begin{equation}\label{eqn:query}
Q_R[f]\coloneqq R\alpha_R[f].
\end{equation}

This leads to the following constrained optimization problem for the optimal kernel function.\footnote{The constraint $\mc{E}_{R,y_0}[f]\leq \varepsilon$ is the sufficient error bound \eqref{eq:E} rather than the operator error; optimality in Problem~\ref{prob:main} is understood relative to this bound.}

\begin{prob}[Optimizing {$Q_R[f]$}]
\label{prob:main}
Given $\varepsilon\in (0,1)$, minimize $Q_R[f]$ over $R\in (0,\infty)$, $y_0\in (1,\infty]$, and $f\in\mc{A}(S_{y_0})$, subject to the constraint $\mc{E}_{R,y_0}[f]\leq \varepsilon$.
\end{prob}

Our main result gives the unique solution of Problem~\ref{prob:main}: the optimal choice of domain is the lower half-plane, and the optimal truncation radius $R_\star(\varepsilon)$ is determined by the unique root $m_\star(\varepsilon)\in(0,1)$ of a transcendental equation.
The corresponding optimal kernel function is explicit.

\begin{thm}[Optimal kernel functions for LCHS]
\label{thm:optimal-kernel}
The unique solution of Problem~\ref{prob:main} is given by
\begin{equation}
\label{eq:f_optimal}
f_\star(z)=\sqrt{\frac{2}{\pi}}\frac{\varepsilon}{1-m_\star(\varepsilon)}
\frac{\exp\bigg( \dfrac{ m_\star(\varepsilon)}{\ii\sin(\pi m_\star(\varepsilon))}
\log\dfrac{z+R_\star(\varepsilon)}{z-R_\star(\varepsilon)}\bigg)}
{z^2+1},
\end{equation}
where the logarithm is chosen so that the composition is holomorphic on 
\begin{equation}
\label{eq:Omega_R}
\Omega_{{R}_{\star}(\varepsilon)}\coloneqq \C\setminus\big((-\infty,-R_\star(\varepsilon)]\cup [R_\star(\varepsilon),\infty)\big)
\end{equation}
and agrees with the principal branch of the logarithm on $\H_-$. Moreover, $y_0=\infty$ and
\begin{equation}\label{eq:Rstar_from_mstar}
R_\star(\varepsilon)\coloneqq\tan\bigg(\frac{\pi}{2}m_\star(\varepsilon)\bigg),
\end{equation}
where $m_\star(\varepsilon)$ is the unique root in $(0,1)$ of
\begin{equation}
\label{eq:transcendental}
\log\bigg(\frac{1-m}{\varepsilon}\bigg)=\frac{\pi m^2}{\sin(\pi m)}. 
\end{equation}
Let $Q_\star(\varepsilon)$ denote the minimum value of the continuous LCHS cost functional in \eqref{eqn:query}. Then
\begin{align}\label{eq:optscaling}
Q_\star(\varepsilon)=&\; R_\star(\varepsilon) m_\star(\varepsilon)\exp\bigg( \frac{ \pi m_\star(\varepsilon)(1-m_\star(\varepsilon))}{\sin(\pi m_\star(\varepsilon))}\bigg) \nonumber \\
=&\;  \frac{2\ee}{\pi}\bigg(\log\frac{1}{\varepsilon}-\log\log\frac{1}{\varepsilon}+o(1)\bigg) \quad (\varepsilon\downarrow 0).
\end{align}
\end{thm}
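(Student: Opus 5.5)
The plan is to treat Problem~\ref{prob:main} as a weighted $L^1$ minimization over an affine space of Hardy-space functions and to solve it by convex duality with an explicit dual certificate. First fix $R$ and introduce a Lagrange multiplier $\lambda>0$ for the error constraint. For fixed $(R,\lambda)$ the relaxed objective is
\begin{equation*}
J[f]=\frac{1}{\sqrt{2\pi}}\int_{\R} w_{R,\lambda}(x)\lvert f_0(x)\rvert\,\mathrm{d}x \;+\; \frac{\lambda}{\sqrt{2\pi}}\int_{\R}\lvert f_{y_0}(x)\rvert\,\mathrm{d}x,
\end{equation*}
where $w_{R,\lambda}=1$ on $[-R,R]$ and $w_{R,\lambda}=\lambda$ outside; the second term is dropped when $y_0=\infty$. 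The constraint is affine, namely the residue normalization \eqref{eq:norm_h}, equivalently the exact scalar identity \eqref{eq:exact_scalar_LCHS} in the half-plane case via Lemma~\ref{lem:exact_LCHS_characterization}.

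The weak-duality step uses a bounded holomorphic function $g$ on the closed strip. Applying the same contour argument as in the proof of Theorem~\ref{thm:LCHS}, with $f g$ in place of $f\,\mc{U}(\cdot;t)$, gives
\begin{equation*}
\frac{1}{\sqrt{2\pi}}\int_{\R} f_0 g\,\mathrm{d}x-\frac{1}{\sqrt{2\pi}}\int_{\R} f_{y_0}(x)g(x-\ii y_0)\,\mathrm{d}x=g(-\ii),
\end{equation*}
up to orientation and sign conventions. Hence $\lvert g\rvert\le w_{R,\lambda}$ on $\R$ and $\lvert g\rvert\le\lambda$ on $\R-\ii y_0$ imply $J[f]\ge\lvert g(-\ii)\rvert$ for every $f\in\mc{A}(S_{y_0})$ and every $y_0$.

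The certificate will be built from the phase of \eqref{eq:f_optimal}. I take
\begin{equation*}
g(z)=C\exp\bigg(-\frac{m}{\ii\sin(\pi m)}\log\frac{z+R}{z-R}\bigg)\cdot(\text{unimodular normalization}),
\end{equation*}
holomorphic in $\H_+\cup\H_-$ away from the cuts. On $(-R,R)$ the boundary values of $\log\frac{x+R}{x-R}$ pick up $\mp\ii\pi$, which makes $\lvert g\rvert$ equal to one constant inside $[-R,R]$ and another outside. The ratio of these constants, $\exp(\pi m/\sin\pi m)$, matches $\lambda$. The exponent $\log\lvert\cdot\rvert$ times the imaginary coefficient contributes only phase. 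The equality condition for the $L^1$--$L^\infty$ pairing is $f_0 g\ge 0$ almost everywhere, with $f_0$ supported where $\lvert g\rvert$ attains the weight. Writing $f=g^{-1}\cdot(\text{positive-on-}\R\text{ factor})/(z^2+1)$ gives exactly the form \eqref{eq:f_optimal}, because $\overline{\cdot}$ of the phase is the phase of $f_\star$ and $1/(z^2+1)$ is positive on $\R$ with the right residue at $-\ii$.

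I expect the main obstacle to be showing that $y_0=\infty$ is strictly optimal and that the optimizer is unique. For the first, the certificate $g$ also satisfies $\lvert g(x-\ii y_0)\rvert<\lambda$ on lower lines, since by Phragmén--Lindelöf its modulus interpolates monotonically in the upper/lower half-plane. Any mass on the lower boundary is therefore strictly penalized, so a finite strip cannot do better. For uniqueness, equality in the duality pairing forces $\arg f_0=-\arg g$ a.e. together with the two $L^1$ levels. A Hardy-space function in $H^1(\H_-)$ whose boundary values have prescribed argument is determined up to a positive constant, by taking the outer–inner factorization and observing that the quotient of two such functions is real on $\R$ and of bounded type, hence constant. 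The normalization then fixes the constant.

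Finally, I optimize over $R$ and the multiplier. I will compute $\alpha_R[f_\star]$ and the tail integral explicitly by contour integration of $f_\star$ around the cuts, parametrizing $R=\tan(\pi m/2)$ so that the factor $\lvert(x+R)/(x-R)\rvert$ integrates cleanly. Setting the tail equal to $\varepsilon$ and minimizing $R\alpha_R$ should yield \eqref{eq:transcendental} and the closed form in \eqref{eq:optscaling}; uniqueness of the root follows from monotonicity of both sides on $(0,1)$. For the asymptotics, as $\varepsilon\downarrow0$ we have $m\uparrow1$. Setting $\delta=1-m$, one has $\sin(\pi m)\approx\pi\delta$, so \eqref{eq:transcendental} gives $\delta^{-1}\approx\log\frac1\varepsilon+\log\delta$, i.e.\ $\delta^{-1}=\log\frac1\varepsilon-\log\log\frac1\varepsilon+o(1)$. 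Then $R_\star\approx 2/(\pi\delta)$, and the exponential factor tends to $\ee$, which gives $\tfrac{2\ee}{\pi}\big(\log\tfrac1\varepsilon-\log\log\tfrac1\varepsilon+o(1)\big)$. The second-order terms need checking to confirm the $o(1)$ remainder.
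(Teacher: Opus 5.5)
Your fixed-$R$ half-plane argument is a genuinely different route from the paper's outer reduction plus Jensen's inequality against harmonic measure, and it is valid. For each $R$, take $\lambda_R=\big((1-m_{R,\infty})/\varepsilon\big)^{1/m_{R,\infty}}$ and the certificate $g_R(z)=\lambda_R\exp\big(-\tfrac{\log\lambda_R}{\pi\ii}\log\tfrac{z+R}{z-R}\big)$. Its boundary modulus is exactly your weight, so $\alpha_R[f]+\lambda_R\mc{E}_{R,\infty}[f]\ge g_R(-\ii)=\lambda_R^{1-m_{R,\infty}}$, with equality for $f_\star$. (State the pairing identity for $g\in H^\infty(\H_-)$, since $g_R$ has branch points on $\R$.)

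\textbf{Gap 1: excluding finite strips.} On $\H_-$ you have $\abs{g_R}=\lambda_R^{1-\theta/\pi}$ with $\theta=\arg\tfrac{z+R}{z-R}\in(0,\pi)$. So $\abs{g_R}<\lambda_R$ on $\R-\ii y_0$ only when $\lambda_R>1$, i.e.\ when $R<\cot(\pi\varepsilon/2)$. Using only the certificate at $R_\star$ shows that a strip loses at $R=R_\star$, but a strip may choose a different $R$. For large $R$, no certificate can rescue a fixed-$R$ comparison, because it is false there. The fixed-$(R,y_0)$ optimum is $A_\varepsilon(m_{R,y_0})$ with $A_\varepsilon(m)=m\big(\tfrac{1-m}{\varepsilon}\big)^{(1-m)/m}$, which is increasing on $(1-\varepsilon,1)$. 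If $y_0>1/\varepsilon$ and $R$ is large, then $1-\varepsilon<m_{R,y_0}<m_{R,\infty}$, so the strip is strictly better at that $R$. The missing step is a separate bound for $R\ge\cot(\pi\varepsilon/2)$. The constant certificate $g\equiv1$ gives $\alpha_R[f]\ge1-\varepsilon$ for every feasible $f$ on every domain, hence $R\alpha_R[f]\ge(1-\varepsilon)\cot(\pi\varepsilon/2)$. This is the half-plane cost at $m=1-\varepsilon$, and it strictly exceeds $Q_\star(\varepsilon)$ because $m_\star<1-\varepsilon$. The paper avoids the issue altogether by comparing at fixed harmonic measure $m$, via $R_{y_0}(m)>R_\infty(m)$.

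\textbf{Gap 2: uniqueness.} The principle you invoke is false. An $H^1(\H_-)$ function is not determined up to a positive constant by its boundary argument, and bounded type plus real boundary values does not force constancy. For example, $(z-\ii)^{-4}$ and $z^2(z-\ii)^{-4}$ both lie in $H^1(\H_-)$ and their ratio $z^2$ is of bounded type and nonnegative on $\R$, yet they are not proportional. What rescues the step is integrability specific to this problem. If $f$ is any minimizer at fixed $R$, equality in the pairing gives $f_0g_{R,0}\ge0$ a.e. Then $\Psi\coloneqq(z^2+1)fg_R=\tfrac{\ii}{\sqrt{2\pi}}(z-\ii)^2hg_R$ is holomorphic on $\H_-$. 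Under $\zeta=(z+\ii)/(z-\ii)$ one has $(z-\ii)^2h=-2\ii\,h(z(\zeta))z'(\zeta)$, which lies in $H^1$ of the unit disk. So $\Psi$ is an $H^1$ function on the disk with nonnegative boundary values. Its imaginary part is the Poisson integral of zero, hence $\Psi$ is constant, and the residue condition forces $f=f_\star$.

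\textbf{Minor slip in the asymptotics.} You should have $\delta^{-1}=\log\tfrac1\varepsilon-\log\log\tfrac1\varepsilon+2+o(1)$; the $+2$ comes from $m^2=(1-\delta)^2$. Also $\alpha_{R_\star}[f_\star]=\ee\big(1-2\delta+O(\delta^2)\big)$. These two $O(1)$ corrections cancel in $R_\star\alpha_{R_\star}[f_\star]$, which is the second-order check you deferred.
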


The proof of Theorem~\ref{thm:optimal-kernel} proceeds in two stages.
We first solve the constrained subproblem for fixed truncation radius $R$ and strip width
$y_0$, yielding the explicit subnormalization-optimal kernel function of Proposition~\ref{prop:subnormalization-optimal-kernel}.
This is in contrast to \cite{low2025}, where optimization is performed over an explicit finite-parameter family of admissible kernel functions.
We then show that the half-plane setting $y_0 = \infty$, where $S_\infty=\H_-$, is optimal, and determine the remaining parameter $R$ by balancing the truncation radius against the optimal LCU subnormalization.
This yields the asymptotic scaling in \eqref{eq:optscaling}.
The proofs of Proposition~\ref{prop:subnormalization-optimal-kernel} and Theorem~\ref{thm:optimal-kernel} are given in Section~\ref{sect:proof_thmmain}. \\

We can also retain the full Hamiltonian subnormalization factor $\sqrt{1+R^2}$ in the cost functional rather than using the large $R$ approximation.
The following corollary shows that this modifies the optimal kernel function \eqref{eq:f_optimal} without changing the asymptotic expansion of the minimum cost functional.
For low accuracy $\varepsilon\sim10^{-3}$, the optimal $R$ we find is modest, which motivates this inclusion.

\begin{cor}[Low-accuracy optimal kernel functions]
\label{cor:full_normalization_cost}
For $\varepsilon\in(0,1)$, replace $Q_R[f]$ in Problem~\ref{prob:main} by
\begin{equation}\label{eq:full_normalization_cost}
\widetilde Q_R[f]\coloneqq\sqrt{1+R^2}\,\alpha_R[f].
\end{equation}
The modified problem has a unique solution, with $y_0=\infty$ and
$R=\widetilde R_\star(\varepsilon)$, where $\widetilde R_\star$ is obtained
from \eqref{eq:Rstar_from_mstar} by replacing $m_\star$ with
$\widetilde m_\star$, and $\widetilde m_\star(\varepsilon)$ is the unique
root in $(0,1)$ of
\begin{equation}\label{eq:full_normalization_stationary}
\log\frac{1-m}{\varepsilon}
=\frac{\pi m^2}{2}\tan\frac{\pi m}{2}.
\end{equation}
The optimal kernel function is
\begin{equation}\label{eq:full_normalization_kernel}
\widetilde f_\star(z)=\sqrt{\frac{2}{\pi}}\frac{\varepsilon}{1-\widetilde m_\star(\varepsilon)}
\frac{\exp\bigg( \dfrac{ \widetilde m_\star(\varepsilon)\widetilde R_\star(\varepsilon)}{2\ii}
\log\dfrac{z+\widetilde R_\star(\varepsilon)}{z-\widetilde R_\star(\varepsilon)}\bigg)}
{z^2+1},
\end{equation}
where the logarithm is chosen so that the composition is holomorphic on $\Omega_{\widetilde R_{\star}(\varepsilon)}$ \eqref{eq:Omega_R} and agrees with the principal branch of the logarithm on $\H_-$.
Moreover, $\widetilde R_\star(\varepsilon)>R_\star(\varepsilon)$, and the
minimum value $\widetilde Q_\star(\varepsilon)$ satisfies
\begin{equation}\label{eq:full_normalization_asymptotics}
\widetilde Q_\star(\varepsilon)
=Q_\star(\varepsilon)
+O\!\left(\frac{1}{\log(1/\varepsilon)}\right)
\qquad (\varepsilon\downarrow0).
\end{equation}
In particular, the asymptotic expansion \eqref{eq:optscaling} also holds
for $\widetilde Q_\star$.
\end{cor}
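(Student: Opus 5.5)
The plan is to reuse the two-stage structure of the proof of Theorem~\ref{thm:optimal-kernel}. The first stage is unchanged. For fixed $R$ and $y_0$, the factor $\sqrt{1+R^2}$ is a positive constant, so minimizing $\widetilde Q_R[f]$ over $f\in\mc{A}(S_{y_0})$ subject to $\mc{E}_{R,y_0}[f]\le\varepsilon$ is the same as minimizing $\alpha_R[f]$. Proposition~\ref{prop:subnormalization-optimal-kernel} therefore gives, for each $(R,y_0)$, the unique subnormalization-optimal kernel function and the optimal value, which I denote $\alpha_\star(R,y_0;\varepsilon)$.

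The step from the proof of Theorem~\ref{thm:optimal-kernel} showing that $y_0=\infty$ is optimal compares $\alpha_\star(R,y_0;\varepsilon)$ with $\alpha_\star(R,\infty;\varepsilon)$ at each fixed $R$. Multiplying both by the same weight does not affect that comparison, so it carries over verbatim. The problem thus reduces to the one-dimensional minimization of
\begin{equation*}
\widetilde G(R)\coloneqq\sqrt{1+R^2}\,\alpha_\star(R,\infty;\varepsilon),
\qquad\text{in place of}\qquad
G(R)\coloneqq R\,\alpha_\star(R,\infty;\varepsilon).
\end{equation*}

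For the second stage I would use the same parametrization of the half-plane optimizer as in the proof of Theorem~\ref{thm:optimal-kernel}. In it, the exponent multiplying $\log\frac{z+R}{z-R}$ is fixed by the active error constraint, and $\alpha_\star$ is an explicit function of $R$, or equivalently of the variable $m$ with $R=\tan(\pi m/2)$. I then differentiate $\log\widetilde G=\tfrac12\log(1+R^2)+\log\alpha_\star$. The weight contributes $R/(1+R^2)$ instead of the $1/R$ that produced \eqref{eq:transcendental}. Under the substitution $R=\tan(\pi m/2)$ this replaces $\sin(\pi m)$-type terms with $\tan(\pi m/2)$-type terms, and I expect it to yield \eqref{eq:full_normalization_stationary}. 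In the same way, the optimizer's exponent is rewritten as $\widetilde m_\star\widetilde R_\star/2$, which gives \eqref{eq:full_normalization_kernel}.

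Existence and uniqueness of the root in $(0,1)$ follow as before. As $m$ runs over $(0,1)$, the left side of \eqref{eq:full_normalization_stationary} decreases strictly from $\log(1/\varepsilon)>0$ to $-\infty$. The right side increases strictly from $0$ to $+\infty$. Hence there is exactly one crossing. I would also check that $\widetilde G$ tends to $+\infty$, or at least exceeds its value at the stationary point, at both ends $R\downarrow0$ and $R\to\infty$. Then the unique stationary point is the global minimum, and uniqueness of the kernel function follows from uniqueness in Proposition~\ref{prop:subnormalization-optimal-kernel}.

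To show $\widetilde R_\star>R_\star$, I compare the right sides of the two stationarity equations. The identity $\frac{\pi m^2}{2}\tan\frac{\pi m}{2}=\frac{\pi m^2}{\sin(\pi m)}\sin^2\frac{\pi m}{2}$ shows that the new right side is strictly smaller on $(0,1)$. The left side is the same decreasing function in both equations, so the crossing moves to larger $m$, and $\widetilde m_\star>m_\star$. Since $R=\tan(\pi m/2)$ is increasing in $m$, this gives $\widetilde R_\star>R_\star$.

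For \eqref{eq:full_normalization_asymptotics} I would use a sandwich. Since $\sqrt{1+R^2}>R$ for every $R$,
\begin{equation*}
Q_\star=\min_R G\le\min_R\widetilde G=\widetilde Q_\star .
\end{equation*}
In the other direction, $\widetilde Q_\star\le\widetilde G(R_\star)=Q_\star\sqrt{1+R_\star^{-2}}$, which equals $Q_\star+O(Q_\star/R_\star^2)$.

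The main obstacle is checking the orders of growth. From \eqref{eq:transcendental}, as $\varepsilon\downarrow0$ we have $m_\star\to1$ and $\pi m_\star^2/\sin(\pi m_\star)\approx\log(1/\varepsilon)$. This gives $1-m_\star\asymp1/\log(1/\varepsilon)$ and hence $R_\star\asymp\log(1/\varepsilon)$. Combined with $Q_\star\asymp\log(1/\varepsilon)$ from \eqref{eq:optscaling}, the error term is $Q_\star/R_\star^2=O(1/\log(1/\varepsilon))$, and the asymptotic expansion \eqref{eq:optscaling} then transfers to $\widetilde Q_\star$. The remaining work is the careful derivation of the stationarity equation from the explicit form of $\alpha_\star$; I would do this first, so that \eqref{eq:full_normalization_stationary} and the exponent identification are confirmed before the comparison arguments.
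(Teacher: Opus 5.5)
\textbf{Gap in the reduction to $y_0=\infty$.} You justify this step by claiming that the proof of Theorem~\ref{thm:optimal-kernel} gives $\alpha_\star(R,y_0;\varepsilon)\geq\alpha_\star(R,\infty;\varepsilon)$ at each fixed $R$, so that any positive weight preserves the comparison. That is not the paper's argument, and the fixed-$R$ inequality is false. Write $\alpha_\star(R,y_0;\varepsilon)=A_\varepsilon(m_{R,y_0})$ with $A_\varepsilon(m)=m\,(\varepsilon/(1-m))^{-(1-m)/m}$. Then $\frac{\mathrm{d}}{\mathrm{d}m}\log A_\varepsilon(m)=-m^{-2}\log\frac{1-m}{\varepsilon}$, so $A_\varepsilon$ is not monotone: it decreases on $(0,1-\varepsilon)$ and increases on $(1-\varepsilon,1)$. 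At fixed $R$ one has $m_{R,y_0}<m_{R,\infty}$, which is the content of $R_{y_0}(m)>R_\infty(m)$. Now take $y_0>1/\varepsilon$ and choose $R$ with $1-\varepsilon<m_{R,y_0}<1-1/y_0$. Both masses then lie in the increasing range, so $\alpha_\star(R,y_0;\varepsilon)<\alpha_\star(R,\infty;\varepsilon)$. The pointwise-in-$R$ comparison therefore fails for large $R$, and multiplying both sides by the same weight transfers nothing. Since the infimum runs over all $R$, you cannot ignore this regime.

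\textbf{The repair.} The paper compares at fixed $m$, that is, at equal subnormalization $A_\varepsilon(m)$. The finite strip attains this value only at the strictly larger radius $R_{y_0}(m)>R_\infty(m)=\tan(\pi m/2)$. Transferring this to $\widetilde Q$ therefore requires the weight to be strictly increasing in $R$, not merely positive: $\sqrt{1+R_{y_0}(m)^2}\,A_\varepsilon(m)>\sqrt{1+R_\infty(m)^2}\,A_\varepsilon(m)$. The weighted finite-strip cost diverges as $R\downarrow0$ and as $R\to\infty$, so it has a minimizer. Evaluating the inequality at that minimizer gives a strict inequality between optimal values and reduces the problem to minimizing $A_\varepsilon(m)/\cos(\pi m/2)$ on $(0,1)$.

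\textbf{The rest is correct and matches the paper.}
\begin{itemize}
\item The stationarity condition is $\log\frac{1-m}{\varepsilon}=\frac{\pi m^2R}{2}$, which gives \eqref{eq:full_normalization_stationary} and $\Delta=\pi mR/2$.
\item Your $\sin^2(\pi m/2)$ identity is equivalent to the paper's $\cot$ difference.
\item Your upper bound $\widetilde Q_\star\leq Q_\star\sqrt{1+R_\star^{-2}}$ is exactly the paper's bound, because $Q_\star\big(\sqrt{1+R_\star^{-2}}-1\big)=\alpha_{R_\star}[f_\star]/\big(\sqrt{1+R_\star^2}+R_\star\big)$.
\end{itemize}
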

The proof is given at the end of Section~\ref{sect:proof_col}. \\

An LCU implementation using the optimal kernel functions of Theorem~\ref{thm:optimal-kernel} requires approximating the associated integral accurately by a finite quadrature rule on the truncated interval $[-R_\star(\varepsilon),R_\star(\varepsilon)]$.
For the quadrature analysis, fix $\varepsilon\in(0,1)$ and set
\begin{equation}\label{eq:quadrature_parameters}
\begin{split}
m\coloneqq &\; m_\star(\varepsilon),\qquad
R\coloneqq R_\star(\varepsilon), \\
\Delta\coloneqq &\;  \frac{\pi m}{\sin(\pi m)},\qquad
\eta_R\coloneqq\arcsin(1-m).
\end{split}
\end{equation}
We write $f_{\star,0}$ for the boundary value of the optimal kernel function $f_{\star}$ \eqref{eq:f_optimal} on $\R$, as defined in \eqref{eqn:upper_boundary_value}. This limit exists at every $x\in \R\setminus \{-R,R\}$; it agrees with the analytic continuation of \eqref{eq:f_optimal} on $(-R,R)$. \\

We develop a spectrally convergent quadrature for the integral expression in \eqref{eq:Of}, based on the double-exponential change of variables
\begin{equation}\label{eq:phiR}
\phi_R(\zeta)\coloneqq R\tanh\bigg(\frac{\pi}{2}\sinh \zeta \bigg).
\end{equation}
Here $\zeta$ denotes the complex transformed variable, and
we write $\xi=\operatorname{Re}\zeta$.
On the real axis, the change of variables is $x=\phi_R(\xi)$.
We denote the uniform quadrature step in $\xi$ by $\delta_\xi$,
so that the corresponding nodes in the original variable
are $\phi_R(n\delta_\xi)$. \\

For each fixed $\varepsilon\in(0,1)$, define the upper bound
$\eta_0(\varepsilon)$ for the quadrature strip half-width by
\begin{equation}\label{eq:quadrature_strip_width}
\eta_{0}(\varepsilon)\coloneqq
\min\!\bigg\{
\frac{\pi}{6},\,
\arcsin\!\bigg(\frac{2}{\pi\sqrt{1+R^2}}\bigg)
\bigg\},
\end{equation}
with $m$, $R$, $\Delta$, and $\eta_R$ as in
\eqref{eq:quadrature_parameters}.
Choosing $0<\eta<\eta_0(\varepsilon)$ ensures that\footnote{Since
$R=\tan(\pi m/2)$ and $0<m<1$, we have
$\sin\eta_0\leq2/(\pi\sqrt{1+R^2})
=(2/\pi)\sin(\pi(1-m)/2)<1-m=\sin\eta_R$.
Moreover, $\Delta\sin\eta_0\leq m/\sin(\pi m/2)\leq1$, whereas
$\pi\cos\eta_0\geq\pi\sqrt{3}/2>1$ since $\eta_0\leq\pi/6$.
Thus $\eta_0<\eta_R$ and $\eta_0<\arctan(\pi/\Delta)$.}
$\eta<\eta_R$, $\eta<\pi/6$, and
\begin{equation}\label{eq:strip_bound_conditions}
\frac{\pi^2}{4}(1+R^2)\sin^2\eta<1,
\qquad
\Delta\sin\eta<\pi\cos\eta.
\end{equation}
Moreover, for $\varepsilon\leq 10^{-1}$, the second element of the set in \eqref{eq:quadrature_strip_width} is smaller than the first, so that
\begin{equation}
\eta_0(\varepsilon)=\arcsin\!\bigg(\frac{2}{\pi\sqrt{1+R^2}}\bigg).
\end{equation}

\begin{thm}[Spectrally convergent quadrature for the optimal kernel function]
\label{thm:mainquadrature}

Suppose that $\m{L}$ and $\m{H}$ satisfy the matrix hypotheses of Theorem~\ref{thm:LCHS}.
Fix $\eta\in(0,\eta_0(\varepsilon))$; such an $\eta$ exists for every $\varepsilon\in (0,1)$.
Define
\begin{equation}\label{eq:discretization_strip_constant}
\begin{aligned}
M_\eta\coloneqq &\;
\frac{4R\ee^{(1-m)\Delta+\Delta\sin\eta}}
{\big(1-\frac{\pi^2}{4}(1+R^2)\sin^2\eta\big)
(\pi\cos\eta-\Delta\sin\eta)}
\exp\!\bigg(
\frac{R\pi\sin\eta}{2-\frac{\pi^2}{2}\sin^2\eta}
\int_0^\tau\norm{\m{L}(s)}_2\,\mathrm{d}s
\bigg).
\end{aligned}
\end{equation}

Let $\delta_\xi>0$ and $N\in\N$.  Then the operator $\m{O}_{f_\star}(\tau;R)$ \eqref{eq:Of} 
is approximated by the finite quadrature
\begin{equation}
\mathcal I_N(\tau;R,\delta_\xi)
\coloneqq
\frac{\delta_\xi}{\sqrt{2\pi}}
\sum_{\substack{n\in \Z\\ |n|\le N}}
f_{\star,0}\big(\phi_R(n\delta_\xi)\big)
\mc{U}\big(\phi_R(n\delta_\xi);\tau\big)
\phi_R'(n\delta_\xi),
\end{equation}
with error
\begin{equation}
\big\lVert
\m{O}_{f_\star}(\tau;R)-\mathcal I_N(\tau;R,\delta_\xi)
\big\rVert_2
\leq
\frac{2M_\eta}{\exp(2\pi\eta/\delta_\xi)-1}
+
\frac{4R\ee^{(1-m)\Delta}}{\pi}\exp\big(-\pi\sinh(N\delta_\xi)\big).
\end{equation}
\end{thm}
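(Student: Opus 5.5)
We change variables $x=\phi_R(\xi)$ in \eqref{eq:Of}, so that
\[
\m{O}_{f_\star}(\tau;R)=\frac{1}{\sqrt{2\pi}}\int_{\R} G(\xi)\,\mathrm{d}\xi,
\qquad
G(\zeta)\coloneqq f_\star(\phi_R(\zeta))\,\mc{U}(\phi_R(\zeta);\tau)\,\phi_R'(\zeta).
\]
We then split the error as a standard trapezoidal discretization error plus a truncation error,
\[
\m{O}_{f_\star}-\mathcal I_N=\Big(\tfrac{1}{\sqrt{2\pi}}\textstyle\int_\R G-\tfrac{\delta_\xi}{\sqrt{2\pi}}\sum_{n\in\Z}G(n\delta_\xi)\Big)+\tfrac{\delta_\xi}{\sqrt{2\pi}}\textstyle\sum_{|n|>N}G(n\delta_\xi).
\]
The first term is handled by the classical strip theorem for the trapezoidal rule (Stenger/Trefethen–Weideman). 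If $G$ is holomorphic on $\{|\operatorname{Im}\zeta|<\eta\}$ with $\int_\R\|G(\xi\pm\ii\eta')\|_2\,\mathrm{d}\xi\le M$ uniformly for $\eta'<\eta$, then the discretization error is at most $2M/(\ee^{2\pi\eta/\delta_\xi}-1)$. This theorem is proved via the contour-integral representation with $\cot$ kernels and carries over verbatim to operator-valued integrands under the spectral norm. The task therefore reduces to showing $\frac{1}{\sqrt{2\pi}}\int\|G(\xi\pm \ii\eta')\|_2\,\mathrm{d}\xi\le M_\eta$.

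For holomorphy and the image of the strip, we write $w=\frac{\pi}{2}\sinh\zeta$ with $\zeta=\xi+\ii\eta'$. Then $\operatorname{Im}w=\frac{\pi}{2}\cosh\xi\sin\eta'$ and $\operatorname{Re} w=\frac{\pi}{2}\sinh\xi\cos\eta'$. The map $\tanh$ has poles at $\operatorname{Im} w=\pm\pi/2$; we avoid the problematic region using $\eta<\pi/6$, and since $\tanh$ maps horizontal lines into bounded arcs, $\phi_R$ sends the strip into a lens-shaped neighbourhood of $(-R,R)$ with endpoints $\pm R$. We then need three facts.

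First, $\phi_R(\zeta)$ avoids the cuts $(-\infty,-R]\cup[R,\infty)$ and the poles $\pm\ii$ of $f_\star$. We bound $|\operatorname{Im}\phi_R|$ by estimating $\tanh(a+\ii b)$, whose imaginary part is $\sin 2b/(\cosh 2a+\cos 2b)$. This is where the condition $\frac{\pi^2}{4}(1+R^2)\sin^2\eta<1$ enters: it keeps $\phi_R$ away from $\pm\ii$ and produces the factor $(1-\frac{\pi^2}{4}(1+R^2)\sin^2\eta)^{-1}$ in a lower bound for $|\phi_R^2+1|$.

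Second, the modulus of the exponential factor in \eqref{eq:f_optimal} equals $\exp\big(\frac{m}{\sin\pi m}\arg\frac{z+R}{z-R}\big)$. Near the real segment this argument is $\pi$ minus the angle subtended. Exploiting $\tanh$'s Möbius structure, $\frac{z+R}{z-R}=-\ee^{2w}$ exactly, so the argument is $\pi+\pi\cosh\xi\sin\eta'$ modulo branch. This gives the modulus $\exp(\Delta+\Delta\cosh\xi\sin\eta')$, which, combined with the prefactor $\varepsilon/(1-m)=\ee^{-\Delta m}$ by \eqref{eq:transcendental}, produces $\ee^{(1-m)\Delta}$ together with a $\xi$-dependent growth $\ee^{\Delta\sin\eta\cosh\xi}$.

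Third, $|\phi_R'|=\frac{\pi R}{2}|\cosh\zeta|\,|\operatorname{sech}^2 w|$ decays like $\ee^{-\pi\cos\eta\cosh\xi}$. The condition $\Delta\sin\eta<\pi\cos\eta$ ensures integrability, and integrating $\cosh\xi\,\ee^{-(\pi\cos\eta-\Delta\sin\eta)\cosh\xi}$ yields the factor $(\pi\cos\eta-\Delta\sin\eta)^{-1}$.

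Finally, for complex $z$ we use the Grönwall-type estimate $\|\mc{U}(z;\tau)\|_2\le\exp(\max(0,\operatorname{Im}z)\int_0^\tau\|\m{L}\|_2)$, which follows from $\m{L}\succeq0$ and the Peano–Baker series via a dissipativity argument. We bound $\operatorname{Im}\phi_R$ uniformly by $R\pi\sin\eta/(2-\frac{\pi^2}{2}\sin^2\eta)$ using the explicit formula for $\operatorname{Im}\tanh$ and $\cosh 2a+\cos2b\ge 2\cos^2 b\ge 2-\ldots$, which gives the last exponential in $M_\eta$.

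For the truncation term we work on the real axis. There $|f_{\star,0}(\phi_R(\xi))|\le\sqrt{2/\pi}\,\ee^{(1-m)\Delta}$, since $\arg$ is at most $\pi$ and $|\phi^2+1|\ge1$, and $\mc{U}$ is unitary. Also $\phi_R'(\xi)\le 2\pi R\cosh\xi\,\ee^{-\pi\sinh|\xi|}$. Comparing the tail sum $\delta_\xi\sum_{n>N}$ with the integral $\int_{N\delta_\xi}^\infty\pi\cosh\xi\,\ee^{-\pi\sinh\xi}\,\mathrm{d}\xi=\ee^{-\pi\sinh(N\delta_\xi)}$, which is valid by monotonicity of the integrand for large $\xi$ or by a direct comparison, and doubling for both tails, gives $\frac{4R\ee^{(1-m)\Delta}}{\pi}\ee^{-\pi\sinh(N\delta_\xi)}$ after collecting constants.

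The main obstacle is the strip estimate for the modulus of the exponential factor: we must track the branch of $\log\frac{z+R}{z-R}$ under the composition and control the sign of its imaginary part on both edges of the strip. We plan to use the identity $\frac{\phi_R+R}{\phi_R-R}=-\ee^{\pi\sinh\zeta}$, which reduces the question to an explicit computation.
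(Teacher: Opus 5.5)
\textbf{Gap: the holomorphy step rests on a false claim.} Your overall structure matches the paper's: change of variables, the strip trapezoidal theorem, a strip-integral bound, a growth bound for $\mc{U}$ via $\operatorname{Im}\phi_R$, and a tail comparison. The failure is in the claim that $\phi_R(\Sigma_\eta)$ is a lens around $(-R,R)$ avoiding the cuts.

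Write $w=\tfrac{\pi}{2}\sinh\zeta=a+\ii b$ with $\zeta=\xi+\ii\eta'$ and $\eta'\neq0$. Then $b=\tfrac{\pi}{2}\cosh\xi\sin\eta'$ is unbounded, since $\sinh$ maps horizontal lines to hyperbolas, not horizontal lines. So $\phi_R(\zeta)$ winds around $\pm R$ infinitely often as $\abs{\xi}\to\infty$. Whenever $\cosh\xi\sin\eta'$ is an odd integer, $b\in\tfrac{\pi}{2}+\pi\Z$ and $\phi_R(\zeta)=R\coth a$ is real with modulus larger than $R$, i.e.\ it lies on a cut. Consequently $f_\star\circ\phi_R$, with the branch of Theorem~\ref{thm:optimal-kernel}, is undefined on these curves, and its exponential factor jumps by $\ee^{\pm2\Delta}$ across them. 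Your $G$ is therefore not holomorphic on any strip, and the trapezoidal theorem cannot be applied to it. The paper says explicitly that it does not require $\phi_R(\Sigma_{\eta_0})$ to avoid the cuts.

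Your reason for the holomorphy of $\phi_R$ itself is also off. The bound $\eta<\pi/6$ does not keep $\abs{\operatorname{Im}w}<\pi/2$. The actual reason is that poles of $\tanh$ need $\operatorname{Re}w=0$, which forces $\xi=0$, and there $\abs{\operatorname{Im}w}<\pi/2$.

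\textbf{The missing step.} Do not compose with $f_\star$ off the real axis; define the extension directly. From $\frac{\phi_R+R}{\phi_R-R}=-\ee^{\pi\sinh\zeta}$ and a check on $\R$, replace the logarithm by the entire function $\pi\sinh\zeta+\ii\pi$. This gives $F_\star(\zeta)=\frac{R\ee^{(1-m)\Delta}}{2}\cosh\zeta\,\ee^{-\ii\Delta\sinh\zeta}/(\cosh^2w+R^2\sinh^2w)$. It agrees with $f_{\star,0}(\phi_R(\xi))\phi_R'(\xi)/\sqrt{2\pi}$ on $\R$. It is holomorphic on $\Sigma_{\eta_0}$, because in $\Sigma_{\pi/2}$ the denominator vanishes only at $\zeta=\pm\ii\eta_R$, and $\eta_0<\eta_R$.

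Your modulus $\ee^{\Delta+\Delta\cosh\xi\sin\eta'}$ is exactly $\abs{\ee^{\Delta}\ee^{-\ii\Delta\sinh\zeta}}$, the modulus of this continuation. With the principal branch the factor would instead be bounded by $\ee^{2\Delta}$, and the condition $\Delta\sin\eta<\pi\cos\eta$ would be unnecessary. So your estimates already concern the right object; you need to say so and drop the claim that $\phi_R$ avoids the cuts.

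\textbf{Secondary problems with the constants.} Two quantitative steps need repair before they yield exactly $M_\eta$.
\begin{enumerate}
\item The inequality $\cosh2a+\cos2b\geq2\cos^2b$ cannot bound $\abs{\operatorname{Im}\phi_R}=R\abs{\sin2b}/(\cosh2a+\cos2b)$ uniformly. The ratio $\abs{\sin2b}/(2\cos^2b)=\abs{\tan b}$ blows up near exactly the curves above. You must keep the growth of $\cosh2a$: with $u=2a$ and $v=2b$, use $\cosh u+\cos v\geq2+\tfrac{u^2-v^2}{2}\geq(2-\tfrac{\pi^2}{2}\sin^2\eta)\cosh\xi$, which then cancels $\abs{\sin v}\leq\pi\cosh\xi\sin\eta$.
\item The decay of $\phi_R'$ is $\ee^{-\pi\cos\eta'\abs{\sinh\xi}}$, not $\ee^{-\pi\cos\eta\cosh\xi}$. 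The paper bounds the combined denominator, $\abs{\cosh^2w+R^2\sinh^2w}\geq\tfrac14(1-\tfrac{\pi^2}{4}(1+R^2)\sin^2\eta)\ee^{\pi\cos\eta\abs{\sinh\xi}}$. It then uses $\cosh\xi\leq1+\abs{\sinh\xi}$ in the growth factor, which is where the $\ee^{\Delta\sin\eta}$ in $M_\eta$ comes from. The strip theorem also requires uniform decay as $\abs{\operatorname{Re}\zeta}\to\infty$, which follows from this bound.
\end{enumerate}

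\textbf{Truncation.} Your truncation estimate is correct and differs slightly from the paper's. You compare with the majorant $2R\ee^{(1-m)\Delta}\cosh\xi\,\ee^{-\pi\sinh\xi}$, which is decreasing on all of $[0,\infty)$, not just for large $\xi$. The paper instead uses monotonicity of the exact modulus and an intermediate $\arctan$ bound. Both routes give the same constant.
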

In particular, for prescribed $\varepsilon_{\rm disc},\varepsilon_{\rm trunc}\in(0,1)$, choosing $\delta_\xi$ and $N$ under the hypotheses of the theorem so that
\begin{equation}
\label{eq:quadrature_parameters_choice}
\delta_\xi\leq
\frac{2\pi\eta}{\log(1+2M_\eta/\varepsilon_{\rm disc})},\qquad 
N\delta_\xi\geq
\operatorname{arcsinh}\!\bigg(
\frac1\pi\log\frac{4R\ee^{(1-m)\Delta}}{\pi\varepsilon_{\rm trunc}}
\bigg)
\end{equation}
gives an error of at most $\varepsilon_{\rm disc}+\varepsilon_{\rm trunc}$.
The proof is given in Section~\ref{subsection:quadproof}.
Together with Theorem~\ref{thm:LCHS}, this gives an approximation to $\m{U}(\tau)$ \eqref{eq:U} with error at most $\mc{E}_{R,\infty}[f_\star]+\varepsilon_{\rm disc}+\varepsilon_{\rm trunc}$. \\

The quadrature constructed in Theorem~\ref{thm:mainquadrature} preserves the continuous subnormalization in the following sense. For the optimal kernel function~\eqref{eq:f_optimal}, the discrete subnormalization is
\begin{equation}\label{eq:discrete_alpha_numerics}
\alpha_{N,\delta_\xi}
\coloneqq\frac{\delta_\xi}{\sqrt{2\pi}}
\sum_{\substack{n\in\Z \\|n|\leq N}}\big\lvert f_{\star,0}\big(\phi_R(n\delta_\xi)\big)\big\rvert\phi_R'(n\delta_\xi)
=\frac{\delta_\xi\ee^{(1-m)\Delta}}{\pi}
\sum_{\substack{n\in \Z \\|n|\leq N}}\frac{\phi_R'(n\delta_\xi)}{1+\phi_R(n\delta_\xi)^2}.
\end{equation}
For each summand, $x=\phi_R(n\delta_\xi)\in(-R,R)$.
By the branch choice in Theorem~\ref{thm:optimal-kernel},
the logarithm in \eqref{eq:f_optimal} has imaginary part $\pi$
on this interval, so the exponential factor has modulus
$\ee^\Delta$.
Together with $\varepsilon/(1-m)=\ee^{-m\Delta}$ from \eqref{eq:transcendental} and \eqref{eq:quadrature_parameters}, this yields the second expression.
This finite quadrature preserves the continuous LCU cost in the limit, with an explicit error bound given in the following result.

\begin{cor}[LCU subnormalization convergence]
\label{cor:quadrature_subnormalization}
Let $R=R_\star(\varepsilon)$ and let $f_\star$ be the optimal kernel function from
Theorem~\ref{thm:optimal-kernel}. With the notation and quadrature conditions of
Theorem~\ref{thm:mainquadrature}, the discrete subnormalization
$\alpha_{N,\delta_\xi}$ \eqref{eq:discrete_alpha_numerics} satisfies
\begin{equation}\label{eq:quantitative_subnormalization}
\big\lvert\alpha_{N,\delta_\xi}-\alpha_R[f_\star]\big\rvert
\leq\frac{2C_\alpha}{\exp(2\pi\eta/\delta_\xi)-1}
+\frac{4R\ee^{(1-m)\Delta}}{\pi}\exp\big(-\pi\sinh(N\delta_\xi)\big),
\end{equation}
where
\begin{equation}\label{eq:normalization_constant}
C_\alpha\coloneqq
\frac{4R\ee^{(1-m)\Delta}}
{\pi\cos\eta\,\big(1-\frac{\pi^2}{4}(1+R^2)\sin^2\eta\big)}.
\end{equation}
In particular, the LCU subnormalization of
$\mathcal I_N(\tau;R,\delta_\xi)$ converges to $\alpha_R[f_\star]$:
\begin{equation}
\label{eq:quadrature_subnormalization}
\lim_{\substack{\delta_\xi\downarrow0\\N\delta_\xi\to\infty}}
\frac{\delta_\xi}{\sqrt{2\pi}}
\sum_{\substack{n\in \Z \\|n|\leq N}}
\big\lvert f_{\star,0}\!\left(\phi_R(n\delta_\xi)\right)\big\rvert
\phi_R'(n\delta_\xi)
=
\alpha_R[f_\star].
\end{equation}
Consequently, multiplying the discrete subnormalizations \eqref{eq:discrete_alpha_numerics} by $R$ gives costs
converging to $Q_R[f_\star]$.
\end{cor}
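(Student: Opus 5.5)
The plan is to rerun the argument of Theorem~\ref{thm:mainquadrature} with the operator-valued integrand replaced by the scalar, nonnegative integrand $g(x)\coloneqq \frac{\ee^{(1-m)\Delta}}{\pi}\frac{1}{1+x^2}$ on $(-R,R)$. By the computation preceding \eqref{eq:discrete_alpha_numerics}, $\lvert f_{\star,0}(x)\rvert/\sqrt{2\pi}=g(x)$ for $x\in(-R,R)$. Hence $\alpha_R[f_\star]=\int_{-R}^R g(x)\,\mathrm{d}x=\int_\R G(\xi)\,\mathrm{d}\xi$ with $G\coloneqq (g\circ\phi_R)\,\phi_R'$, and $\alpha_{N,\delta_\xi}=\delta_\xi\sum_{\lvert n\rvert\le N}G(n\delta_\xi)$. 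I would split the error into the infinite trapezoidal (discretization) error $\lvert\delta_\xi\sum_{n\in\Z}G(n\delta_\xi)-\int_\R G\rvert$ and the tail $\delta_\xi\sum_{\lvert n\rvert>N}G(n\delta_\xi)$.

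For the discretization term I would use the standard strip estimate for the trapezoidal rule. If $G$ is holomorphic on $D_\eta=\{\lvert\operatorname{Im}\zeta\rvert<\eta\}$ with $\sup_{\lvert y\rvert<\eta}\int_\R\lvert G(\xi+\ii y)\rvert\,\mathrm{d}\xi\le C$, then the error is at most $2C/(\ee^{2\pi\eta/\delta_\xi}-1)$. This is the same bound used in Theorem~\ref{thm:mainquadrature}, with $C_\alpha$ in place of $M_\eta$.

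The analytic continuation is $G(\zeta)=\frac{\ee^{(1-m)\Delta}}{\pi}\frac{\phi_R'(\zeta)}{1+\phi_R(\zeta)^2}$. It is much simpler than the kernel itself, since it has no branch factor and no $\mc U$ factor. This explains why $C_\alpha$ lacks both the $\Delta\sin\eta$ terms and the $\exp(\int\norm{\m L})$ factor present in $M_\eta$.

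The main obstacle is bounding $\int\lvert G(\xi+\ii y)\rvert\,\mathrm{d}\xi$ uniformly over $\lvert y\rvert<\eta$. I would reuse the lemmas from the proof of Theorem~\ref{thm:mainquadrature} that control $\phi_R$ on $D_\eta$. Writing $w=\frac{\pi}{2}\sinh\zeta$, one has $\operatorname{Im}w=\frac{\pi}{2}\cosh\xi\sin y$ and $\operatorname{Re}(\cosh\zeta)\geq\cosh\xi\cos\eta$. From these I expect two bounds:
\begin{enumerate}[label=(\roman*)]
\item a lower bound $\lvert 1+\phi_R(\zeta)^2\rvert\ge 1-\frac{\pi^2}{4}(1+R^2)\sin^2\eta$, which is positive by \eqref{eq:strip_bound_conditions};
\item $\int_\R\lvert\phi_R'(\xi+\ii y)\rvert\,\mathrm{d}\xi\le 4R/\cos\eta$, using $\phi_R'=R\frac{\pi}{2}\cosh\zeta\,\mathrm{sech}^2 w$ and the substitution $u=\frac{\pi}{2}\sinh\xi$.
\end{enumerate}
Bound (ii) should follow from $\lvert\mathrm{sech}^2 w\rvert\lesssim \mathrm{sech}^2(\operatorname{Re}w)/\cos^2(\operatorname{Im}w)$, together with the $\eta<\pi/6$ control from \eqref{eq:quadrature_strip_width}. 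Multiplying (i) and (ii) gives exactly $C_\alpha$. If these estimates are not literally available from the quadrature proof, I would derive them directly along these lines.

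For the tail, on the real axis $1+\phi_R^2\ge1$, so $G(\xi)\le\frac{\ee^{(1-m)\Delta}}{\pi}\phi_R'(\xi)$. Using $\phi_R'(\xi)\le 2\pi R\cosh\xi\,\ee^{-\pi\sinh\xi}$ and comparing the sum with an integral (monotonicity for large $\xi$, as in the operator case), the tail is at most $\frac{4R\ee^{(1-m)\Delta}}{\pi}\exp(-\pi\sinh(N\delta_\xi))$. This is the same tail bound as in Theorem~\ref{thm:mainquadrature}, since $\lVert\mc U(x;\tau)\rVert_2=1$ there.

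Adding the two terms gives \eqref{eq:quantitative_subnormalization}. Both terms vanish when $\delta_\xi\downarrow0$ and $N\delta_\xi\to\infty$, which gives \eqref{eq:quadrature_subnormalization}. Multiplying by the fixed $R$ gives convergence to $Q_R[f_\star]$.
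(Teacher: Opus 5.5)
Your architecture matches the paper's. You use the scalar density $g=\frac{\ee^{(1-m)\Delta}}{\pi}\phi_R'/(1+\phi_R^2)$, the same strip trapezoidal theorem with $C_\alpha$ in place of $M_\eta$, and the same tail bound via monotonicity of $\abs{F_\star}$ on $(0,\infty)$. Bound (i) is also fine: with $w=\frac{\pi}{2}\sinh\zeta=u+\ii v$, it follows from \eqref{eq:transformed_kernel_denominator_bound} after dividing by $\abs{\cosh w}^2\le\cosh^2u$.

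The gap is in (ii), which you correctly name as the crux; the mechanism you propose for it fails. On a line $\operatorname{Im}\zeta=y\neq0$, $\operatorname{Im}w=\frac{\pi}{2}\cosh\xi\sin y$ is unbounded. So $\cos(\operatorname{Im}w)$ vanishes wherever $\cosh\xi\,\abs{\sin y}$ is an odd integer, and at each such point your majorant $\mathrm{sech}^2(\operatorname{Re}w)/\cos^2(\operatorname{Im}w)$ has a non-integrable $(\xi-\xi_k)^{-2}$ singularity. The condition $\eta<\pi/6$ only controls $\operatorname{Im}w$ near $\xi=0$.

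What you actually need is $\abs{\cosh w}^2=\cosh^2u-\sin^2v$ together with a bound $\sin^2v\le\kappa\cosh^2u$. The natural estimate $\sin^2v\le\frac{\pi^2}{4}\sin^2\eta\cosh^2u$, the one the paper uses, gives only $\int_\R\abs{\phi_R'(\xi+\ii y)}\,\mathrm{d}\xi\le 2R/\big(\cos\eta\,(1-\tfrac{\pi^2}{4}\sin^2\eta)\big)$. This exceeds $4R/\cos\eta$ whenever $\sin^2\eta>2/\pi^2$, and that range of admissible $\eta$ is nonempty exactly when $R<1$. To land on exactly $C_\alpha$ by your product route, you would have to prove the sharper $\sin^2v\le\frac12\cosh^2u$ for $\abs{\sin y}\le\frac12$. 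This is true: use $\sin^2\theta-\sin^2\theta_0\le\theta-\theta_0$ with $\theta_0=\frac{\pi}{2}\abs{\sin y}\le\frac{\pi}{4}$, $\cosh x-1\le\frac12\sinh^2 x$ and $\cosh^2 u\ge 1+u^2$. But it is an additional argument you have not supplied. Also, (i) and (ii) together give only an $L^1$ bound, not the uniform decay as $\abs{\operatorname{Re}\zeta}\to\infty$ that the trapezoidal theorem requires.

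The paper avoids all of this by not separating the two factors. Since $\frac{\phi_R'}{1+\phi_R^2}=\frac{R\pi}{2}\frac{\cosh\zeta}{\cosh^2w+R^2\sinh^2w}$, the $\cosh^2w$ cancels. The already-proved estimate \eqref{eq:transformed_kernel_denominator_bound} then gives the pointwise bound $\abs{g(x+\ii y)}\le\frac{2R\ee^{(1-m)\Delta}}{1-\frac{\pi^2}{4}(1+R^2)\sin^2\eta}\cosh x\,\ee^{-\pi\cos\eta\abs{\sinh x}}$. This supplies the uniform decay and, after the substitution $t=\sinh x$, integrates to exactly $C_\alpha$.
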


The proof is given at the end of Section~\ref{subsection:quadproof}.
Corollary~\ref{cor:quadrature_subnormalization} gives the quantitative estimate
\begin{equation}\label{eq:discrete_cost_numerics}
\big\lvert R\alpha_{N,\delta_\xi}-Q_\star(\varepsilon)\big\rvert
\leq R\bigg(
\frac{2C_\alpha}{\exp(2\pi\eta/\delta_\xi)-1}
+\frac{4R\ee^{(1-m)\Delta}}{\pi}\exp\big(-\pi\sinh(N\delta_\xi)\big)
\bigg),
\end{equation}
where $C_\alpha$ is independent of $\m L$ and $\m H$ and satisfies $C_\alpha\leq M_\eta$.
Thus the quadrature parameters in Theorem~\ref{thm:mainquadrature} also control the change in the cost.
In particular, choosing $\varepsilon_{\rm disc}+\varepsilon_{\rm trunc}\leq\varepsilon$ gives
$R\alpha_{N,\delta_\xi}=Q_\star(\varepsilon)+O(\varepsilon\log(1/\varepsilon))$ as $\varepsilon\downarrow0$.
The finite quadrature therefore preserves the asymptotic expansion \eqref{eq:optscaling}, with $\varepsilon$ denoting the continuous error budget and total operator error at most $2\varepsilon$. \\

With quadrature parameters satisfying 
\eqref{eq:quadrature_parameters_choice} and
$\varepsilon_{\rm disc}+\varepsilon_{\rm trunc}\leq\varepsilon$,
the quadrature in Theorem~\ref{thm:mainquadrature} also attains the modified continuous
minimum in Corollary~\ref{cor:full_normalization_cost}
up to $o(1)$: \\
\begin{equation}
\sqrt{1+R_\star(\varepsilon)^2}\,\alpha_{N,\delta_\xi}
=\widetilde Q_\star(\varepsilon)+o(1)
\qquad (\varepsilon\downarrow0).
\end{equation}
Indeed, \eqref{eq:Rstar_from_mstar}--\eqref{eq:optscaling}
give $R_\star(\varepsilon)\sim(2/\pi)\log(1/\varepsilon)$
and $\alpha_{R_\star}[f_\star]\to\ee$, while
\eqref{eq:quantitative_subnormalization} and
$C_\alpha\leq M_\eta$ give
$\alpha_{N,\delta_\xi}
=\alpha_{R_\star(\varepsilon)}[f_\star]+O(\varepsilon)=O(1)$.
Since
$\sqrt{1+R_\star^2}-R_\star
=1/(\sqrt{1+R_\star(\varepsilon)^2}+R_\star(\varepsilon))=o(1)$,
the claim follows from \eqref{eq:discrete_cost_numerics}
and \eqref{eq:full_normalization_asymptotics}. \\

Previous works were able to employ uniform quadrature \cite[Thm.~3]{low2025}, relying on the standard result that the trapezoidal rule converges exponentially for functions that are holomorphic in a strip about the real axis with suitable decay and integrability properties \cite{trapezoidal_rule}. 
Nonuniform quadrature rules and their implementation costs have also been studied in \cite{novikau2026,aftab2026}, including sinh--sinh quadrature for Gaussian-damped kernel functions in the latter.
By expanding the admissible class of functions, we admit a wider range of analytic behaviour; in particular notice that the optimal kernel function of Theorem~\ref{thm:optimal-kernel} contains a logarithm that introduces branch points at $z=\pm R_\star(\varepsilon)$.
This necessitates a refined quadrature analysis.
Figure~\ref{fig:kernel} illustrates the modulus of the optimal kernel function for two error tolerances, with close-ups of its jumps. Continuous costs and sufficient quadrature node counts are compared in Section~\ref{subsec:numerics}.
The additional analytic work provides a rigorous certification of the exponential discretization error and double-exponential truncation error in Theorem~\ref{thm:mainquadrature}.

\begin{figure}[H]
  \centering
  \includegraphics[width=\textwidth]{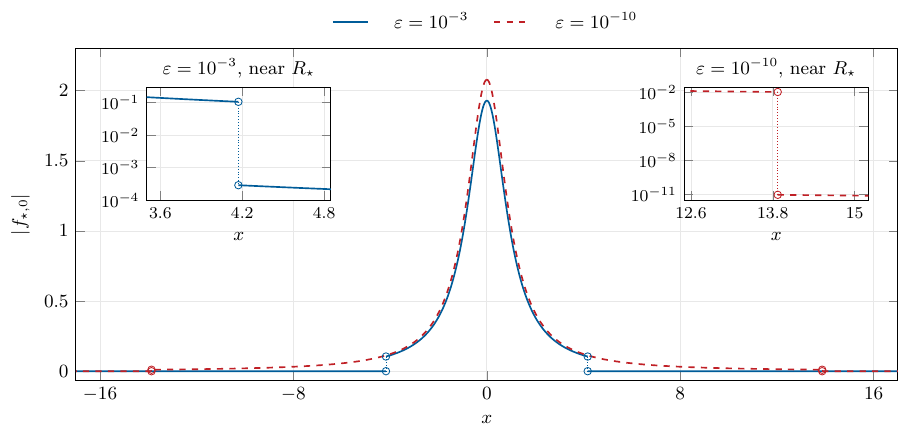}
  \caption{Moduli of the optimal kernel function boundary values \eqref{eq:f_optimal} with error parameters $\varepsilon=10^{-3}$ and $10^{-10}$. The main plot has a linear vertical scale; the insets use logarithmic vertical scales to resolve the jumps at $R_\star(\varepsilon)\approx4.172$ and $13.872$, respectively. The negative endpoints follow from the fact that $\abs{f_{\star,0}(x)}$ is even. Open circles denote one-sided modulus limits, and dotted segments are guides to the eye. The insets display the nonzero exterior tails.}\label{fig:kernel}
\end{figure}

\subsection{Continuous costs and quadrature node counts}\label{subsec:numerics}

We compare our optimal kernel functions \eqref{eq:f_optimal} with the numerically optimized kernel functions of Low and Somma \cite{low2025}.
First, we compare the continuous costs $Q_R[f]=R\alpha_R[f]$ for the two classes of kernel functions. Second, we compare sufficient quadrature node counts under the error bounds used in our analysis. In the tabulated cases, our counts are at most a factor of $2.6$ larger than those for the Low--Somma kernel functions with uniform quadrature~\cite{low2025}. \\

\paragraph{Cost comparison.}
In our notation, the four-parameter family of kernel functions considered in \cite{low2025} is
\begin{equation}\label{eq:generalized_kernel_intro}
f_{j,y}(z;\gamma,c)
=\frac{(y+1)^{j-1}}{\sqrt{2\pi}}
\frac{\ee^{c(1-\ii z)}\ee^{-(z^2+1)/(4\gamma^2)}}
{(1-\ii z)(y+\ii z)^{j-1}}.
\end{equation}
We use the two sets of numerically optimized parameters in \cite[Tab.~3]{low2025}: the choices with $j=2$, $y=1$, and the choices allowing $j$ and $y$ to vary.
We evaluate the continuous costs for $\varepsilon=10^{-1},\ldots,10^{-10}$. \\

\begin{table}[H]
\centering
\small
\setlength{\tabcolsep}{5pt}
\renewcommand{\arraystretch}{1.12}
\input{cost-table}
\caption{Comparison of numerical continuous query-cost $Q_R[f]=R\alpha_R[f]$ values.
The Low--Somma parameters are taken unchanged from
\cite[Tab.~3]{low2025}.
For these columns, $\varepsilon$ denotes the published target tolerance, the continuous error bound \eqref{eq:E} evaluated with the rounded parameters can be slightly larger than $\varepsilon$.
The corresponding discrete costs can be made arbitrarily close to these values
by refining the quadrature; see the node-count discussion accompanying
Table~\ref{tab:quadrature_nodes}. The values for $f_{\star}$ agree, to the two decimal places reported, with the numerically optimized values of Ni and Ying \cite[Tab.~1]{ni2026quantumeigenvaluetransformationlinear}.}
\label{tab:main}
\end{table}

Table~\ref{tab:main} illustrates the reduction in continuous cost
achieved by our kernel functions.
At $\varepsilon=10^{-3}$, our continuous cost is approximately
$8.572$, compared with $16.815$ and $12.755$ for the two
Low--Somma choices, giving improvements by factors of
approximately $1.96$ and $1.49$.
At $\varepsilon=10^{-10}$, the respective costs are $34.483$,
$85.047$, and $68.227$, giving improvements by factors of
approximately $2.47$ and $1.98$. \\

The factor of approximately $2.1$ quoted in the abstract
compares leading coefficients.
For their numerically optimized family with variable $j$ and $y$,
Low and Somma report the empirical cost fit
$2.57\log_2(1/(114.8\varepsilon))$
in \cite[Fig.~1]{low2025};
the coefficient of $\log(1/\varepsilon)$ is $2.57/\log 2$.
Comparing this fitted coefficient with our proved coefficient
$2\ee/\pi$ in \eqref{eq:optscaling} gives
\begin{equation}
\frac{2.57/\log 2}{2\ee/\pi}
=\frac{2.57\pi}{2\ee\log 2}
\approx 2.14.
\end{equation}
Additionally, \cite[Fig.~1]{low2025} gives an empirically fitted, numerically computed lower bound, where the leading coefficient of $\log(1/\varepsilon)$ is approximately $1.22$. By Theorem~\ref{thm:optimal-kernel}, the gap between that coefficient and $2\ee/\pi$ cannot be reduced by any kernel function in $\mc{A}(S_{y_0})$ subject to the error bound \eqref{eq:E}.

\paragraph{Sufficient quadrature node counts.}
We compare node counts at total operator-error tolerance $2\varepsilon$,
for $\varepsilon=10^{-1},\ldots,10^{-10}$, setting
$\int_0^\tau\|\m L(s)\|_2\,\mathrm ds=1$ when selecting
the quadrature nodes.
We follow the strip-width choice of \cite[Lem.~12]{low2025}, using half of
the respective admissible strip half-widths. Correspondingly, we take a strip
of half-width $1/2$ for the Low--Somma kernel functions and $\eta=\eta_0(\varepsilon)/2$
for our transformed optimal kernel functions, with $\eta_0$ defined in
\eqref{eq:quadrature_strip_width}. Neither choice is optimized
to minimize node counts; the comparison uses a common convention.
For our optimal kernel functions, we allocate error budget
$\varepsilon$ to the continuous approximation and set
$\varepsilon_{\rm disc}=\varepsilon_{\rm trunc}=\varepsilon/2$.
We then choose the largest step $\delta_\xi$ allowed by
\eqref{eq:quadrature_parameters_choice} and, with this step
fixed, the smallest $N$ satisfying the truncation condition in \eqref{eq:quadrature_parameters_choice}.
These choices also keep the discrete cost close to the continuous cost:
\eqref{eq:discrete_cost_numerics} gives
$\lvert R\alpha_{N,\delta_\xi}-Q_\star(\varepsilon)\rvert\leq R\varepsilon$.\\

For the Low--Somma kernels, we use the strip-contour estimate of \cite[Lem.~10]{low2025}, taking the strip half-width to be $1/2$,
as in the proof of \cite[Lem.~12]{low2025}.
All the tabulated choices satisfy $y\geq1$, so the kernel functions
are holomorphic on $\lvert\operatorname{Im}z\rvert<1$.
We have
\begin{equation}
\sup_{\abs{v}<\tfrac12}\frac{1}{\sqrt{2\pi}}
\int_{\R}|f_{j,y}(x+\ii v;\gamma,c)|\,\mathrm dx
\leq M,
\end{equation}
where, for the tabulated parameter choices,\footnote{To obtain \eqref{eq:ls_boundary_numerics}, write
$z=x+\ii v$ with $|v|\leq1/2$ in
\eqref{eq:generalized_kernel_intro}.
When $j=2$ and $y=1$, the denominator factors $1-\ii z$
and $1+\ii z$ each have modulus at least $\sqrt{x^2+1/4}$.
When $j>2$ and $y\geq1$, we instead use
$|1-\ii z|\geq1/2$ and
$|y+\ii z|^{j-1}\geq
(x^2+(y-1/2)^2)^{(j-1)/2}$.
The numerator factors satisfy
$\lvert\ee^{c(1-\ii z)}\rvert\leq\ee^{c+\abs{c}/2}$ and
$\lvert\ee^{-(z^2+1)/(4\gamma^2)}\rvert\leq1$. Integration then gives the stated bounds using
$\int_{\R}(x^2+a^2)^{-p}\,\mathrm dx
=\sqrt{\pi}\,a^{1-2p}\Gamma(p-1/2)/\Gamma(p)$
for $a>0$, $p>1/2$.}
\begin{equation}\label{eq:ls_boundary_numerics}
M\coloneqq
\begin{dcases}
2\ee^{c+\abs{c}/2}
& j=2,\quad y=1\\
\displaystyle
\frac{(y+1)^{j-1}\ee^{c+\abs{c}/2}}
{\sqrt{\pi}(y-\tfrac12)^{j-2}}
\frac{\Gamma((j-2)/2)}{\Gamma((j-1)/2)}
& j>2,\quad y\geq1,
\end{dcases}
\end{equation}
where $\Gamma$ is the Gamma function. Together with $\int_0^\tau\lVert\m L(s)\rVert_2\,\mathrm ds=1$,
this gives an infinite-sum discretization error bounded by $2M\ee^{1/2}/(\ee^{\pi/\delta_x}-1)$ \cite[Lem.~10]{low2025}, where $\delta_x$ is the uniform grid step size. We use the published parameter values in \cite[Tab.~3]{low2025} and choose the largest step $\delta_x=R/N$, with $N\in\N$, for which the sum of the continuous error bound \eqref{eq:E}, evaluated numerically at the optimized $y_0$ values given in \cite[Tab.~3]{low2025}, and this discretization bound is at most $2\varepsilon$. Since $\lvert f_{j,y}(x;\gamma,c)\rvert $ decreases with $\abs{x}$, the discrete tail is bounded by the integral tail already included in the continuous error bound. \\

\begin{table}[H]
\centering
\small
\setlength{\tabcolsep}{5pt}
\renewcommand{\arraystretch}{1.12}
\begin{tabular*}{\linewidth}{@{\extracolsep{\fill}}
  c S[table-format=4.0] *{2}{S[table-format=3.0]} @{}}
\toprule
& \multicolumn{1}{c}{This work}
& \multicolumn{2}{c}{Low--Somma \cite{low2025}} \\
\cmidrule(lr){2-2}\cmidrule(l){3-4}
$\varepsilon$
& \multicolumn{1}{c}{$f_\star$ \eqref{eq:f_optimal}}
& \multicolumn{1}{c}{\eqref{eq:generalized_kernel_intro} with $j=2$, $y=1$}
& \multicolumn{1}{c}{\eqref{eq:generalized_kernel_intro} with variable $j,y$} \\
\midrule
$10^{-1}$  &   15 &  11 &   9 \\
$10^{-2}$  &   47 &  29 &  25 \\
$10^{-3}$  &  101 &  59 &  49 \\
$10^{-4}$  &  177 &  99 &  81 \\
$10^{-5}$  &  279 & 149 & 123 \\
$10^{-6}$  &  405 & 213 & 173 \\
$10^{-7}$  &  559 & 287 & 233 \\
$10^{-8}$  &  739 & 375 & 301 \\
$10^{-9}$  &  947 & 473 & 379 \\
$10^{-10}$ & 1185 & 585 & 465 \\
\bottomrule
\end{tabular*}
\caption{Sufficient quadrature node counts $2N+1$ at total
operator-error tolerance $2\varepsilon$, using the parameter
choices and error bounds described in the text.}
\label{tab:quadrature_nodes}
\end{table}

Table~\ref{tab:quadrature_nodes} lists the sufficient numbers
of nodes for the grids specified above.
Our quadrature uses more nodes in every tabulated case, but fewer than $2.6$ times those of either Low--Somma choice. The constant $\pi/2$ in \eqref{eq:phiR} is conventional and is not optimized here; adapting it to the singularity structure of $f_\star$ using the methods of \cite{slevinsky2015on} can reduce the node counts significantly. We emphasize that the number of quadrature nodes does not enter the query cost \eqref{eqn:query}, since the constituent Hamiltonian propagators are selected coherently \cite[Sec.~5.1]{low2025}. However, our non-uniform quadrature additionally requires loading the node values $\phi_R(n\delta_{\xi})$, which affects the non-query gate count.

\section{Optimal kernel functions}\label{sect:proof_thmmain}

We prove Theorem~\ref{thm:optimal-kernel} in three steps.
First, in Section~\ref{subsec:Hardy_reformulation}, we reformulate the optimization problem in Hardy space and reduce it to outer functions.
Next, in Section~\ref{subsec:fixed_optimization}, we use Jensen's inequality to identify and construct the unique kernel function minimizing the continuous subnormalization \eqref{eq:alpha} for fixed truncation radius and domain, subject to the error constraint in Problem~\ref{prob:main}.
Finally, in Section~\ref{sect:proof_col}, we optimize over these
remaining parameters and derive the asymptotic expansion of the minimum cost.

\subsection{Hardy space reformulation and outer-factor reduction}
\label{subsec:Hardy_reformulation}

For the following proofs, recall the parameterization of admissible kernel
functions in terms of the holomorphic function $h$ in
\eqref{eqn:kernel_f_h}. In this representation, the reciprocal of $B(\ccdot;y_0)$ encodes the prescribed pole structure of $f$, while the
remaining freedom is encoded in $h\in H^1(S_{y_0})$. For fixed $y_0$,
\eqref{eqn:kernel_f_h} therefore identifies the optimization over admissible
kernel functions with the corresponding optimization over
functions $h\in H^1(S_{y_0})$ satisfying the normalization \eqref{eq:norm_h}. Once the optimal $h$ is determined, the
optimal kernel function $f$ is recovered from
\eqref{eqn:kernel_f_h}. \\

We emphasize that no holomorphic extension of $h$ through
$\partial S_{y_0}$ is assumed. The condition is instead that
$h\in H^1(S_{y_0})$. By Lemma~\ref{lem:Hardyboundary}, the upper boundary
value $h_0$ exists almost everywhere and belongs to $L^1(\R)$; when
$y_0\in(1,\infty)$, the same holds for the lower boundary value
$h_{y_0}$. Since $B(\ccdot;y_0)$ has unimodular boundary values, the
corresponding boundary values of $f$ also belong to $L^1(\R)$. \\

\begin{prop}[Subnormalization-optimal kernel functions for LCHS]
\label{prop:subnormalization-optimal-kernel}
Given $\varepsilon>0$, $R>0$, and $y_0\in (1,\infty]$,  the unique minimizer of $\alpha_R[f]$ over all $f\in\mc{A}(S_{y_0})$ subject to the constraint $\mc{E}_{R,y_0}[f]\leq \varepsilon$ is given by
\begin{equation}
\label{eq:fopt}
f_\star(z)\coloneqq
\begin{dcases}
\frac{1}{\sqrt{2\pi}}\frac{2\pi \varepsilon}{1- m_{R,y_0}}\frac{\sin\big(\tfrac{\pi}{y_0}\big)}{4y_0}\frac{
\exp\Bigg(\dfrac{\Delta_{R,y_0,\varepsilon}}{\pi \mathrm{i}}\log\dfrac{\sinh\big(\frac{\pi}{2y_0}(z+R)\big)}{\sinh\big(\frac{\pi}{2y_0}(z-R)\big)}\Bigg)}{\sinh\big(\frac{\pi}{2y_0}(z-\mathrm{i}) \big)\sinh\big(\frac{\pi}{2y_0}(z+\mathrm{i})\big)}
& y_0\in (1,\infty), \\
\frac{1}{\sqrt{2\pi}}\frac{2\varepsilon}{1-m_{R,\infty}}
\frac{\exp\bigg( \dfrac{\Delta_{R,\infty,\varepsilon}}{\pi\mathrm{i}}
\log\dfrac{z+R}{z-R}\bigg)}
{z^2+1}
& y_0=\infty
\end{dcases}
\end{equation}
where the logarithms are chosen so that the compositions are
holomorphic on 
\begin{equation}
\label{eq:Omega_cut}
\Omega_{R,y_0}\coloneqq \{z\in\C:\lvert\operatorname{Im}z\rvert<y_0\}
\setminus\big((-\infty,-R]\cup[R,\infty)\big)
\end{equation}
and agree with the principal branch of the logarithm on $S_{y_0}$.
Here,
\begin{equation}\label{eq:mRy0}
m_{R,y_0}\coloneqq \begin{dcases}
\frac{2}{\pi}\mathrm{arctan}\bigg(\frac{\tanh(\frac{\pi R}{2y_0})}{\tan(\frac{\pi}{2y_0})} \bigg) & y_0\in (1,\infty),\\
\frac{2}{\pi}\mathrm{arctan}(R) & y_0=\infty
\end{dcases}
\qquad
\Delta_{R,y_0,\varepsilon}\coloneqq \frac{1}{m_{R,y_0}}\log \frac{1-m_{R,y_0}}{\varepsilon}.
\end{equation}
Moreover,
\begin{equation}\label{eq:alphaoptimal}
\alpha_R[f_\star]= m_{R,y_0}\bigg( \frac{\varepsilon}{1-m_{R,y_0}} \bigg)^{-(1-m_{R,y_0})/m_{R,y_0}}.
\end{equation}
\end{prop}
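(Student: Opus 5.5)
Work in the half-plane case $y_0=\infty$ first, then transfer to the strip by a conformal map. In terms of $h$, with $f=\tfrac{\ii}{\sqrt{2\pi}}h/B$ and $\lvert B\rvert=1$ on the boundary, we have $\alpha_R[f]=\tfrac{1}{2\pi}\int_{-R}^R\lvert h_0\rvert$ and $\mc E_{R,\infty}[f]=\tfrac{1}{2\pi}\int_{\R\setminus[-R,R]}\lvert h_0\rvert$. The constraint on $h$ is $h(-\ii)=\ii/2$, together with $h\in H^1(\H_-)$.

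\textbf{Reduction to outer functions.} Factor $h=IF$ with $I$ inner and $F$ outer, so that $\lvert h_0\rvert=\lvert F_0\rvert$ almost everywhere. Since $\lvert I(-\ii)\rvert\le 1$, the outer factor satisfies $\lvert F(-\ii)\rvert\ge 1/2$. Rescaling $F$ by a constant of modulus at most one then decreases both functionals. It therefore suffices to minimize $A=\tfrac{1}{2\pi}\int_{-R}^R w$ subject to $E=\tfrac{1}{2\pi}\int_{\lvert x\rvert>R}w\le\varepsilon$ over nonnegative $w$ with $\log w\in L^1(P\,dx)$, where $P(x)=\tfrac{1}{\pi(1+x^2)}$ is the Poisson kernel at $-\ii$. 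The constraint is $\int\log w\,P\,dx\ge\log(1/2)$, using $\log\lvert F(-\ii)\rvert=\int \log\lvert F_0\rvert P$.

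\textbf{Jensen step.} Harmonic measure of $[-R,R]$ from $-\ii$ is $m=\int_{-R}^R P=\tfrac{2}{\pi}\arctan R=m_{R,\infty}$. Apply Jensen's inequality separately on $[-R,R]$ with probability measure $P/m$ and on its complement with $P/(1-m)$:
\[
\int_{-R}^R\log w\,P\le m\log\Big(\tfrac1m\int_{-R}^R wP\Big),
\]
and similarly on the complement. The catch is that $A$ and $E$ are unweighted $L^1$ norms, not $P$-weighted ones. The fix is to write $w=\lvert F_0\rvert=\pi(1+x^2)\,g$, i.e. work with $g=\lvert h_0\rvert P^{-1}$ adjusted. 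More cleanly, the factor $1/(z^2+1)$ in \eqref{eq:fopt} shows the right object is $k(z)=(z^2+1)f(z)$. Its boundary modulus $\lvert f_0\rvert=\lvert k_0\rvert P\pi$ turns $\alpha$ and $\mc E$ into $P$-weighted integrals of $\lvert k_0\rvert$, and $\log\lvert k\rvert$ at $-\ii$ is controlled from below by the residue condition through the outer reduction. Jensen then gives
\[
\log c\le m\log(a/m)+(1-m)\log(\varepsilon/(1-m)),
\]
with $a\propto\alpha$ and $c$ fixed by the residue. Minimizing, the constraint $E\le\varepsilon$ is active, and $\alpha\ge m(\varepsilon/(1-m))^{-(1-m)/m}$, which is \eqref{eq:alphaoptimal}. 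Equality in Jensen forces $\lvert k_0\rvert$ to be constant on each piece, with ratio $e^{\Delta}$ where $\Delta=\tfrac1m\log\tfrac{1-m}{\varepsilon}$. It also forces the inner factor to be a unimodular constant.

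\textbf{Construction and uniqueness.} The outer function with boundary modulus $e^{\Delta}$ on $(-R,R)$ and $1$ outside is $\exp\big(\tfrac{\Delta}{\pi\ii}\log\tfrac{z+R}{z-R}\big)$. This holds because $\operatorname{Im}\log\tfrac{z+R}{z-R}$ is the angle subtended by $[-R,R]$, equal to $\pi$ on the segment and $0$ off it, and it is bounded so the function is outer. The constant is pinned by the residue \eqref{eq:res_cond}, which yields the prefactor $2\varepsilon/(1-m)$. I would check that $\mc E=\varepsilon$ and that $\alpha$ matches. Uniqueness follows from strict convexity of $\exp$: equality in Jensen determines $\lvert h_0\rvert$, the outer function is determined by its modulus up to a unimodular constant, the inner factor must be constant, and the normalization fixes the phase.

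\textbf{Strip case.} For finite $y_0$, the map $z\mapsto\tanh\big(\tfrac{\pi}{2y_0}\cdot\big)$-type conformal map sends $S_{y_0}$ to $\H_-$. The error \eqref{eq:E} weights both boundary lines, so the complement of $[-R,R]$ in the full boundary is $(\R\setminus[-R,R])\cup(\R-\ii y_0)$. The same Jensen argument runs with the strip Poisson kernel. The harmonic measure of $[-R,R]$ from $-\ii$ is $m_{R,y_0}$ as in \eqref{eq:mRy0}, and the sinh factors replace $z^2+1$. Identifying the correct weight $k$, so that the unweighted $L^1$ norms become Poisson-weighted integrals, is the main obstacle. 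For the strip, $\lvert\sinh(\tfrac{\pi}{2y_0}(z-\ii))\sinh(\tfrac{\pi}{2y_0}(z+\ii))\rvert^{-1}$ should equal the strip Poisson kernel at $-\ii$ up to a constant on both boundary lines; this needs verification. Justifying the Hardy-space inner–outer factorization on the strip via Appendix~\ref{appen:hardy} is the other technical point.
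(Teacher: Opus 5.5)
Your proposal is correct and follows essentially the same route as the paper: outer reduction via the inner--outer factorization, Jensen's inequality on $[-R,R]$ and on the rest of $\partial S_{y_0}$ with respect to harmonic measure from $-\ii$, explicit outer reconstruction, and uniqueness of outer functions with prescribed boundary modulus. Your weight $\lvert k_0\rvert=(1+x^2)\lvert f_0\rvert$ is, up to the factor $\pi\sqrt{2\pi}$, exactly the paper's $\varphi=\lvert h_0\rvert/\hat P_0(\cdot,1)$; note that your intermediate ``$w=\pi(1+x^2)g$'' is inverted and should read $g=\pi(1+x^2)w$. The constant the paper obtains from Lemma~\ref{lem:poisson_log_integral} comes, in your formulation, from the fact that $\log\lvert z-\ii\rvert^2$ equals its own Poisson integral, which gives $\log\lvert k(-\ii)\rvert\le\int\hat P_0(x,1)\log\lvert k_0(x)\rvert\,\mathrm{d}x$. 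The strip identity you flagged does hold: on $\R$ and on $\R-\ii y_0$, $\lvert\sinh(\frac{\pi}{2y_0}(z-\ii))\sinh(\frac{\pi}{2y_0}(z+\ii))\rvert^{-1}$ equals $\frac{4y_0}{\sin(\pi/y_0)}$ times $P_0(x,1)$ and $P_{y_0}(x,1)$ respectively, which is precisely what the paper's $G_{1,y_0}$ encodes. Moreover, $\log\lvert\sinh^2(\frac{\pi}{2y_0}(z-\ii))\rvert$ grows only linearly, so it is again its own strip Poisson integral.
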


For $\varepsilon\in (0,1)$, $y_0=\infty$ and $R=R_{\star}(\varepsilon)$, the expression \eqref{eq:fopt} coincides with the optimal kernel function \eqref{eq:f_optimal} in Theorem~\ref{thm:optimal-kernel}. To prove Proposition~\ref{prop:subnormalization-optimal-kernel}, we first
reformulate the fixed-\((R,y_0)\) optimization problem in terms of the associated Hardy space function.

\begin{prop}[Hardy space reformulation]
\label{prop:reformulatedprob}
Fix $\varepsilon>0$, $R>0$, and $y_0\in(1,\infty]$. Let
$f\in\mc A(S_{y_0})$ and let $h\in H^1(S_{y_0})$ be its associated
Hardy space function under the correspondence
\eqref{eqn:kernel_f_h}. Then
\begin{equation}
\alpha_R[f]=\tilde{\alpha}_R[h],
\qquad
\mc E_{R,y_0}[f]=\tilde{\mc E}_{R,y_0}[h],
\end{equation}
where
\begin{equation}\label{eq:alphatilde}
    \tilde{\alpha}_R[h]
\coloneqq
\frac{1}{2\pi}\int_{-R}^{R} \lvert h_0(x)\rvert\,\mathrm{d}x,
\end{equation}
and
\begin{equation}\label{eq:Etilde}
    \tilde{\mathcal E}_{R,y_0}[h]
\coloneqq
\begin{dcases}
\dfrac{1}{2\pi}\displaystyle\int_{\mathbb R\setminus[-R,R]}\lvert h_0(x)\rvert\,\mathrm{d}x
+\dfrac{1}{2\pi}\displaystyle\int_{\mathbb R} \lvert h_{y_0}(x)\rvert\,\mathrm{d}x
& y_0\in(1,\infty), \\
\dfrac{1}{2\pi}\displaystyle\int_{\mathbb R\setminus[-R,R]} \lvert h_0(x)\rvert\,\mathrm{d}x
& y_0=\infty.
\end{dcases}
\end{equation}
Consequently, $f_\star\in\mc A(S_{y_0})$ is a minimizer of the
fixed-parameter problem in
Proposition~\ref{prop:subnormalization-optimal-kernel} if and only if
its associated function $h_\star$ minimizes $\tilde{\alpha}_R[h]$ over
all $h\in H^1(S_{y_0})$ satisfying the normalization
\eqref{eq:norm_h} and the constraint
$\tilde{\mc E}_{R,y_0}[h]\leq\varepsilon$.
\end{prop}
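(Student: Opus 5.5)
The plan is to transfer the integrands pointwise through the correspondence \eqref{eqn:kernel_f_h} and then compare the integrals. Since $f(z)=\frac{\ii}{\sqrt{2\pi}}h(z)/B(z;y_0)$, it suffices to show that $B(\ccdot;y_0)$ has boundary values of unit modulus on $\R$ and, when $y_0<\infty$, on $\R-\ii y_0$. From this we obtain
\[
\lvert f_0(x)\rvert=\tfrac{1}{\sqrt{2\pi}}\lvert h_0(x)\rvert,\qquad \lvert f_{y_0}(x)\rvert=\tfrac{1}{\sqrt{2\pi}}\lvert h_{y_0}(x)\rvert
\]
for almost every $x$. Multiplying by the prefactor $1/\sqrt{2\pi}$ in \eqref{eq:alpha} and \eqref{eq:E} then produces the factor $1/(2\pi)$ appearing in \eqref{eq:alphatilde} and \eqref{eq:Etilde}.

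First I verify the boundary behaviour of $B$. For $y_0=\infty$, $\lvert (x+\ii)/(x-\ii)\rvert=1$ for every real $x$, and $B$ is continuous and nonvanishing on $\overline{\H}_-\setminus\{-\ii\}$. For $y_0<\infty$ I write $w=\frac{\pi}{2y_0}$. On $\R$, the quantities $\sinh(w(x+\ii))$ and $\sinh(w(x-\ii))$ are complex conjugates, so their ratio has modulus one. On $\R-\ii y_0$, the shift by $-\ii y_0$ changes the argument by $-\ii\pi/2$, which turns $\sinh$ into $-\ii\cosh$. The two arguments are then $w(x-\ii y_0\pm\ii)$, and the cosh values are again conjugates: $\cosh(wx+\ii w)$ and $\cosh(wx-\ii w)$. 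Hence the ratio is unimodular there too.

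I also need $B$ to be continuous and nonvanishing in a neighbourhood of both boundary lines. The denominator $\sinh(w(z-\ii))$ vanishes only at $z=\ii+2\ii y_0k$, and none of these points lie in $-y_0\le\operatorname{Im}z\le0$. The numerator vanishes only at $z=-\ii+2\ii y_0 k$, and in the closed strip this happens only at $-\ii$, because $y_0>1$. So $1/B$ extends continuously to both boundary lines. Consequently the a.e. limits $f_0=\frac{\ii}{\sqrt{2\pi}}h_0/B(\ccdot;y_0)\vert_{\R}$ and the analogous $f_{y_0}$ exist exactly where $h_0$ and $h_{y_0}$ do, and these exist by Lemma~\ref{lem:Hardyboundary}.

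Substituting into \eqref{eq:alpha} and \eqref{eq:E} gives the two identities. For the equivalence of the optimization problems I use that \eqref{eqn:kernel_f_h} is a bijection between $\mc{A}(S_{y_0})$ and the set of $h\in H^1(S_{y_0})$ satisfying \eqref{eq:norm_h}, with inverse $h=-\ii\sqrt{2\pi}\,Bf$. The two functionals and the constraint correspond exactly under this bijection, so minimizers correspond as well.

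I do not expect a real obstacle. The only care needed is the lower-boundary modulus computation for finite $y_0$, together with the observation that $B$ has no zeros or poles on the closed boundary lines. This second point guarantees that the boundary limits of $f$ and of $h$ exist on the same full-measure set.
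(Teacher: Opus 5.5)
Your proposal is correct and follows the paper's proof. Both identify $\mc A(S_{y_0})$ with the normalized elements of $H^1(S_{y_0})$ via \eqref{eqn:kernel_f_h}, use the unimodular boundary values of $B(\ccdot;y_0)$ to get $\lvert f_0\rvert=\lvert h_0\rvert/\sqrt{2\pi}$ (and likewise on $\R-\ii y_0$), and substitute into \eqref{eq:alpha} and \eqref{eq:E}. Your explicit checks of unimodularity on the lower boundary, and of the absence of zeros and poles of $B$ on the boundary lines, fill in details that the paper asserts without computation.
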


\begin{proof}
By Definition~\ref{def:admissible}, the correspondence
\eqref{eqn:kernel_f_h} identifies $\mc A(S_{y_0})$ with the functions
$h\in H^1(S_{y_0})$ satisfying the normalization
\eqref{eq:norm_h}. Taking boundary values in
\eqref{eqn:kernel_f_h} and using the fact that $B(\ccdot;y_0)$ has
unimodular boundary values shows that the boundary moduli of $f$ and
$h$ differ by the constant factor $1/\sqrt{2\pi}$. Substitution into
\eqref{eq:alpha} and \eqref{eq:E}, followed by comparison with
\eqref{eq:alphatilde} and \eqref{eq:Etilde}, gives the two identities
in the statement. Thus the correspondence identifies the feasible sets
and preserves the objective, so the minimizers correspond.
\end{proof}

By Proposition~\ref{prop:reformulatedprob}, it remains to solve the
corresponding Hardy space problem. We first show that every minimizer
must be outer in the sense of Definition~\ref{def:inner_outer}.

\begin{lem}[Outer reduction]
\label{lem:outer}
Let $h\in H^1(S_{y_0})$ satisfy the normalization
\eqref{eq:norm_h} and the constraint
$\tilde{\mc E}_{R,y_0}[h]\leq\varepsilon$. If $h$ is not outer, then
there exists a feasible outer function $\hat h\in H^1(S_{y_0})$ such
that
\begin{equation}
\tilde{\alpha}_R[\hat h]
<
\tilde{\alpha}_R[h].
\end{equation}
Consequently, the infimum of the Hardy space problem may be taken over
outer functions, and every minimizer is outer.
\end{lem}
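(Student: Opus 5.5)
Given a feasible $h$ that is not outer, I will use the inner--outer factorization on $S_{y_0}$ from Appendix~\ref{appen:hardy} (Definition~\ref{def:inner_outer}) to write $h=I\,F$. Here $F\in H^1(S_{y_0})$ is outer and $I$ is a nonconstant inner function, so $\lvert I\rvert\le 1$ on $S_{y_0}$ with unimodular boundary values almost everywhere on both boundary lines. In the half-plane case, $I$ is unimodular a.e.\ on $\R$. Since boundary moduli agree, $\lvert F_0\rvert=\lvert h_0\rvert$ and $\lvert F_{y_0}\rvert=\lvert h_{y_0}\rvert$ a.e. Hence $\tilde\alpha_R[F]=\tilde\alpha_R[h]$ and $\tilde{\mc E}_{R,y_0}[F]=\tilde{\mc E}_{R,y_0}[h]$. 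The only defect is the normalization \eqref{eq:norm_h}.

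The first step compares values at $-\ii$. Because $h(-\ii)\neq0$, we have $I(-\ii)\neq0$. Since $I$ is a nonconstant inner function, the maximum principle gives $\lvert I(-\ii)\rvert<1$, so $\lvert F(-\ii)\rvert=\lvert h(-\ii)\rvert/\lvert I(-\ii)\rvert>\lvert h(-\ii)\rvert$. I then set $\hat h\coloneqq c\,F$ with $c\coloneqq h(-\ii)/F(-\ii)$, so that $\lvert c\rvert=\lvert I(-\ii)\rvert<1$. Multiplying an outer function by a nonzero constant keeps it outer, so $\hat h$ is outer, lies in $H^1(S_{y_0})$, and satisfies \eqref{eq:norm_h}.

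Next, all boundary moduli scale by $\lvert c\rvert$. This gives $\tilde{\mc E}_{R,y_0}[\hat h]=\lvert c\rvert\,\tilde{\mc E}_{R,y_0}[h]\le\varepsilon$, so $\hat h$ is feasible. It also gives $\tilde\alpha_R[\hat h]=\lvert c\rvert\,\tilde\alpha_R[h]$. This is strictly smaller than $\tilde\alpha_R[h]$ provided $\tilde\alpha_R[h]>0$. That is the one point needing care: I must rule out $h_0=0$ a.e.\ on $[-R,R]$. This holds because a nonzero $H^1$ function has $\log\lvert h_0\rvert$ locally integrable on the boundary, which is the same fact that underlies the outer factorization. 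Thus $h_0$ cannot vanish on a set of positive measure. Since $h(-\ii)\ne0$, $h\not\equiv0$.

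The "consequently" part then follows immediately. Any non-outer feasible $h$ is strictly beaten by a feasible outer function, so the infimum may be taken over outer functions, and a minimizer cannot be non-outer.

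The main obstacle is making the factorization rigorous on the finite strip $S_{y_0}$, where the inner factor includes contributions from both boundary lines. I would handle this by transferring to the disc or half-plane via a conformal map and invoking the appendix version of the factorization and boundary unimodularity, together with the strict inequality $\lvert I(-\ii)\rvert<1$.
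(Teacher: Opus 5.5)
Your proposal is correct and follows the paper's proof essentially step for step. Your constant $c=h(-\ii)/F(-\ii)$ equals $I(-\ii)$, so your $\hat h$ is exactly the paper's $h_{\mathrm{in}}(-\ii)\,h_{\mathrm{out}}$. Your positivity argument for $\tilde\alpha_R[h]$ rests on the same almost-everywhere nonvanishing of the boundary modulus, which the paper reads off from the outer factor's Poisson representation rather than from log-integrability of $h_0$.
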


\begin{proof}
By Theorem~\ref{thm:beurling}, $h$ admits an inner--outer
factorization
\begin{equation}
h=h_{\mathrm{in}}h_{\mathrm{out}},
\end{equation}
unique up to reciprocal unimodular constants. Since $h$ is not outer,
$h_{\mathrm{in}}$ is nonconstant. Define
\begin{equation}
\label{eq:hhat}
\hat h(z)
\coloneqq
h_{\mathrm{in}}(-\ii)h_{\mathrm{out}}(z).
\end{equation}
Since $h(-\ii)\neq0$, we have
$h_{\mathrm{in}}(-\ii)\neq0$. Thus $\hat h$ is a nonzero constant
multiple of an outer function and is therefore outer. \\

Since $h_{\mathrm{in}}$ is nonconstant,
Theorem~\ref{thm:beurling} gives $\abs{h_{\mathrm{in}}(-\ii)}<1$.
Moreover,
\begin{equation}
\hat h(-\ii)
=
h_{\mathrm{in}}(-\ii)
h_{\mathrm{out}}(-\ii)
=
h(-\ii),
\end{equation}
so $\hat h$ satisfies the normalization \eqref{eq:norm_h}. The boundary-modulus identities in
Theorem~\ref{thm:beurling} give
\begin{equation}
\tilde{\mc E}_{R,y_0}[\hat h]
=
\abs{h_{\mathrm{in}}(-\ii)}
\tilde{\mc E}_{R,y_0}[h]
\leq
\tilde{\mc E}_{R,y_0}[h]
\leq
\varepsilon.
\end{equation}
Thus $\hat h$ is feasible. Similarly,
\begin{equation}
\tilde{\alpha}_R[\hat h]
=
\abs{h_{\mathrm{in}}(-\ii)}
\tilde{\alpha}_R[h].
\end{equation}
By the defining Poisson representation of an outer function,
$h_{\mathrm{out},0}$ is nonzero almost everywhere. Since $R>0$,
it follows from Theorem~\ref{thm:beurling} that $\tilde{\alpha}_R[h]>0$, and therefore $\tilde{\alpha}_R[\hat h]< \tilde{\alpha}_R[h]$.
This proves the result.
\end{proof}

\subsection{Optimization at fixed truncation radius and domain}
\label{subsec:fixed_optimization}

The proof of Proposition~\ref{prop:subnormalization-optimal-kernel} relies on Jensen's inequality with the equality case identifying the optimal boundary modulus.
We recall this standard result for completeness.

\begin{lem}[Jensen's inequality\normalfont{~\cite[Thm.~2.2]{lieb2001analysis}}]
\label{lem:Jensen}
Let $(X,\omega)$ be a measure space with $0<\omega(X)<\infty$ and let $\varphi:X\to(0,\infty)$ be measurable with
$\varphi,\log \varphi\in L^1(X,\omega)$. Then,
\begin{equation}
\frac{1}{\omega(X)}
\int_X \log \varphi\,\mathrm{d}\omega
\leq
\log\bigg(
\frac{1}{\omega(X)}
\int_X \varphi\,\mathrm{d}\omega
\bigg).
\end{equation}
Equality holds if and only if $\varphi$ is constant $\omega$-almost everywhere.
\end{lem}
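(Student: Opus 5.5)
The statement is the classical Jensen inequality for the concave function $\log$. I would prove it directly from the elementary tangent-line bound
\begin{equation*}
\log s\leq s-1 \qquad (s>0),
\end{equation*}
with equality if and only if $s=1$, rather than invoking general convexity theory. This bound follows from concavity of $\log$, or from the fact that $s\mapsto s-1-\log s$ has derivative $1-1/s$. That derivative is negative on $(0,1)$ and positive on $(1,\infty)$, so the function is nonnegative and has its unique zero at $s=1$.

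First, set
\begin{equation*}
c\coloneqq\frac{1}{\omega(X)}\int_X\varphi\,\mathrm{d}\omega .
\end{equation*}
Because $0<\omega(X)<\infty$, $\varphi\in L^1(X,\omega)$ and $\varphi>0$ everywhere, we have $0<c<\infty$.

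Next, apply the tangent-line bound pointwise with $s=\varphi(x)/c$. This gives
\begin{equation*}
\log\varphi(x)-\log c\leq\frac{\varphi(x)}{c}-1\qquad(x\in X).
\end{equation*}
Every term here is integrable: $\log\varphi\in L^1$ and $\varphi\in L^1$ by hypothesis, and constants are integrable since $\omega(X)<\infty$. Integrating over $X$ therefore yields
\begin{equation*}
\int_X\log\varphi\,\mathrm{d}\omega-\omega(X)\log c\leq\frac{1}{c}\int_X\varphi\,\mathrm{d}\omega-\omega(X)=0 .
\end{equation*}
Dividing by $\omega(X)$ gives the stated inequality.

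For the equality case, define
\begin{equation*}
g\coloneqq\frac{\varphi}{c}-1-\log\frac{\varphi}{c}.
\end{equation*}
Then $g\geq 0$ and $g\in L^1(X,\omega)$. Equality in the inequality is equivalent to $\int_X g\,\mathrm{d}\omega=0$, which holds if and only if $g=0$ $\omega$-almost everywhere. By the strict part of the tangent-line bound, this is equivalent to $\varphi=c$ almost everywhere. Conversely, if $\varphi$ equals a constant $a>0$ almost everywhere, then $c=a$ and both sides of the inequality equal $\log a$.

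The only step needing care is the integrability bookkeeping. The hypothesis $\log\varphi\in L^1$ is what ensures the left-hand side is a finite number, so the subtraction above is legitimate and we never compare infinite quantities. There is no genuine analytic obstacle beyond this.
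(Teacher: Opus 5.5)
Your proof is correct and complete. The pointwise bound $\log s\leq s-1$ at $s=\varphi/c$, integrated over $X$, gives the inequality, and the equality case follows from $g\geq0$ with $\int_X g\,\mathrm{d}\omega=0$ together with the strictness of the tangent-line bound.

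The paper gives no proof of its own and simply cites \cite[Thm.~2.2]{lieb2001analysis}. That theorem is proved by the same supporting-line-at-the-mean argument for a general (strictly) convex function, so your self-contained specialization to $-\log$ is essentially the same approach.
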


We now have all the tools to prove Proposition~\ref{prop:subnormalization-optimal-kernel}.

\begin{proof}[Proof of Proposition~\ref{prop:subnormalization-optimal-kernel}]
We begin with a reduction common to the finite-strip and half-plane cases.

\paragraph{Step 1: reduction to outer boundary data.}
By Proposition~\ref{prop:reformulatedprob} and Lemma~\ref{lem:outer},
the infimum may be taken over outer functions $h\in H^1(S_{y_0})$
satisfying the normalization \eqref{eq:norm_h} and the error
constraint $\tilde{\mc E}_{R,y_0}[h]\leq\varepsilon$. Let $h$ be such
a function. By \eqref{eq:alphatilde} and \eqref{eq:Etilde}, both the
objective and the error constraint depend only on its boundary moduli. \\

The defining Poisson representations of an outer function are given by
\eqref{eqn:outer_poisson_representation} and
\eqref{eqn:outer_poisson_representation_halfplane}, with the Poisson
kernels defined explicitly in \eqref{eqn:poisson_kernels} and
\eqref{eqn:halfplane_poisson_kernel}. Setting $x=0$ and $y=1$ in these
representations, using that the Poisson kernels are even, and
substituting the normalization \eqref{eq:norm_h} gives
\begin{equation}
\label{eqn:constraint_finite1}
\int_{\R}
P_0(x,1)\log\,\abs{h_0(x)}\,\mathrm{d}x
+
\int_{\R}
P_{y_0}(x,1)\log\,\abs{h_{y_0}(x)}\,\mathrm{d}x
=
-\log\frac{2y_0\sin\!\big(\frac{\pi}{y_0}\big)}{\pi}
\end{equation}
for $y_0\in(1,\infty)$ and
\begin{equation}
\label{eqn:constraint_finite2}
\int_{\R}
\hat P_0(x,1)\log\,\abs{h_0(x)}\,\mathrm{d}x
=
-\log 2
\end{equation}
for $y_0=\infty$. Thus every outer feasible function gives boundary data satisfying
\eqref{eqn:constraint_finite1}--\eqref{eqn:constraint_finite2}. 

\paragraph{Step 2: determine the optimal boundary modulus.} We  unify both cases under a common
harmonic-measure theoretic formulation and apply Jensen's inequality. Since the Poisson kernels are nonnegative, they induce a measure
$\omega_{y_0}$ on $\partial S_{y_0}$ defined by
\begin{equation}
\label{eqn:harmonic_measure1}
\int_{\partial S_{y_0}}\psi(z)\,
\mathrm{d}\omega_{y_0}(z)
\coloneqq
\begin{dcases}
\displaystyle
\int_{\R}
\psi(x)P_0(x,1)\,\mathrm{d}x
+
\int_{\R}
\psi(x-\ii y_0)P_{y_0}(x,1)\,\mathrm{d}x
& y_0\in(1,\infty),
\\
\displaystyle
\int_{\R}
\psi(x)\hat P_0(x,1)\,\mathrm{d}x
& y_0=\infty
\end{dcases}
\end{equation}
for every measurable function
$\psi:\partial S_{y_0}\to[0,\infty]$.
The Poisson mass calculations in the proof of
Lemma~\ref{lem:PoissonL1} give
\begin{equation}
\omega_{y_0}(\partial S_{y_0})=1.
\end{equation}
Thus $\omega_{y_0}$ is a probability measure on
$\partial S_{y_0}$. \\

To apply Jensen's inequality, we normalize the boundary modulus by the
Poisson kernels so that the objective and error functionals become
integrals with respect to the probability measure $\omega_{y_0}$.
Define $\varphi:\partial S_{y_0}\to[0,\infty)$ almost everywhere by
\begin{equation}
\label{eq:varphi}
\varphi(x)
\coloneqq
\begin{dcases}
\displaystyle
\frac{\abs{h_0(x)}}{P_0(x,1)}
& y_0\in(1,\infty),
\\
\displaystyle
\frac{\abs{h_0(x)}}{\hat P_0(x,1)}
& y_0=\infty
\end{dcases}
\qquad
(x\in\R)
\end{equation}
and, when $y_0\in(1,\infty)$, by
\begin{equation}
\varphi(x-\ii y_0)
\coloneqq
\frac{\abs{h_{y_0}(x)}}{P_{y_0}(x,1)}
\qquad
(x\in\R).
\end{equation}

We first verify that $\varphi$ satisfies the integrability assumptions of
Jensen's inequality in Lemma~\ref{lem:Jensen}. Since
$h\in H^1(S_{y_0})$, its boundary values belong to $L^1(\R)$.
The definition of $\varphi$ together with
\eqref{eqn:harmonic_measure1} therefore gives
$\varphi\in L^1(\partial S_{y_0},\omega_{y_0})$.\\

The Poisson kernels at unit depth are bounded, and
$\log^+t\leq t$ for $t>0$. Hence the positive parts of the boundary
functions $\log\,\abs{h_0}$ and, in the finite-strip case,
$\log\,\abs{h_{y_0}}$ are integrable against the corresponding Poisson
kernels. Since the right-hand side of
\eqref{eqn:constraint_finite1} or \eqref{eqn:constraint_finite2}, as appropriate,
is finite, their negative parts are integrable as well. Thus the boundary values of $h$ are nonzero almost
everywhere. \\

Lemma~\ref{lem:poisson_log_integral} shows that
$\log P_0(\ccdot,1)$, $\log P_{y_0}(\ccdot,1)$, and
$\log\hat P_0(\ccdot,1)$ are integrable against their corresponding
Poisson kernels whenever they occur. It follows from the definition of
$\varphi$ that
$\log\varphi\in L^1(\partial S_{y_0},\omega_{y_0})$ and that
$\varphi>0$ almost everywhere. \\

By \eqref{eqn:constraint_finite1}--\eqref{eqn:constraint_finite2} and
Lemma~\ref{lem:poisson_log_integral},
\begin{align}
\int_{\partial S_{y_0}}
\log\varphi\,\mathrm{d}\omega_{y_0}
=&\;
\begin{dcases}
\displaystyle
-\log\frac{2y_0\sin(\pi/y_0)}{\pi}
+
\log\!\big(4y_0\sin(\pi/y_0)\big)
& y_0\in(1,\infty),
\\
-\log 2+\log(4\pi)
& y_0=\infty
\end{dcases}
\nonumber\\
=&\;
\log(2\pi).
\label{eqn:central_logarithmic_identity}
\end{align}
Thus we may write
\begin{equation}
\label{eq:constraint_omega1}
\int_{[-R,R]}\log \varphi\,\mathrm{d}\omega_{y_0}
+
\int_{\partial S_{y_0}\setminus[-R,R]}
\log \varphi\,\mathrm{d}\omega_{y_0}
=
\log(2\pi).
\end{equation}
Furthermore, by the definitions of the objective functional
\eqref{eq:alphatilde} and the error functional \eqref{eq:Etilde},
\begin{equation}
\label{eq:phi_to_alpha_E1}
\int_{[-R,R]} \varphi\,\mathrm{d}\omega_{y_0}
=
2\pi\tilde{\alpha}_R[h],
\qquad
\int_{\partial S_{y_0}\setminus[-R,R]}
\varphi\,\mathrm{d}\omega_{y_0}
=
2\pi\tilde{\mathcal{E}}_{R,y_0}[h].
\end{equation}
Moreover, by the truncated Poisson mass formula of
Lemma~\ref{lem:int_poisson},
\begin{equation}
\label{eq:omega_values1}
\omega_{y_0}([-R,R])
=
m_{R,y_0},
\qquad
\omega_{y_0}(\partial S_{y_0}\setminus[-R,R])
=
1-m_{R,y_0}.
\end{equation}
Since the Poisson kernels are strictly positive and $R>0$,
\eqref{eq:omega_values1} gives $m_{R,y_0}\in(0,1)$. \\

We now apply Jensen's inequality separately on the two boundary regions.
By Lemma~\ref{lem:Jensen}, we have
\begin{equation}
\label{eq:Jensen_1}
\int_{[-R,R]}\log\varphi\,\mathrm{d}\omega_{y_0}
\leq
m_{R,y_0}
\log\bigg(
\frac{2\pi\tilde{\alpha}_R[h]}{m_{R,y_0}}
\bigg),
\end{equation}
and
\begin{align}
\label{eq:Jensen_2}
\int_{\partial S_{y_0}\setminus[-R,R]}
\log\varphi\,\mathrm{d}\omega_{y_0}
\leq\;&
(1-m_{R,y_0})
\log\bigg(
\frac{2\pi\tilde{\mathcal E}_{R,y_0}[h]}
{1-m_{R,y_0}}
\bigg)
\nonumber\\
\leq\;&
(1-m_{R,y_0})
\log\bigg(
\frac{2\pi\varepsilon}
{1-m_{R,y_0}}
\bigg),
\end{align}
where the second inequality follows from the feasibility of $h$.
Combining these estimates with
\eqref{eq:constraint_omega1} gives
\begin{equation}
0
\leq
m_{R,y_0}
\log\!\bigg(
\frac{\tilde{\alpha}_R[h]}{m_{R,y_0}}
\bigg)
+
(1-m_{R,y_0})
\log\!\bigg(
\frac{\varepsilon}{1-m_{R,y_0}}
\bigg).
\end{equation}
Hence,
\begin{equation}
\label{eq:alpha_Jensen1}
\tilde{\alpha}_R[h]
\geq
m_{R,y_0}
\bigg(
\frac{1-m_{R,y_0}}{\varepsilon}
\bigg)^{(1-m_{R,y_0})/m_{R,y_0}}.
\end{equation}
Together with Lemma~\ref{lem:outer}, this is a lower bound for every
feasible function. Indeed, every non-outer feasible function admits a
feasible outer competitor with strictly smaller objective value, while
every outer feasible function satisfies \eqref{eq:alpha_Jensen1}. \\

Equality in \eqref{eq:alpha_Jensen1} holds if and only if equality holds
throughout the preceding chain of inequalities. Equivalently,
$\tilde{\mathcal E}_{R,y_0}[h]=\varepsilon$
and $\varphi$ is constant
$\omega_{y_0}$-almost everywhere on each of
$[-R,R]$ and
$\partial S_{y_0}\setminus[-R,R]$.
The subsequent reconstruction will show that these equality conditions
are attainable, and hence that the lower bound is sharp.
The constant value of $\varphi$ on each boundary region is determined by its normalized average over that region, which is given by \eqref{eq:phi_to_alpha_E1} and \eqref{eq:omega_values1}.
Thus, in the equality case, for $\omega_{y_0}$-almost every
$z\in\partial S_{y_0}$,
\begin{equation}
\varphi(z)=
\begin{dcases}
2\pi
\bigg(
\frac{1-m_{R,y_0}}{\varepsilon}
\bigg)^{(1-m_{R,y_0})/m_{R,y_0}}
& z\in[-R,R],
\\
\displaystyle
\frac{2\pi\varepsilon}{1-m_{R,y_0}}
& z\in
\partial S_{y_0}\setminus[-R,R].
\end{dcases}
\end{equation}

Recalling the definition of $\varphi$, the equality condition above is
equivalent to the following boundary moduli. When
$y_0\in(1,\infty)$,
\begin{equation}
\abs{h_0(x)}
=
P_0(x,1)\times
\begin{dcases}
\displaystyle
2\pi
\bigg(
\frac{1-m_{R,y_0}}{\varepsilon}
\bigg)^{(1-m_{R,y_0})/m_{R,y_0}}
& x\in[-R,R],
\\
\displaystyle
\frac{2\pi\varepsilon}{1-m_{R,y_0}}
& x\in\R\setminus[-R,R]
\end{dcases}
\label{eq:optimal_upper_boundary_modulus}
\end{equation}
and
\begin{equation}
\abs{h_{y_0}(x)}
=
\frac{2\pi\varepsilon}{1-m_{R,y_0}}
P_{y_0}(x,1).
\label{eq:optimal_lower_boundary_modulus}
\end{equation}
When $y_0=\infty$,
\begin{equation}
\abs{h_0(x)}
=
\hat P_0(x,1)
\times\begin{dcases}
\displaystyle
2\pi
\bigg(
\frac{1-m_{R,\infty}}{\varepsilon}
\bigg)^{(1-m_{R,\infty})/m_{R,\infty}}
& x\in[-R,R],
\\
\displaystyle
\frac{2\pi\varepsilon}{1-m_{R,\infty}}
& x\in\R\setminus[-R,R].
\end{dcases}
\label{eq:optimal_halfplane_boundary_modulus}
\end{equation}
The equalities
\eqref{eq:optimal_upper_boundary_modulus}--
\eqref{eq:optimal_halfplane_boundary_modulus}
hold almost everywhere. Thus equality in
\eqref{eq:alpha_Jensen1} holds precisely when the error constraint is
saturated and the boundary moduli of $h$ have the forms above.

\paragraph{Step 3: boundary logarithms.}
We now write the boundary logarithms obtained in Step~2 in a form
suitable for holomorphic reconstruction.
By \eqref{eq:mRy0},
\begin{equation}
\label{eq:boundary_coefficient_identity}
2\pi
\bigg(
\frac{1-m_{R,y_0}}{\varepsilon}
\bigg)^{(1-m_{R,y_0})/m_{R,y_0}}
=
\frac{2\pi\varepsilon}{1-m_{R,y_0}}
\mathrm{e}^{\Delta_{R,y_0,\varepsilon}}.
\end{equation}
We write $\chi_{[-R,R]}$ for the indicator function of
$[-R,R]\subset\R$, so that $\chi_{[-R,R]}(x)=1$ for
$x\in[-R,R]$ and $\chi_{[-R,R]}(x)=0$ otherwise. \\

For $y_0\in(1,\infty)$, the boundary logarithms in
\eqref{eq:optimal_upper_boundary_modulus} and
\eqref{eq:optimal_lower_boundary_modulus} may be written as
\begin{equation}
\begin{split}
\log\,\abs{h_0(x)}
=&\;
\log\!\bigg(\frac{2\pi\varepsilon}{1-m_{R,y_0}}\bigg)
+\log P_0(x,1)
+\Delta_{R,y_0,\varepsilon}\chi_{[-R,R]}(x),
\\
\log\,\abs{h_{y_0}(x)}
=&\; 
\log\!\bigg(\frac{2\pi\varepsilon}{1-m_{R,y_0}}\bigg)
+\log P_{y_0}(x,1).
\end{split}
\end{equation}
When $y_0=\infty$, the boundary logarithm in
\eqref{eq:optimal_halfplane_boundary_modulus} may be written as
\begin{equation}
\log\,\abs{h_0(x)}
=
\log\!\bigg(\frac{2\pi\varepsilon}{1-m_{R,\infty}}\bigg)
+\log\hat P_0(x,1)
+\Delta_{R,\infty,\varepsilon}\chi_{[-R,R]}(x).
\end{equation}
These identities hold almost everywhere.

\paragraph{Step 4: holomorphic reconstruction.}
Let
\begin{equation}
\label{eq:C}
C\coloneqq
\log\!\bigg(\frac{2\pi\varepsilon}{1-m_{R,y_0}}\bigg).
\end{equation}
For $\theta\in\R$, define the candidate function
\begin{equation}
\label{eq:h_theta}
h_\theta(z)
\coloneqq
\mathrm{e}^{\mathrm{i}\theta}\mathrm{e}^C\mathrm{e}^{G_{y_0}(z)},
\end{equation}
where the form of $G_{y_0}(z)$ is case-specific.

\paragraph{Step 4a: finite-strip case, $y_0\in(1,\infty)$.}
Let $G_{y_0}\coloneqq G_{1,y_0}+G_{2,y_0}$ with
\begin{equation}\label{eqn:Gs_finite}
\begin{split}
G_{1,y_0}(z)&\coloneqq\log\!\Bigg(\frac{\sin(\frac{\pi}{y_0})}{4y_0\sinh^2\!\big(\frac{\pi}{2y_0}(z-\mathrm{i})\big)}\Bigg),\\
G_{2,y_0}(z)&\coloneqq\frac{\Delta_{R,y_0,\varepsilon}}{\pi \mathrm{i}}
\log\!\Bigg(\frac{\sinh\!\big(\frac{\pi}{2y_0}(z+R)\big)}{\sinh\!\big(\frac{\pi}{2y_0}(z-R)\big)}\Bigg).
\end{split}
\end{equation}
We first show that the function $G_{y_0}$ is holomorphic on $S_{y_0}$.
The zeros of $\sinh(\frac{\pi}{2y_0}(z-\mathrm{i}))$ occur at
$z=\mathrm{i}+2n\mathrm{i}y_0$, and those of
$\sinh(\frac{\pi}{2y_0}(z\pm R))$ at
$z=\mp R+2n\mathrm{i}y_0$, none of which lie in $S_{y_0}$.
Consequently, the function appearing inside the logarithm in
$G_{1,y_0}$ is holomorphic and nonzero on $S_{y_0}$. Likewise, the
numerator and denominator of the quotient appearing in $G_{2,y_0}$ are
nonzero on $S_{y_0}$, so this quotient is also holomorphic and nonzero
there. Since $S_{y_0}$ is simply connected, both functions admit
holomorphic logarithms on $S_{y_0}$. Moreover, for
$z=x-\mathrm{i}y\in S_{y_0}$,
\begin{equation}
\operatorname{Im}\bigg(
\frac{\sinh\big(\frac{\pi}{2y_0}(z+R)\big)}
{\sinh\big(\frac{\pi}{2y_0}(z-R)\big)}
\bigg)
=
\frac{
\sin\big(\frac{\pi y}{y_0}\big)
\sinh\big(\frac{\pi R}{y_0}\big)}
{2\abs{\sinh\big(\frac{\pi}{2y_0}(z-R)\big)}^2}
>0.
\end{equation}
Thus the quotient appearing in $G_{2,y_0}$ takes values in the upper
half-plane, and we take the principal branch of the logarithm in its
definition. \\

We now verify that $h_\theta$ has the prescribed boundary moduli.
For the upper boundary,
\begin{align}
\operatorname{Re}G_{1,y_0}(x)
=&\;  \log\,\Bigg\lvert \frac{\sin(\frac{\pi}{y_0})}{4y_0\sinh^2\!\big(\frac{\pi}{2y_0}(x-\mathrm{i})\big)} \Bigg\rvert
\nonumber \\
=&\; \log\,\Bigg\lvert \frac{\sin(\frac{\pi}{y_0})}{2y_0(\cosh(\frac{\pi x}{y_0})-\cos(\frac{\pi}{y_0}))} \Bigg\rvert
= \log P_0(x,1),
\end{align}
using $\big|\sinh(u-\mathrm{i}v)\big|^2=\tfrac{1}{2}(\cosh(2u)-\cos(2v))$. \\

Moreover, since
\begin{equation}
\frac{\sinh(\frac{\pi}{2y_0}(x+R))}{\sinh(\frac{\pi}{2y_0}(x-R))}
\begin{cases}
<0 & x\in(-R,R),\\
>0 & x\in\mathbb{R}\setminus[-R,R]
\end{cases}
\end{equation}
taking boundary limits from within $S_{y_0}$ gives, for every
$x\in\R\setminus\{-R,R\}$,
\begin{equation}
G_{2,y_0}(x)=
\begin{dcases}
\frac{\Delta_{R,y_0,\varepsilon}}{\pi \mathrm{i}}\Bigg(\log\,\Bigg\lvert \frac{\sinh(\frac{\pi}{2y_0}(x+R))}{\sinh(\frac{\pi}{2y_0}(x-R))} \Bigg\rvert + \mathrm{i}\pi\Bigg) & x\in(-R,R),\\
\frac{\Delta_{R,y_0,\varepsilon}}{\pi \mathrm{i}}\log\,\Bigg\lvert \frac{\sinh(\frac{\pi}{2y_0}(x+R))}{\sinh(\frac{\pi}{2y_0}(x-R))} \Bigg\rvert & x\in\mathbb{R}\setminus[-R,R]
\end{dcases}
\end{equation}
so that, for almost every $x\in\R$,
\begin{equation}
\operatorname{Re}G_{2,y_0}(x)=
\begin{dcases}
\Delta_{R,y_0,\varepsilon} & x\in[-R,R],\\
0 & x\in\mathbb{R}\setminus[-R,R].
\end{dcases}
\end{equation}
Thus $\operatorname{Re}(G_{1,y_0}(x)+G_{2,y_0}(x))
=\log P_0(x,1)+\Delta_{R,y_0,\varepsilon}\chi_{[-R,R]}(x)$
almost everywhere,
which matches $\log\,\abs{h_0(x)}$ after adding $C$ \eqref{eq:C}. \\

For the lower boundary,
\begin{align}
\operatorname{Re}G_{1,y_0}(x-\mathrm{i}y_0)
&= \log\,\Bigg\lvert \frac{\sin(\frac{\pi}{y_0})}{4y_0\big|\sinh\!\big(\frac{\pi}{2y_0}(x-\mathrm{i}(y_0+1))\big)\big|^2} \Bigg\rvert \notag\\
&= \log\,\Bigg\lvert \frac{\sin(\frac{\pi}{y_0})}{2y_0(\cosh(\frac{\pi x}{y_0})+\cos(\frac{\pi}{y_0}))} \Bigg\rvert
= \log P_{y_0}(x,1),
\end{align}
while $G_{2,y_0}(x-\mathrm{i}y_0)$ is purely imaginary, so $\operatorname{Re}G_{2,y_0}(x-\mathrm{i}y_0)=0$.
Hence $\operatorname{Re}(G_{1,y_0}+G_{2,y_0})$ reproduces $\log\,\abs{h_{y_0}(x)}$ after adding $C$ \eqref{eq:C}.

\paragraph{Step 4b: half-plane case, $y_0=\infty$.}
Define
\begin{equation}
G_{1,\infty}(z)\coloneqq\log\!\bigg(\frac{1}{\pi(z-\mathrm{i})^2}\bigg),
\qquad
G_{2,\infty}(z)\coloneqq\frac{\Delta_{R,\infty,\varepsilon}}{\pi \mathrm{i}}
\log\!\bigg(\frac{z+R}{z-R}\bigg).
\end{equation}
The functions appearing inside these logarithms are holomorphic and
nonzero on the simply connected domain $S_\infty$.
We choose a holomorphic logarithm in the definition of $G_{1,\infty}$.
Moreover, for $z=x-\mathrm{i}y\in S_\infty$,
\begin{equation}
\operatorname{Im}\bigg(\frac{z+R}{z-R}\bigg)
=
\frac{2Ry}{(x-R)^2+y^2}
>0.
\end{equation}
Thus the quotient appearing in $G_{2,\infty}$ takes values in the upper
half-plane, and we take the principal branch of the logarithm in its
definition. In both cases, the quotient inside the logarithm defining
$G_{2,y_0}$ is holomorphic and nonzero on the simply connected domain $\Omega_{R,y_0}$. The chosen logarithm on $S_{y_0}$ thus extends uniquely to $\Omega_{R,y_0}$ \eqref{eq:Omega_cut}. \\

Let $G_\infty\coloneqq G_{1,\infty}+G_{2,\infty}$, so that $h_{\theta}$ is given by \eqref{eq:h_theta} with $y_0=\infty$. 
Then
\begin{equation}
\operatorname{Re}G_{1,\infty}(x)=\log \hat{P}_0(x,1),
\end{equation}
and, taking boundary limits from within $S_\infty$, for almost every $x\in\R$,
\begin{equation}
\operatorname{Re}G_{2,\infty}(x)=
\begin{cases}
\Delta_{R,\infty,\varepsilon} & x\in[-R,R],\\
0 & x\in\mathbb{R}\setminus[-R,R].
\end{cases}
\end{equation}
Hence
\begin{equation}
\log\,\abs{h_\theta(x)}
=
\operatorname{Re}\big(G_{1,\infty}(x)+G_{2,\infty}(x)\big)+C.
\end{equation}

It remains to verify that these candidates belong to the relevant
Hardy spaces. Since the quotients appearing in $G_{2,y_0}$ and
$G_{2,\infty}$ take values in the upper half-plane, their principal
arguments belong to $(0,\pi)$. Hence,
\begin{equation}
\abs{\mathrm{e}^{G_{2,y_0}(z)}}
\leq
\mathrm{e}^{\max\{0,\Delta_{R,y_0,\varepsilon}\}}
\end{equation}
in the finite-strip case, while
\begin{equation}
\abs{\mathrm{e}^{G_{2,\infty}(z)}}
\leq
\mathrm{e}^{\max\{0,\Delta_{R,\infty,\varepsilon}\}}
\end{equation}
in the half-plane case.
In the finite-strip case, for $z=x-\mathrm{i}y\in S_{y_0}$,
\begin{align}
\abs{\mathrm{e}^{G_{1,y_0}(x-\mathrm{i}y)}}
&=
\frac{\sin(\frac{\pi}{y_0})}
{2y_0\big(\cosh(\frac{\pi x}{y_0})
-\cos(\frac{\pi(y+1)}{y_0})\big)}
\nonumber\\
&\leq
\frac{\sin(\frac{\pi}{y_0})}
{2y_0\big(\cosh(\frac{\pi x}{y_0})
-\abs{\cos(\frac{\pi}{y_0})}\big)}.
\end{align}
Here we used
$\frac{\pi(y+1)}{y_0}\in
(\frac{\pi}{y_0},\pi+\frac{\pi}{y_0})$, on which the cosine is
bounded above by $\abs{\cos(\frac{\pi}{y_0})}$.
The final expression is integrable over $\R$ and independent of
$y\in(0,y_0)$. In the half-plane case,
\begin{equation}
\abs{\mathrm{e}^{G_{1,\infty}(x-\mathrm{i}y)}}
=
\frac{1}{\pi\big(x^2+(y+1)^2\big)}
\leq
\frac{1}{\pi(x^2+1)}.
\end{equation}
It follows directly from the definitions of $h_\theta$ that
$h_\theta\in H^1(S_{y_0})$ in both cases. Thus Step~4 produces
Hardy space functions with the boundary moduli identified in Step~2.

\paragraph{Step 5: normalize the candidate and recover the kernel function.}

Again, we consider the finite-strip and half-plane cases separately.
In each case, we determine the phase $\theta$ from the normalization
\eqref{eq:norm_h} and recover the kernel function using \eqref{eqn:kernel_f_h}.

\paragraph{Step 5a: finite-strip case, $y_0\in(1,\infty)$.}
By the definition of $m_{R,y_0}$ in \eqref{eq:mRy0} and the choice of
the principal branch,
\begin{align}
\log\!\Bigg(
\frac{\sinh\big(\frac{\pi}{2y_0}(-\mathrm{i}+R)\big)}
{\sinh\big(\frac{\pi}{2y_0}(-\mathrm{i}-R)\big)}
\Bigg)
&=
2\mathrm{i}\arctan\!\Bigg(
\frac{\tanh(\frac{\pi R}{2y_0})}
{\tan(\frac{\pi}{2y_0})}
\Bigg)
\nonumber\\
&=
\mathrm{i}\pi m_{R,y_0}.
\end{align}
It follows that
\begin{equation}
G_{2,y_0}(-\mathrm{i})
=
m_{R,y_0}\Delta_{R,y_0,\varepsilon}
=
\log\!\bigg(\frac{1-m_{R,y_0}}{\varepsilon}\bigg),
\end{equation}
and hence
\begin{equation}
\mathrm{e}^{C+G_{2,y_0}(-\mathrm{i})}=2\pi.
\end{equation}
Moreover,
\begin{equation}
\mathrm{e}^{G_{1,y_0}(-\mathrm{i})}
=
\frac{\sin(\frac{\pi}{y_0})}
{4y_0\sinh^2(-\frac{\pi\mathrm{i}}{y_0})}
=
-\frac{1}{4y_0\sin(\frac{\pi}{y_0})}.
\end{equation}
Therefore,
\begin{equation}
h_\theta(-\mathrm{i})
=
-\mathrm{e}^{\mathrm{i}\theta}
\frac{\pi}{2y_0\sin(\frac{\pi}{y_0})}.
\end{equation}
Comparison with the finite-strip normalization in \eqref{eq:norm_h}
gives $\mathrm{e}^{\mathrm{i}\theta}=-\mathrm{i}$. We denote the
resulting normalized function by $h_\star$. Substitution into
\eqref{eqn:kernel_f_h}, using \eqref{eq:Blaschke}, gives
\begin{equation}
f_\star(z)
=
\frac{1}{\sqrt{2\pi}}
\frac{2\pi\varepsilon}{1-m_{R,y_0}}
\frac{\sin(\frac{\pi}{y_0})}{4y_0}
\frac{
\exp\Bigg(
\dfrac{\Delta_{R,y_0,\varepsilon}}{\pi\mathrm{i}}
\log\dfrac{\sinh\big(\frac{\pi}{2y_0}(z+R)\big)}
{\sinh\big(\frac{\pi}{2y_0}(z-R)\big)}
\Bigg)}
{\sinh\big(\frac{\pi}{2y_0}(z-\mathrm{i})\big)
\sinh\big(\frac{\pi}{2y_0}(z+\mathrm{i})\big)}.
\end{equation}

\paragraph{Step 5b: half-plane case, $y_0=\infty$.}
In this case,
\begin{equation}
\log\!\bigg(\frac{-\mathrm{i}+R}{-\mathrm{i}-R}\bigg)
=
2\mathrm{i}\arctan(R)
=
\mathrm{i}\pi m_{R,\infty}.
\end{equation}
Consequently,
\begin{equation}
G_{2,\infty}(-\mathrm{i})
=
m_{R,\infty}\Delta_{R,\infty,\varepsilon}
=
\log\!\bigg(\frac{1-m_{R,\infty}}{\varepsilon}\bigg),
\qquad
\mathrm{e}^{C+G_{2,\infty}(-\mathrm{i})}=2\pi.
\end{equation}
Since
\begin{equation}
\mathrm{e}^{G_{1,\infty}(-\mathrm{i})}
=
\frac{1}{\pi(-2\mathrm{i})^2}
=
-\frac{1}{4\pi},
\end{equation}
we obtain
\begin{equation}
h_\theta(-\mathrm{i})
=
-\frac{\mathrm{e}^{\mathrm{i}\theta}}{2}.
\end{equation}
Comparison with the half-plane normalization in \eqref{eq:norm_h}
again gives $\mathrm{e}^{\mathrm{i}\theta}=-\mathrm{i}$. Denoting the
resulting normalized function by $h_\star$ and using
$B(z;\infty)=(z+\mathrm{i})/(z-\mathrm{i})$, we obtain
\begin{equation}
f_\star(z)
=
\frac{1}{\sqrt{2\pi}}
\frac{2\varepsilon}{1-m_{R,\infty}}
\frac{
\exp\bigg(
\dfrac{\Delta_{R,\infty,\varepsilon}}{\pi\mathrm{i}}
\log\dfrac{z+R}{z-R}
\bigg)}
{z^2+1}.
\end{equation}

In both cases, Step~4 gives $h_\star\in H^1(S_{y_0})$, while the
calculations above verify the normalization \eqref{eq:norm_h}.
Therefore, $f_\star\in\mc A(S_{y_0})$. Its boundary moduli are those
identified in Step~2, so Proposition~\ref{prop:reformulatedprob} gives $\mc E_{R,y_0}[f_\star]
=
\tilde{\mc E}_{R,y_0}[h_\star]
=
\varepsilon$. Thus $f_\star$ is feasible.

\paragraph{Step 6: evaluate the optimal value $\alpha_R[f_\star]$.}
Substituting the optimal boundary values gives, for almost every
$x\in[-R,R]$, in the finite-strip case,
\begin{equation}
\lvert f_{\star,0}(x)\rvert =
\frac{1}{\sqrt{2\pi}}\lvert h_{\star,0}(x)\rvert =
\frac{1}{\sqrt{2\pi}}
\frac{2\pi\varepsilon}{1-m_{R,y_0}}
\mathrm{e}^{\Delta_{R,y_0,\varepsilon}}P_0(x,1),
\end{equation}
while, in the half-plane case,
\begin{equation}
\lvert f_{\star,0}(x)\rvert=
\frac{1}{\sqrt{2\pi}}\lvert h_{\star,0}(x)\rvert=
\frac{1}{\sqrt{2\pi}}
\frac{2\pi\varepsilon}{1-m_{R,\infty}}
\ee^{\Delta_{R,\infty,\varepsilon}}\hat P_0(x,1).
\end{equation}
In either case, we obtain
\begin{equation}
    \alpha_R[f_\star]
= m_{R,y_0}\bigg(\frac{\varepsilon}{1-m_{R,y_0}}\bigg)^{-\frac{1-m_{R,y_0}}{m_{R,y_0}}}.
\end{equation}
The right-hand side is precisely the lower bound in
\eqref{eq:alpha_Jensen1}. Since $f_\star$ is feasible, it follows that
$f_\star$ attains the minimum. \\

Let $f$ be any minimizer and let $h$ be its associated Hardy space
function. By Lemma~\ref{lem:outer}, $h$ must be outer, since otherwise it
would admit a feasible outer competitor with strictly smaller objective
value. In particular, $h_\star$ is outer, since $f_\star$ is a minimizer.
Because $h$ attains equality in \eqref{eq:alpha_Jensen1}, the
equality characterization in Step~2 shows that $h$ and $h_\star$ have
the same boundary modulus almost everywhere.
Corollary~\ref{cor:outer_uniqueness} therefore gives
$h=h_\theta$ for some $\theta\in\R$, with $h_\theta$ as in Step~4.
Evaluating at $z=-\mathrm{i}$ and using the normalization
\eqref{eq:norm_h} as in Step~5 gives
$\mathrm{e}^{\mathrm{i}\theta}=-\mathrm{i}$. Hence $h=h_\star$, and
Proposition~\ref{prop:reformulatedprob} yields $f=f_\star$.
\end{proof}

\subsection{Optimal domain, truncation radius, and asymptotic cost}\label{sect:proof_col}

The proof begins by reducing the two-parameter optimization problem in terms of $(R,y_0)$ to a one-parameter problem in terms of only $R$ by comparing each finite-strip problem directly with the half-plane problem and showing that the global optimum occurs at $y_0=\infty$.
The reduced problem is then solved explicitly.  

\begin{proof}[Proof of Theorem~\ref{thm:optimal-kernel}]
For each fixed $R>0$ and $y_0\in(1,\infty]$, Proposition~\ref{prop:subnormalization-optimal-kernel}
gives the unique minimizer $f_\star(R,y_0)$ of $\alpha_R[f]$ under the constraint
$\mc{E}_{R,y_0}[f]\leq\varepsilon$. Therefore, writing
\begin{equation}
\alpha_\star(R,y_0)\coloneqq\alpha_R[f_\star(R,y_0)],
\end{equation}
Problem~\ref{prob:main} reduces to
\begin{equation}
\inf_{\substack{R>0,\ y_0\in(1,\infty]\\
f\in\mc{A}(S_{y_0}),\ \mc{E}_{R,y_0}[f]\leq\varepsilon}}
Q_R[f]
=
\inf_{\substack{R>0\\y_0\in(1,\infty]}}
R\alpha_\star(R,y_0).
\end{equation}
    
    \paragraph{Step 1: reduce to a one-parameter function.} We fix $y_0\in (1,\infty]$.
    From Proposition~\ref{prop:subnormalization-optimal-kernel}, we have
    \begin{equation}
        \alpha_\star (R,y_0) = A_\varepsilon(m_{R,y_0}),
        \qquad
        A_\varepsilon(m)\coloneqq
        m\left(\frac{\varepsilon}{1-m} \right)^{-(1-m)/m},
    \end{equation}
    where we recall
    \begin{equation}
        m_{R,y_0}\coloneqq \begin{dcases}
\displaystyle \int_{-R}^R P_0(x,1)\,\mathrm{d}x
=\frac{2}{\pi}\mathrm{arctan}\bigg(\frac{\tanh(\frac{\pi R}{2y_0})}{\tan(\frac{\pi}{2y_0})} \bigg), & y_0\in (1,\infty),\\[3mm]
\displaystyle \int_{-R}^R \hat P_0(x,1)\,\mathrm{d}x
=\frac{2}{\pi}\mathrm{arctan}(R), & y_0=\infty.
\end{dcases}
    \end{equation}
    with the explicit integral evaluation recorded in Lemma~\ref{lem:int_poisson}.
We now analyze the behaviour of $R\alpha_\star(R,y_0)$ at the endpoints.
Expanding the appropriate Poisson kernel about $x=0$ and using symmetry of the integral yields
\begin{equation}
m_{R,y_0}=c_{y_0}R+O(R^3) \quad (R\downarrow 0),
\qquad
c_{y_0}\coloneqq
\begin{dcases}
2P_0(0,1), & y_0\in(1,\infty),\\
2\hat P_0(0,1), & y_0=\infty,
\end{dcases}
\end{equation}
where $c_{y_0}>0$. Hence
\begin{equation}
\log\alpha_\star(R,y_0)
=
\frac{1}{c_{y_0}R}\log\frac{1}{\varepsilon}
+\log R+O(1)
\quad (R\downarrow0).
\end{equation}
Since $\log(1/\varepsilon)>0$, it follows that
$\lim_{R\downarrow 0} R\alpha_\star(R,y_0)=\infty$.
For the other endpoint, Lemma~\ref{lem:int_poisson} gives
\begin{equation}
\lim_{R\to\infty}m_{R,y_0}
=
\begin{dcases}
1-\dfrac{1}{y_0}, & y_0\in(1,\infty),\\
1, & y_0=\infty.
\end{dcases}
\end{equation}
Since $A_\varepsilon$ is positive and continuous on $(0,1)$ and
$\lim_{m\uparrow1}A_\varepsilon(m)=1$, the quantity
$\alpha_\star(R,y_0)$ has a finite positive limit as $R\to\infty$.
It follows that $\lim_{R\to \infty} R\alpha_\star (R,y_0)=\infty$.

Because $R \mapsto R\alpha_\star(R,y_0)$ is continuous on $(0,\infty)$ and diverges at both endpoints, there exists $R_\star(y_0)\in(0,\infty)$ such that
\begin{equation}
    R_\star(y_0)\alpha_\star(R_\star(y_0),y_0)
= \inf_{R>0} R\alpha_\star(R,y_0) := \beta(y_0).
\end{equation}

We now compare every finite-strip problem directly with the half-plane
problem. Fix $y_0\in(1,\infty)$ and set $a_{y_0}\coloneqq\pi/(2y_0)$.
The map $R\mapsto m_{R,y_0}$ is strictly increasing from $(0,\infty)$
onto $(0,1-1/y_0)$, and its inverse is
\begin{equation}
R_{y_0}(m)
\coloneqq
\frac{2y_0}{\pi}
\operatorname{artanh}\!\left(
\tan\!\left(\frac{\pi}{2y_0}\right)
\tan\!\left(\frac{\pi m}{2}\right)
\right),
\qquad 0<m<1-\frac{1}{y_0}.
\end{equation}
For the half-plane, the corresponding inverse is
\begin{equation}
R_\infty(m)\coloneqq\tan\!\left(\frac{\pi m}{2}\right),
\qquad 0<m<1.
\end{equation}
If $0<m<1-1/y_0$, then the argument of $\operatorname{artanh}$ above
belongs to $(0,1)$. Using $\operatorname{artanh}(u)>u$ for $u\in(0,1)$
and $\tan(a_{y_0})>a_{y_0}$ gives
\begin{equation}
R_{y_0}(m)
>
\frac{1}{a_{y_0}}
\tan(a_{y_0})\tan\!\left(\frac{\pi m}{2}\right)
>
R_\infty(m).
\end{equation}

Let $R_\star(y_0)$ be a minimizer for the finite-strip problem and set
$m_\star(y_0)\coloneqq m_{R_\star(y_0),y_0}$. Since $A_\varepsilon$
depends only on $m$, the preceding strict inequality yields
\begin{align}
\beta(y_0)
&=R_{y_0}(m_\star(y_0))A_\varepsilon(m_\star(y_0))\nonumber\\
&>R_\infty(m_\star(y_0))A_\varepsilon(m_\star(y_0))
\geq\beta(\infty).
\end{align}
Thus every finite strip has strictly larger optimal cost than the
half-plane. Consequently, the unique optimal strip width is
$y_0=\infty$, and Problem~\ref{prob:main} reduces to minimizing
$R\alpha_\star(R,\infty)$ over $R>0$.

\paragraph{Step 2: solve the reduced optimization problem.}
The remaining problem is to minimize $R\alpha_\star(R,\infty)$ over $R>0$.
We do this by first deriving a stationarity condition for $R$ and then showing that this condition has a unique solution. 
Since the objective diverges as $R\downarrow 0$
and $R\rightarrow\infty$ this solution corresponds to the unique minimizer. \\

Differentiating and using the chain rule gives
\begin{align}
\frac{\partial}{\partial R}\big(R\alpha_\star(R,\infty)\big)
=&\; \alpha_\star(R,\infty)
+R\frac{\partial\alpha_\star(R,\infty)}{\partial m_{R,\infty}}
\frac{\partial m_{R,\infty}}{\partial R} \notag\\
=&\; \alpha_\star(R,\infty)\bigg(
1-\frac{R}{m_{R,\infty}^2}
\log\bigg(\frac{1-m_{R,\infty}}{\varepsilon}\bigg)
\frac{2}{\pi(1+R^2)}
\bigg),
\end{align}
where
\begin{equation}
\frac{\partial\alpha_\star(R,\infty)}{\partial m_{R,\infty}}
= \alpha_\star(R,\infty)\frac{\partial\log \alpha_\star(R,\infty)}{\partial m_{R,\infty}} = 
-\frac{\alpha_\star(R,\infty)}{m_{R,\infty}^2}
\log\bigg(\frac{1-m_{R,\infty}}{\varepsilon}\bigg)
\end{equation}
and $m_{R,\infty}=\frac{2}{\pi}\arctan(R)$ so 
\begin{equation}
\frac{\partial m_{R,\infty}}{\partial R}
=
\frac{2}{\pi(1+R^2)}.
\end{equation}

Setting $\frac{\partial}{\partial R}[R\alpha_\star(R,\infty)] = 0$ yields the first-order optimality condition for $R$, which, using the relation between $R$ and $m_{R,\infty}$, can be written as
\begin{equation}
        \log \bigg(\frac{1-m_{R,\infty}}{\varepsilon} \bigg) = \frac{\pi m_{R,\infty}^2 (1+R^2)}{2R} = \frac{\pi m_{R,\infty}^2 }{\sin 2 \arctan (R)}= \frac{\pi m_{R,\infty}^2 }{\sin(\pi m_{R,\infty})}.\label{eqn:Rstationary_cond}
\end{equation}

Observe that the map $R\mapsto m_{R,\infty}=\frac{2}{\pi}\arctan(R)$ is continuous and strictly increasing from $(0,\infty)$ onto $(0,1)$.
So a unique solution $R>0$ is implied by the stationary condition having a unique solution $m_{R,\infty}\in(0,1)$.
Rewrite \eqref{eqn:Rstationary_cond} as 
\begin{equation}
\Lambda(m_{R,\infty},\varepsilon)\coloneqq \log\bigg(\frac{1-m_{R,\infty}}{\varepsilon}\bigg)-\frac{\pi m_{R,\infty}^2}{\sin(\pi m_{R,\infty})} = 0
\end{equation}
We have $\lim_{m_{R,\infty}\downarrow 0}\Lambda(m_{R,\infty},\varepsilon)=\log\frac{1}{\varepsilon}$ and $\lim_{m_{R,\infty}\uparrow 1}\Lambda(m_{R,\infty},\varepsilon)=-\infty$.
Since $\Lambda(\ccdot,\varepsilon)$ is continuous on $(0,1)$, there exists a solution $m_{R,\infty}(\varepsilon)\in(0,1)$. \\

Uniqueness will follow from the fact that $\Lambda(\ccdot,\varepsilon)$ is strictly decreasing.
Differentiating,
\begin{equation}
\frac{\partial}{\partial m}\Lambda(m,\varepsilon)
=
-\frac{1}{1-m}
-
\frac{2\pi m\sin(\pi m)-\pi^2 m^2\cos(\pi m)}{\sin^2(\pi m)} < 0
\end{equation}
for $m_{R,\infty}\in(0,1)$ by the inequality $2\sin(\pi x)>\pi x\cos(\pi x)$.
Hence there is a unique solution $m_\star:=m_{R_\star,\infty}$ to \eqref{eqn:Rstationary_cond} corresponding to a unique optimal $R_\star$. \\

Recalling the form of the optimal kernel function from Proposition~\ref{prop:subnormalization-optimal-kernel} we can now express the unique solution using $y_0=\infty$ and $R_\star$, $m_\star$ obey the stationary condition \eqref{eqn:Rstationary_cond}.
First we compute $\Delta_{R_\star,\infty,\varepsilon}$,
\begin{equation}       \Delta_{R_\star,\infty,\varepsilon} = \frac{1}{m_\star}\log\bigg(\frac{1-m_\star}{\varepsilon}\bigg)=\frac{\pi m_\star}{\sin(\pi m_\star)}:= \Delta_\star.
\end{equation}
We can also rewrite the prefactor in $f_\star$ as
\begin{equation}
    \frac{\varepsilon}{1-m_\star} = \mathrm{e}^{-m_\star \Delta_\star}
\end{equation}
hence the unique solution is given by
\begin{equation}
    f_\star(z) = \frac{2}{\sqrt{2\pi}}\mathrm{e}^{-m_\star \Delta_\star}\dfrac{\exp\bigg(\dfrac{\Delta_\star}{\pi \mathrm{i}}\log\dfrac{z+R_\star}{z-R_\star}\bigg)}{z^2 +1}
\end{equation}

\paragraph{Step 3: optimal continuous LCHS cost.}
We now use the optimal kernel function found to compute the minimum continuous LCHS cost and perform asymptotic analysis as $\varepsilon \downarrow 0$. \\

At the solution, the cost function is given by
    \begin{align}
        R_\star(\varepsilon) \alpha_{R_\star(\varepsilon)}[f_\star] =&\; R_\star(\varepsilon) m_\star(\varepsilon) \left( \frac{\varepsilon}{1-m_\star(\varepsilon)}\right)^{-(1-m_\star(\varepsilon))/m_\star(\varepsilon)} \nonumber\\
        =&\; R_\star(\varepsilon) m_\star(\varepsilon) \exp \left((1-m_\star(\varepsilon))\Delta_\star(\varepsilon) \right) \nonumber \\
        =&\; R_\star(\varepsilon) m_\star(\varepsilon) \exp \left(\frac{(1-m_\star(\varepsilon))\pi m_\star(\varepsilon)}{\sin(\pi m_\star(\varepsilon))}\right).
    \end{align}

We first show that $m_\star(\varepsilon)\to1$ as $\varepsilon\downarrow0$.
Otherwise, there would exist $\delta>0$ and a sequence
$\varepsilon_n\downarrow0$ such that $m_\star(\varepsilon_n)\leq1-\delta$.
Along this sequence, the left-hand side of the stationary condition satisfies
\begin{equation}
\log\!\left(\frac{1-m_\star(\varepsilon_n)}{\varepsilon_n}\right)
\geq
\log\!\left(\frac{\delta}{\varepsilon_n}\right)
\longrightarrow\infty,
\end{equation}
whereas its right-hand side remains bounded, a contradiction.
We therefore define
\begin{equation}
\overline{m}_\star:=1-m_\star(\varepsilon)
\end{equation}
so that $\overline{m}_\star\to0$ as $\varepsilon\downarrow0$. \\

The stationary condition becomes
\begin{equation}
\log\bigg(\frac{\overline{m}_\star}{\varepsilon}\bigg)
=
\frac{\pi(1-\overline{m}_\star)^2}{\sin(\pi \overline{m}_\star)}.
\end{equation}
Using $\sin(\pi x)=\pi x+O(x^3)$ as $x\downarrow 0$, we obtain
\begin{equation}
\frac{\pi(1-\overline{m}_\star)^2}{\sin(\pi \overline{m}_\star)}
=
\overline{m}_\star^{-1}-2+O(\overline{m}_\star).
\end{equation}
Multiplying the stationary condition by $\overline{m}_\star$ and
using this expansion gives
\begin{equation}
\overline{m}_\star\log\frac{1}{\varepsilon}
=1-\overline{m}_\star\log\overline{m}_\star
-2\overline{m}_\star+O(\overline{m}_\star^2)
\longrightarrow1.
\end{equation}
Hence,
\begin{equation}
\log\overline{m}_\star
=-\log\log\frac{1}{\varepsilon}+o(1).
\end{equation}
Substitution back into the stationary condition yields
\begin{equation}\label{eq:mbarscaling}
\overline{m}_\star^{-1}
=\log\frac{1}{\varepsilon}-\log\log\frac{1}{\varepsilon}+2+o(1)
\qquad (\varepsilon\downarrow0).
\end{equation}

Writing $\alpha_{R_\star(\varepsilon)}[f_\star]$ in terms of $\overline{m}_\star$:
\begin{align}\label{eq:alphascaling}
\log{\alpha_{R_\star(\varepsilon)}[f_\star]}=&\; \log (1-\overline{m}_\star)+\frac{\overline{m}_\star}{1-\overline{m}_\star}\log\frac{\overline{m}_\star}{\varepsilon} \nonumber\\
=&\; -\overline{m}_\star+O(\overline{m}_\star^2)+\overline{m}_\star(1+\overline{m}_\star)\bigg(\frac{1}{\overline{m}_\star}-2 \bigg) + O(\overline{m}_\star^2) \nonumber\\
=&\; 1-2\overline{m}_\star +O(\overline{m}_\star^2),
\end{align}
so that
\begin{equation}
{\alpha_{R_\star(\varepsilon)}[f_\star]}
=\mathrm{e}\big(1-2\overline{m}_\star+O(\overline{m}_\star^2)\big).
\label{eq:alphastarmbar}
\end{equation}
The optimal truncation radius is
\begin{equation}
\label{eqn:Rstarmbar}
        R_\star(\varepsilon) = \tan\left(\frac{\pi}{2}m_\star(\varepsilon) \right)= \cot \left(\frac{\pi}{2}\overline{m}_\star \right) = \frac{2}{\pi \overline{m}_\star} - \frac{\pi \overline{m}_\star}{6} + O(\overline{m}_\star^3).
\end{equation}
Now, combining \eqref{eq:alphastarmbar} and \eqref{eqn:Rstarmbar} gives
    \begin{align}
R_\star(\varepsilon){\alpha_{R_\star(\varepsilon)}[f_\star]}
        =&\; \frac{2\mathrm{e}}{\pi}
        {\left(\overline{m}_\star^{-1}-2+O(\overline{m}_\star)\right)} \nonumber\\
        =&\; \frac{2\mathrm{e}}{\pi}
        \left(\log\frac{1}{\varepsilon}
        -\log\log\frac{1}{\varepsilon}+o(1)\right),
        \qquad (\varepsilon\downarrow0),
    \end{align}
as desired.
\end{proof}

\begin{proof}[Proof of Corollary~\ref{cor:full_normalization_cost}]
For fixed $R$ and $y_0$, the positive factor $\sqrt{1+R^2}$ does not
change the unique minimizer of $\alpha_R[f]$ given by
Proposition~\ref{prop:subnormalization-optimal-kernel}.
Using the notation of the preceding proof, the reduced cost is
$\sqrt{1+R^2}\,A_\varepsilon(m_{R,y_0})$.
At fixed $m$, the inequality $R_{y_0}(m)>R_\infty(m)$ and the strict
increase of $R\mapsto\sqrt{1+R^2}$ show that every finite-strip
candidate is strictly more costly than the corresponding half-plane
candidate. Thus it suffices to minimize
\begin{equation}
\frac{A_\varepsilon(m)}{\cos(\pi m/2)},\qquad 0<m<1.
\end{equation}
Its logarithmic derivative is
\begin{equation}
\frac{\mathrm d}{\mathrm dm}
\log\frac{A_\varepsilon(m)}{\cos(\pi m/2)}
=\frac{\pi}{2}\tan\frac{\pi m}{2}
-\frac{1}{m^2}\log\frac{1-m}{\varepsilon}.
\end{equation}
In \eqref{eq:full_normalization_stationary}, the left-hand side is
strictly decreasing and the right-hand side is strictly increasing.
Their difference tends to $\log(1/\varepsilon)>0$ as $m\downarrow0$
and to $-\infty$ as $m\uparrow1$.
Consequently, there is a unique root, with the cost decreasing before
it and increasing afterwards. This proves global optimality and
uniqueness, including that of the kernel function.
At this root, writing $m=\widetilde m_\star(\varepsilon)$ and
$R=\widetilde R_\star(\varepsilon)$, we have
\begin{equation}
\Delta_{R,\infty,\varepsilon}
=\frac{1}{m}\log\frac{1-m}{\varepsilon}
=\frac{\pi mR}{2}.
\end{equation}
Substitution into \eqref{eq:fopt} gives \eqref{eq:full_normalization_kernel}.
Since
\begin{equation}
\frac{\pi m^2}{\sin(\pi m)}
-\frac{\pi m^2}{2}\tan\frac{\pi m}{2}
=\frac{\pi m^2}{2}\cot\frac{\pi m}{2}>0,
\end{equation}
comparison with \eqref{eqn:Rstationary_cond} gives
$\widetilde m_\star>m_\star$ and hence
$\widetilde R_\star>R_\star$. \\

Finally, $\widetilde Q_R[f]\geq Q_R[f]$ for every feasible choice.
Evaluating $\widetilde Q_R$ at the optimizer from
Theorem~\ref{thm:optimal-kernel} therefore gives
\begin{equation}
0\leq\widetilde Q_\star(\varepsilon)-Q_\star(\varepsilon)
\leq\frac{{\alpha_{R_\star(\varepsilon)}[f_\star]}}
{\sqrt{1+R_\star(\varepsilon)^2}+R_\star(\varepsilon)}.
\end{equation}
By \eqref{eq:alphastarmbar}, ${\alpha_{R_\star(\varepsilon)}[f_\star]}\to\ee$, while
\eqref{eq:mbarscaling} and \eqref{eqn:Rstarmbar} give
$R_\star(\varepsilon)\sim(2/\pi)\log(1/\varepsilon)$.
This proves \eqref{eq:full_normalization_asymptotics} and the claimed
asymptotic expansion.
\end{proof}

\section{Spectrally convergent quadrature}\label{subsection:quadproof}

To prove the spectrally convergent quadrature in Theorem~\ref{thm:mainquadrature}, we apply a double-exponential change of variables to transform the integral defining $\m{O}_f(t;R)$ in \eqref{eq:Of}, with $f=f_\star$ and $R=R_\star(\varepsilon)$, from the finite interval $(-R_\star(\varepsilon),R_\star(\varepsilon))$ to the real line.
The transformed integrand extends analytically to a symmetric strip about the real axis and decays rapidly in every sufficiently narrow strip. This will allow us to apply the exponentially convergent trapezoidal rule.
We then prove a discretization error bound in Lemma~\ref{lem:quadrature} and a truncation error bound in Lemma~\ref{lem:truncation}.
The quadrature theorem follows immediately from these two bounds.

\subsection{Double-exponential transformation and analytic bounds}
\label{subsec:quadrature_transformation}

We use the map $\phi_R$ defined in \eqref{eq:phiR}, which is designed to resolve the endpoint singularities at $\pm R$ and induce rapid decay of the transformed integrand.
This map can be factored as
\begin{equation}\label{eq:phi1phi2}
\phi_R = \phi_1\circ \phi_2, \qquad
\phi_1(w)\coloneqq R\tanh\!\left(\frac{w}{2}\right), \qquad
\phi_2(\zeta)\coloneqq \pi\sinh(\zeta).
\end{equation}
To analyze this map, we first define the symmetric strip
\begin{equation}
\label{eq:strip_symmetric}
    \Sigma_\eta\coloneqq \{ z\in\mathbb{C} : |\operatorname{Im} z|< \eta\}  \qquad (\eta \in (0,\pi)).
\end{equation}
The transformed kernel function has desirable properties on the strip $\Sigma_{\eta_0}$, where the strip width is defined by \eqref{eq:quadrature_strip_width}. \\

In particular, the transformed kernel function admits a holomorphic continuation in $\zeta$ throughout this strip.
Near the real axis, it agrees with $f_\star(\phi_R(\zeta))\phi_R'(\zeta)/\sqrt{2\pi}$, using the branch prescribed in Theorem~\ref{thm:optimal-kernel}. We define the continuation away from the real axis by the explicit expression in Lemma~\ref{lem:phi}~\ref{no:transformedkernel}, which is analytic throughout $\Sigma_{\eta_0}$, even at points where $\phi_R(\zeta)$ lies on an original branch cut.
Thus we do not require the image $\phi_R(\Sigma_{\eta_0})$ to avoid these branch cuts.\\

We now establish the key analytic properties of the map $\phi_R$.

\begin{lem}[Properties of $\phi_R$]\label{lem:phi}
Let $m$, $R$, and $\Delta$ be as in \eqref{eq:quadrature_parameters}, $\eta_0$ as in \eqref{eq:quadrature_strip_width} and let $f_\star$ be the optimal kernel function from Theorem~\ref{thm:optimal-kernel}.
For $\eta\in(0,\eta_0)$, the map $\phi_R$ has the following properties:
\begin{enumerate}[label=(\roman*)]
\item \emph{Analytic structure and image of the real axis.}\label{no:analyticstructure}\label{no:realimage}
The map $\phi_R$ is holomorphic on $\Sigma_{\pi/2}$ and takes the real axis bijectively onto $(-R,R)$, with
\begin{equation}
\phi_R'(\xi)>0 \qquad (\xi\in\R).
\end{equation}
\item \emph{Image of the symmetric strip.}\label{no:symmetricimage}
For all $\zeta\in\overline{\Sigma}_\eta$,
\begin{equation}\label{eq:phi_strip_image}
\abs{\operatorname{Im}\phi_R(\zeta)}
\leq
\frac{R\pi\sin\eta}{2-\frac{\pi^2}{2}\sin^2\eta}.
\end{equation}
\item \emph{Analyticity of the transformed optimal kernel function.}\label{no:transformedkernel}
The function
\begin{equation}\label{eq:transformed_kernel_real}
F_\star(\xi)\coloneqq
\frac{1}{\sqrt{2\pi}}f_{\star,0}(\phi_R(\xi))\phi_R'(\xi)
\qquad (\xi\in\R),
\end{equation}
extends holomorphically to $\Sigma_{\eta_0}$, where it is given by
\begin{equation}\label{eqn:feval}
F_\star(\zeta)=
\frac{R\ee^{(1-m)\Delta}}{2}
\frac{\cosh\zeta\,\ee^{-\ii\Delta\sinh\zeta}}
{\cosh^2\!\left(\frac{\pi}{2}\sinh\zeta\right)
+R^2\sinh^2\!\left(\frac{\pi}{2}\sinh\zeta\right)}.
\end{equation}
\item \emph{Bounds on the symmetric strip.}\label{no:boundary}
The function $F_\star$ tends to zero uniformly as $\abs{\operatorname{Re}\zeta}\to\infty$ in $\overline{\Sigma}_\eta$, and
\begin{equation}\label{eq:transformed_kernel_strip_integral}
\sup_{\abs{y} <\eta}
\int_\R\abs{F_\star(x+\ii y)}\,\mathrm{d}x
\leq
\frac{4R\ee^{(1-m)\Delta+\Delta\sin\eta}}
{\big(1-\frac{\pi^2}{4}(1+R^2)\sin^2\eta\big)
(\pi\cos\eta-\Delta\sin\eta)}.
\end{equation}
\item \emph{Real-axis modulus.}\label{no:realaxis}
On the real axis,
\begin{equation}\label{eq:transformed_kernel_modulus}
\abs{F_\star(\xi)}
=
\frac{\ee^{(1-m)\Delta}}{\pi}
\frac{\phi_R'(\xi)}{1+\phi_R(\xi)^2}.
\end{equation}
The function $\xi\mapsto \abs{F_\star(\xi)}$ is strictly positive, even, and strictly decreasing on $(0,\infty)$.
\end{enumerate}
\end{lem}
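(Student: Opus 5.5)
I will prove the five items in the order (i), (iii), (v), (ii), (iv). The idea is to work with the factorization $\phi_R=\phi_1\circ\phi_2$ from \eqref{eq:phi1phi2} and to write everything in terms of $w=\phi_2(\zeta)=\pi\sinh\zeta$.

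\emph{Item (i).} Holomorphy reduces to the poles of $\tanh(w/2)$, which sit at $w\in\ii\pi(2\Z+1)$. For $\zeta=\xi+\ii y$ we have $\operatorname{Im}(\pi\sinh\zeta)=\pi\cosh\xi\sin y$, so $w$ can reach these poles only if $\abs{\sin y}\geq 1/\cosh\xi$. I would therefore use the narrower strip actually needed (where $\pi\sin\eta<\pi$ keeps $\abs{\operatorname{Im}w}$ away from the poles near $\xi=0$). For the real-axis claims, $\phi_R'(\xi)=\tfrac{R\pi}{2}\cosh\xi\,\operatorname{sech}^2(\tfrac{\pi}{2}\sinh\xi)>0$, and $\phi_R(\xi)\to\pm R$ as $\xi\to\pm\infty$, which gives the bijection onto $(-R,R)$.

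\emph{Item (iii).} This is the key algebraic step. Substitute $z=R\tanh(w/2)$ into \eqref{eq:f_optimal}. Then $(z+R)/(z-R)=-\ee^{w}$, so on the real axis ($x\in(-R,R)$, branch with imaginary part $\pi$) the logarithm equals $w+\ii\pi$. Hence the exponential factor becomes $\exp\big(\tfrac{\Delta}{\pi\ii}(w+\ii\pi)\big)=\ee^{\Delta}\ee^{-\ii\Delta\sinh\zeta}$, using $\Delta=\pi m/\sin(\pi m)$ from the transcendental relation. Next,
\[
z^2+1=\frac{R^2\sinh^2(w/2)+\cosh^2(w/2)}{\cosh^2(w/2)},
\]
and $\phi_R'=\tfrac{R\pi}{2}\cosh\zeta/\cosh^2(w/2)$. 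Together with the prefactor $\varepsilon/(1-m)=\ee^{-m\Delta}$ these combine to give \eqref{eqn:feval}. The right-hand side is manifestly holomorphic wherever its denominator is nonzero. That avoids any branch-cut issue: \eqref{eqn:feval} itself serves as the definition of the continuation.

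\emph{Item (v).} On the real axis, $\ee^{-\ii\Delta\sinh\xi}$ is unimodular, and the same algebra gives \eqref{eq:transformed_kernel_modulus}. Evenness is immediate. For monotonicity, note that $\phi_R'/(1+\phi_R^2)=\tfrac{\pi}{2}\cosh\xi\,/\big(\cosh^2 u+R^2\sinh^2 u\big)$ with $u=\tfrac{\pi}{2}\sinh\xi$. I would then show that the log-derivative is negative, using $(R^2+1)\sinh(2u)\,\tfrac{\pi}{2}\cosh^2\xi>\sinh\xi\,(\cosh^2u+R^2\sinh^2u)$, which follows from $\sinh 2u\geq 2u=\pi\sinh\xi$.

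\emph{Item (ii).} Write $\tanh(w/2)=\sinh a/(\cosh a+\cos b)$ plus an imaginary part $\sin b/(\cosh a+\cos b)$, where $w=a+\ii b$ and $b=\pi\cosh\xi\sin y$. This requires a bound on $\abs{\sin b}/(\cosh a+\cos b)$ in terms of $\sin\eta$. I would use $\cosh a\geq 1+a^2/2$ together with $a^2\geq\pi^2\sinh^2\xi$, and compare $b^2=\pi^2\cosh^2\xi\sin^2y$. Since $\cos b\geq 1-b^2/2$, this gives $\cosh a+\cos b\geq 2+\tfrac{\pi^2}{2}(\sinh^2\xi-\cosh^2\xi\sin^2 y)$. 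I expect this step to need care: the target denominator $2-\tfrac{\pi^2}{2}\sin^2\eta$ suggests exactly this Taylor-type lower bound, with the $\cosh^2\xi$ growth in $b$ absorbed by the $\sinh^2\xi$ term via $\sinh^2\xi-\cosh^2\xi\sin^2y=\cosh^2\xi\cos^2 y-1\geq-\sin^2 y$.

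\emph{Item (iv).} Finally, bound \eqref{eqn:feval} on the strip. For the numerator, $\abs{\ee^{-\ii\Delta\sinh\zeta}}=\ee^{\Delta\cosh\xi\sin y}$. For the denominator I would write $D=\cosh^2 w'+R^2\sinh^2 w'$ with $w'=\tfrac{\pi}{2}\sinh\zeta$ and use
\[
\abs{D}\geq\abs{\cosh w'}^2\big(1-(1+R^2)\abs{\tanh w'}^2\cdots\big),
\]
which is where the factor $1-\tfrac{\pi^2}{4}(1+R^2)\sin^2\eta$ enters. The resulting decay is like $\cosh\xi\,\ee^{\Delta\cosh\xi\sin\eta-\pi\cosh\xi\cos\eta}$, which is integrable precisely because of the condition $\Delta\sin\eta<\pi\cos\eta$ in \eqref{eq:strip_bound_conditions}. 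Integrating via $s=\sinh\xi$ produces the factor $1/(\pi\cos\eta-\Delta\sin\eta)$. The main obstacle is obtaining the uniform denominator lower bound near $\xi=0$, where the exponential decay is not yet active. There I would use the condition $\tfrac{\pi^2}{4}(1+R^2)\sin^2\eta<1$ to keep $D$ bounded away from zero. Uniform decay as $\abs{\operatorname{Re}\zeta}\to\infty$ then follows from the same pointwise bound.
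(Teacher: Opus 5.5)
Your overall route is the paper's: factor $\phi_R=\phi_1\circ\phi_2$, pull the logarithm back to $w+\ii\pi$ via $(\phi_1(w)+R)/(\phi_1(w)-R)=-\ee^{w}$, and use $\varepsilon/(1-m)=\ee^{-m\Delta}$. Your derivation of \eqref{eqn:feval} is correct. The gaps are in the estimates, and the central one is the lower bound for $D=\cosh^2w'+R^2\sinh^2w'$, where $w'=\tfrac{\pi}{2}\sinh\zeta=u+\ii v$.

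The bound you sketch, $\abs{D}\geq\abs{\cosh w'}^2\big(1-(1+R^2)\abs{\tanh w'}^2\cdots\big)$, is negative once $\abs{\tanh w'}^2>1/(1+R^2)$. Since $\abs{\tanh w'}\to1$ as $\abs{u}\to\infty$, it fails for all large $\abs{\xi}$. The real difficulty is not near $\xi=0$ but uniformity in $\xi$ when $y\neq0$: $v=\tfrac{\pi}{2}\cosh\xi\sin y$ grows with $\xi$, so even $\operatorname{Re}D$, which contains $\cos 2v$, changes sign.

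The paper instead uses the exact identity $\abs{D}^2=\big((1+R^2)\sinh^2u+1-(1+R^2)\sin^2v\big)^2+4R^2(1+R^2)\sinh^2u\sin^2v$ and drops the nonnegative second term. It then controls $v$ by $u$: since $\tfrac{\pi}{2}\cos\eta>1$, one has $\abs{u}\geq\abs{\xi}$, so $\sin^2v\leq\tfrac{\pi^2}{4}\sin^2\eta\cosh^2\xi\leq\tfrac{\pi^2}{4}\sin^2\eta\cosh^2u$. This gives $\abs{D}\geq\tfrac14\big(1-\tfrac{\pi^2}{4}(1+R^2)\sin^2\eta\big)\ee^{\pi\cos\eta\abs{\sinh\xi}}$ for all $\xi$.

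Without such a bound you get neither the constant in \eqref{eq:transformed_kernel_strip_integral} nor the holomorphy claim in (iii). Saying the formula is ``holomorphic wherever the denominator is nonzero'' leaves open whether $D$ vanishes in $\Sigma_{\eta_0}$. The paper locates the zeros of $D$ in $\Sigma_{\pi/2}$ at $\pm\ii\eta_R$ and uses $\eta_0<\eta_R$. Note also that the decay is $\ee^{-\pi\cos\eta\abs{\sinh\xi}}$, not $\ee^{-\pi\cos\eta\cosh\xi}$. The stated constant comes from $\cosh\xi\leq1+\abs{\sinh\xi}$ in the numerator exponent, followed by the substitution $t=\sinh\xi$.

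The other items also contain steps that fail.

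\emph{Item (ii).} With $a=\pi\sinh\xi\cos y$ you have $a^2\leq\pi^2\sinh^2\xi$, not $\geq$. More importantly, your final bound $\cosh a+\cos b\geq2-\tfrac{\pi^2}{2}\sin^2\eta$ is a constant, while $\abs{\sin b}\leq\abs{b}\leq\pi\cosh\xi\sin\eta$ grows, so no uniform bound on $\abs{\operatorname{Im}\phi_R}$ results. You must keep the term $\tfrac{\pi^2}{2}\cos(2y)\sinh^2\xi$ in $\tfrac12(a^2-b^2)=\tfrac{\pi^2}{2}\big(\cos(2y)\sinh^2\xi-\sin^2y\big)$. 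Then use $\cosh\xi\leq1+\tfrac12\sinh^2\xi$ together with $\tfrac{\pi^2}{2}\cos2\eta\geq\tfrac{\pi^2}{4}>1$ (this is where $\eta<\pi/6$ enters) to get $\cosh a+\cos b\geq(2-\tfrac{\pi^2}{2}\sin^2\eta)\cosh\xi$.

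\emph{Item (v).} Your inequality is true, but it does not follow from $\sinh2u\geq2u$. That substitution reduces it to $\tfrac{\pi^2}{2}(1+R^2)\cosh^2\xi>\cosh^2u+R^2\sinh^2u$, which is false for moderately large $\xi$ because the right side grows like $\ee^{\pi\sinh\xi}$. Use instead $\cosh^2u+R^2\sinh^2u\leq(1+R^2)\cosh^2u$. This reduces the claim to $\pi\cosh^2\xi\tanh u>\sinh\xi$, which holds since $u=\tfrac{\pi}{2}\sinh\xi\geq\xi$. You also dropped a factor $R$ in $\phi_R'/(1+\phi_R^2)$.

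\emph{Item (i).} Looking only at $\operatorname{Im}w$ cannot exclude the poles at large $\abs{\xi}$, and retreating to a narrower strip does not fix this. The missing observation is that $\operatorname{Re}w=\pi\sinh\xi\cos y=0$ forces $\xi=0$ on $\Sigma_{\pi/2}$, after which $\sin y=2n+1$ is impossible. This gives holomorphy on all of $\Sigma_{\pi/2}$, as the statement claims.
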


\begin{proof}
We will prove the map properties in turn.
\paragraph{Analytic structure and image of the real axis \ref{no:analyticstructure}.}
We use the factorization $\phi_R=\phi_1\circ\phi_2$ from \eqref{eq:phi1phi2}.
The map $\phi_2=\pi\sinh$ is entire, and the poles of $\phi_1$ occur at $w=(2n+1)\ii\pi$, $n\in\Z$.
If $\zeta=x+\ii y\in\Sigma_{\pi/2}$ satisfied $\pi\sinh\zeta=(2n+1)\ii\pi$, then
\begin{equation}
\sinh x\cos y=0,\qquad \cosh x\sin y=2n+1.
\end{equation}
Since $\cos y>0$, the first equality gives $x=0$, while the second would give $\sin y=2n+1$, which is impossible for $\abs{y}<\pi/2$.
Thus $\phi_R$ is holomorphic on $\Sigma_{\pi/2}$. \\

Since $\sinh\colon\R\to\R$ is a bijection, we have
\begin{equation}\label{eqn:phi2realtoreal}
\phi_2(\R)=\pi\sinh(\R)=\R.
\end{equation}
Moreover, $\tanh\colon\R\to(-1,1)$ is a bijection, and hence
\begin{equation}
\phi_R(\R)=\phi_1(\phi_2(\R))=(-R,R).
\end{equation}
By the chain rule,
\begin{equation}\label{eq:phi_strip_derivative}
\phi_R'(\zeta)
=\frac{R\pi}{2}
\frac{\cosh\zeta}{\cosh^2\!\left(\frac{\pi}{2}\sinh\zeta\right)}
\qquad (\zeta\in\Sigma_{\pi/2}),
\end{equation}
which is positive for $\zeta\in\R$.

\paragraph{Image of the symmetric strip \ref{no:symmetricimage}}
Let $\zeta=x+\ii y\in\overline{\Sigma}_\eta$, recall that $\eta<\eta_0\leq\pi/6$, and write
\begin{equation}
w=\phi_2(\zeta)=u+\ii v,\qquad
u=\pi\sinh x\cos y,\qquad v=\pi\cosh x\sin y.
\end{equation}
Then
\begin{equation}
\abs{\operatorname{Im}\phi_R(\zeta)}
=R\frac{\abs{\sin v}}{\cosh u+\cos v}.
\end{equation}
Using $\cosh u\geq1+u^2/2$ and $\cos v\geq1-v^2/2$, we obtain
\begin{align}
\cosh u+\cos v
\geq&\; 2+\frac{u^2-v^2}{2} \nonumber\\
=&\; 2-\frac{\pi^2}{2}\sin^2y
+\frac{\pi^2}{2}\cos(2y)\sinh^2x\nonumber\\
\geq &\ 2-\frac{\pi^2}{2}\sin^2\eta
+\frac{\pi^2}{2}\cos(2\eta)\sinh^2x\nonumber\\
\geq&\; 
\left(2-\frac{\pi^2}{2}\sin^2\eta\right)\cosh x.
\end{align}
For the last inequality, we used $\cosh x\leq1+\frac12\sinh^2x$ and
\begin{equation}
\frac{\pi^2}{2}\cos(2\eta)
\geq\frac{\pi^2}{4}
>1
\geq\frac12\left(2-\frac{\pi^2}{2}\sin^2\eta\right).
\end{equation}
The factor $2-\frac{\pi^2}{2}\sin^2\eta$ is positive.
Together with $\abs{\sin v}\leq\abs{v}\leq\pi\cosh x\sin\eta$, this proves \eqref{eq:phi_strip_image}.

\paragraph{Analyticity of the transformed kernel function \ref{no:transformedkernel}}
We again consider the factorization $f_\star\circ\phi_R=f_\star\circ\phi_1\circ\phi_2$.
Since
\begin{equation}
\frac{\phi_1(w)+R}{\phi_1(w)-R}=-\ee^w
\qquad (w\in\Sigma_\pi),
\end{equation}
the branch of the logarithm prescribed in Theorem~\ref{thm:optimal-kernel} satisfies, near the real axis,
\begin{equation}\label{eq:pullback_logarithm}
\log\frac{\phi_1(w)+R}{\phi_1(w)-R}=w+\ii\pi.
\end{equation}
Indeed, for $-\pi<\operatorname{Im}w<0$, the point $\phi_1(w)$ lies in the lower half-plane and $w+\ii\pi$ is the principal logarithm of $-\ee^w$.
The identity then extends through the real axis by analytic continuation.
Since $m$ and $\Delta$ are the optimal parameters determined by $\varepsilon$, we have $\varepsilon=(1-m)\ee^{-m\Delta}$ and near the real axis, obtain
\begin{equation}\label{eq:kernel_pullback_branch}
f_\star(\phi_1(w))
=\frac{2}{\sqrt{2\pi}}\ee^{(1-m)\Delta}
\frac{\exp\bigg(\dfrac{\Delta}{\pi\ii}w\bigg)}
{1+R^2\tanh^2(w/2)}.
\end{equation}
Substituting $w=\pi\sinh\zeta$ and multiplying by $\phi_R'(\zeta)/\sqrt{2\pi}$ using \eqref{eq:phi_strip_derivative} gives \eqref{eqn:feval}. \\

By Property~\ref{no:analyticstructure}, $\cosh(\frac{\pi}{2}\sinh\zeta)$ does not vanish in $\Sigma_{\pi/2}$.
Consequently, the denominator in \eqref{eqn:feval} vanishes there precisely when
\begin{equation}
R\tanh\!\left(\frac{\pi}{2}\sinh\zeta\right)=\pm\ii,
\end{equation}
or, equivalently,
\begin{equation}
\pi\sinh\zeta
=\pm2\ii\arctan(1/R)+2\pi\ii n
\qquad (n\in\Z).
\end{equation}
Writing $\zeta=x+\ii y\in\Sigma_{\pi/2}$, the vanishing real part again forces $x=0$.
Since $\pi\sin y\in(-\pi,\pi)$ and $2\arctan(1/R)\in(0,\pi)$, only $n=0$ is possible, giving
\begin{equation}
\zeta=\pm\ii\arcsin\left(\frac{2}{\pi}\arctan\frac1R\right)
=\pm\ii\eta_R.
\end{equation}
The denominator therefore has no zeros in $\Sigma_{\eta_0}$, and \eqref{eqn:feval} defines the claimed holomorphic continuation.

\paragraph{Bounds on the symmetric strip \ref{no:boundary}}
Let $\zeta=x+\ii y\in\overline{\Sigma}_\eta$ and assume \eqref{eq:strip_bound_conditions} holds.
Write
\begin{equation}
\frac{\pi}{2}\sinh\zeta=u+\ii v,\qquad
u=\frac{\pi}{2}\sinh x\cos y,\qquad
v=\frac{\pi}{2}\cosh x\sin y.
\end{equation}
A direct calculation gives
\begin{equation}
\begin{aligned}
&\abs{\cosh^2(u+\ii v)+R^2\sinh^2(u+\ii v)}^2\\
&\quad=
\left((1+R^2)\sinh^2u+1-(1+R^2)\sin^2v\right)^2
+4R^2(1+R^2)\sinh^2u\sin^2v.
\end{aligned}
\end{equation}
In particular,
\begin{equation}
\abs{\cosh^2(u+\ii v)+R^2\sinh^2(u+\ii v)}
\geq (1+R^2)\sinh^2u+1-(1+R^2)\sin^2v.
\end{equation}
Since $\frac{\pi}{2}\cos\eta>1$, we have
\begin{equation}
\abs{u}\geq\frac{\pi}{2}\cos\eta\,\abs{\sinh x}\geq\abs{x},
\end{equation}
and hence
\begin{equation}
\sin^2v\leq v^2
\leq\frac{\pi^2}{4}\sin^2\eta\,\cosh^2x
\leq\frac{\pi^2}{4}\sin^2\eta\,\cosh^2u.
\end{equation}
It follows that
\begin{align}
\abs{\cosh^2(u+\ii v)+R^2\sinh^2(u+\ii v)}\geq &\; 
\left(1-\frac{\pi^2}{4}(1+R^2)\sin^2\eta\right)\cosh^2u
+R^2\sinh^2u \nonumber\\
\geq &\; 
\frac14\left(1-\frac{\pi^2}{4}(1+R^2)\sin^2\eta\right)
\ee^{\pi\cos\eta\,\abs{\sinh x}}.\label{eq:transformed_kernel_denominator_bound}
\end{align}
Here the coefficient is positive by \eqref{eq:strip_bound_conditions}.
Moreover,
\begin{equation}
\abs{\cosh(x+\ii y)}\leq\cosh x,
\qquad
\abs{\ee^{-\ii\Delta\sinh(x+\ii y)}}
=\ee^{\Delta\cosh x\sin y}
\leq\ee^{\Delta\cosh x\sin\eta}.
\end{equation}
Combining these inequalities with \eqref{eqn:feval} and using $\cosh x\leq1+\abs{\sinh x}$ gives
\begin{equation}\label{eq:transformed_kernel_decay}
\abs{F_\star(x+\ii y)}
\leq
\frac{2R\ee^{(1-m)\Delta+\Delta\sin\eta}}
{1-\frac{\pi^2}{4}(1+R^2)\sin^2\eta}
\cosh x\,
\ee^{-(\pi\cos\eta-\Delta\sin\eta)\abs{\sinh x}}.
\end{equation}
Since $\pi\cos\eta-\Delta\sin\eta>0$, the right-hand side tends to zero as $\abs{x}\to\infty$, uniformly for $\abs{y}\leq\eta$.
Finally, the substitution $t=\sinh x$ gives
\begin{equation}
\int_\R\cosh x\,
\ee^{-(\pi\cos\eta-\Delta\sin\eta)\abs{\sinh x}}\,\mathrm{d}x
=\frac{2}{\pi\cos\eta-\Delta\sin\eta},
\end{equation}
which proves \eqref{eq:transformed_kernel_strip_integral}.

\paragraph{Real-axis modulus \ref{no:realaxis}}
On the real axis,
\begin{equation}
\abs{\ee^{-\ii\Delta\sinh\xi}}=1.
\end{equation}
Equations \eqref{eqn:feval} and \eqref{eq:phi_strip_derivative} therefore give the claimed expression \eqref{eq:transformed_kernel_modulus}.
This function is positive by Property~\ref{no:realimage}, and even because $\phi_R$ is odd and $\phi_R'$ is even. \\

To prove strict monotonicity, put $t=\sinh\xi>0$.
Then
\begin{equation}
\abs{F_\star(\operatorname{arcsinh}t)}
=\frac{R\ee^{(1-m)\Delta}}{2}
\frac{\sqrt{1+t^2}}
{1+(1+R^2)\sinh^2\!\left(\frac{\pi}{2}t\right)},
\end{equation}
and
\begin{align}
\frac{\mathrm{d}}{\mathrm{d}t}
\log\abs{F_\star(\operatorname{arcsinh}t)}
=&\; \frac{t}{1+t^2}
-\frac{\pi(1+R^2)\sinh\!\big(\frac{\pi}{2}t\big)
\cosh\!\left(\frac{\pi}{2}t\right)}
{1+(1+R^2)\sinh^2\!\big(\frac{\pi}{2}t\big)} \nonumber\\
\leq &\;\frac{t}{1+t^2}
-\pi\tanh\!\big(\tfrac{\pi}{2}t\big) \nonumber\\
\leq &\; \frac{t}{1+t^2}-\frac{\pi t}{\sqrt{1+t^2}}<0.
\end{align}
The first inequality uses
$1+(1+R^2)\sinh^2(\frac{\pi}{2}t)\leq(1+R^2)\cosh^2(\frac{\pi}{2}t)$.
For the second, we used $\frac{\pi}{2}t\geq\operatorname{arcsinh}t$ and the monotonicity of $\tanh$.
Since $\sinh$ is strictly increasing on $\R$, the real-axis modulus is strictly decreasing for $\xi>0$.
\end{proof}

\subsection{Discretization and truncation errors}
\label{subsec:quadrature_errors}

We now establish the discretization and truncation error bounds.
We first need the following lemma, established in
\cite[Lems.~3--4]{low2025}.

\begin{lem}[Properties of $\mc{U}(z;t)$]\label{lem:propertiesofU}
Let $\tau>0$. Suppose that $\m{L}$ and $\m{H}$ satisfy the matrix hypotheses of Theorem~\ref{thm:LCHS}. Then, for every $t\in[0,\tau]$, the time-ordered exponential $\mc{U}(\ccdot;t)$, defined by \eqref{eq:UE}, is entire.
Moreover, if $\m{L}(t)\succeq0$ for almost every $t\in[0,\tau]$, then
\begin{equation}
\norm{\mc{U}(z;t)}_2
\leq
\begin{dcases}
\exp\!\left(\abs{\operatorname{Im}z}
\int_0^t\norm{\m{L}(s)}_2\,\mathrm{d}s\right),
&\operatorname{Im}z\geq0,\\
1,&\operatorname{Im}z<0,
\end{dcases}
\qquad (z\in\C,\ t\in[0,\tau]).
\end{equation}
\end{lem}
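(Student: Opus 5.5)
The plan is to prove the two assertions in sequence: first that $\mc U(\ccdot;t)$ is entire, then the norm bound. The paper attributes both to \cite[Lems.~3--4]{low2025}, so the aim is only to sketch a self-contained verification under the present hypotheses, where $\m L$ and $\m H$ are measurable and integrable.

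\textbf{Entirety.} Write $\m{K}(z;s)\coloneqq z\m{L}(s)+\m{H}(s)$ and expand $\mc U(z;t)$ as the Dyson series of \eqref{eq:U} with $\m A$ replaced by $\ii\m K(z;\ccdot)$. The $j$-th term is a polynomial in $z$ of degree at most $j$, with matrix coefficients given by iterated integrals of products of $\m L$ and $\m H$. On a disc $\abs{z}\le r$ we have $\norm{\m K(z;s)}_2\le r\norm{\m L(s)}_2+\norm{\m H(s)}_2\eqqcolon k_r(s)$, and $k_r\in L^1$. Hence the $j$-th term is bounded by $\big(\int_0^t k_r\big)^j/j!$. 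The series therefore converges uniformly on compact sets, and by Weierstrass's theorem the limit is entire. The same bound also shows that the series defines the unique absolutely continuous solution of
\[
\partial_t \mc U(z;t)=-\ii\m K(z;t)\mc U(z;t), \qquad \mc U(z;0)=\m I,
\]
which holds for almost every $t$ (Theorem~\ref{thm:PB}).

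\textbf{Norm bound.} Fix $z=x+\ii y$ and a vector $\m v$, and set $\m w(t)=\mc U(z;t)\m v$. The map $t\mapsto\norm{\m w(t)}_2^2$ is absolutely continuous. Using $\m K=x\m L+\m H+\ii y\m L$, and the fact that $x\m L+\m H$ is Hermitian and so contributes nothing to the real part, we get for almost every $t$
\[
\frac{\mathrm d}{\mathrm dt}\norm{\m w}_2^2
=2\operatorname{Re}\langle \m w,-\ii\m K\m w\rangle
=2y\langle \m w,\m L\m w\rangle .
\]
The bound then splits by the sign of $y$.
\begin{itemize}
\item If $y<0$, then $\m L\succeq0$ makes the right-hand side nonpositive, so $\norm{\m w(t)}_2\le\norm{\m v}_2$.
\item If $y\ge0$, the right-hand side is at most $2y\norm{\m L(t)}_2\norm{\m w}_2^2$. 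Grönwall's inequality for absolutely continuous functions then gives $\norm{\m w(t)}_2\le \exp\big(y\int_0^t\norm{\m L(s)}_2\,\mathrm ds\big)\norm{\m v}_2$.
\end{itemize}
Taking the supremum over unit vectors $\m v$ yields the stated operator-norm bound in each case.

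\textbf{Main obstacle.} The main difficulty is justifying differentiation under only measurability and integrability. This is handled by working with the absolutely continuous solution and almost-everywhere derivatives instead of classical derivatives. No other step requires care.
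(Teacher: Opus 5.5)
Your proof is correct. The paper gives no argument of its own for this lemma; it defers entirely to \cite[Lems.~3--4]{low2025}. The comparison is therefore between a citation and your self-contained verification.

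Your route is the standard one. For entirety you use the Dyson-series majorant $\big(\int_0^t k_r\big)^j/j!$ and Weierstrass's theorem. For the norm bound you use the energy identity $\frac{\mathrm d}{\mathrm dt}\norm{\m w(t)}_2^2=2\operatorname{Im}z\,\langle\m w(t),\m L(t)\m w(t)\rangle$ followed by Gr\"onwall.

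Its main advantage over the bare citation is that it is visibly adapted to the hypotheses of Theorem~\ref{thm:LCHS}: merely measurable and integrable $\m L,\m H$, absolutely continuous solutions, and almost-everywhere derivatives. A reader therefore need not check that the cited lemmas hold under equally weak assumptions.

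It also makes two points transparent that the statement leaves implicit:
\begin{itemize}
\item Positivity of $\m L$ is used only in the case $\operatorname{Im}z<0$.
\item Replacing $\m L$ by $\m L+\eta\m I\succeq\eta\m I$ in the same identity immediately gives the strengthened contraction $\norm{\mc U_\eta(x-\ii y;t)}_2\le\ee^{-\eta yt}$. The paper also imports this bound from \cite[Lem.~4]{low2025} in Step~1 of the proof of Theorem~\ref{thm:LCHS}.
\end{itemize}
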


Recall the continuous integral from Theorem~\ref{thm:LCHS}, with $f=f_\star$:
\begin{equation}
    \m{O}_{f_\star}(t;R)\coloneqq \frac{1}{\sqrt{2\pi}}\int_{-R}^R f_{\star,0}(x)\mc{U}(x;t) \,\mathrm{d}x
\end{equation}
With the change of variables $x = \phi_R(\xi)$, we have $\mathrm{d}x = \phi_R'(\xi)\,\mathrm{d}\xi$ and therefore
\begin{equation}
    \m{O}_{f_\star}(t;R) = \int_{\R}\frac{1}{\sqrt{2\pi}} f_{\star,0}(\phi_R(\xi))\mc{U}(\phi_R(\xi);t) \phi_R'(\xi)\,\mathrm{d}\xi.
\end{equation}
The transformed integration range is $\R$, since $\phi_R(\R)=(-R,R)$. 

We now show that after the change of variables, the discretization error admits an exponentially convergent trapezoidal-rule bound.
\begin{lem}[Discretization error after the change of variables]\label{lem:quadrature}
Suppose that $\m{L}$ and $\m{H}$ satisfy the matrix hypotheses of Theorem~\ref{thm:LCHS} and fix $\eta\in(0,\eta_0)$, where
$\eta_0$ is defined in \eqref{eq:quadrature_strip_width}.
Define
\begin{equation}\label{eq:transformed_integrand}
\Phi(\zeta;\tau)
\coloneqq
F_\star(\zeta)\mc{U}(\phi_R(\zeta);\tau),
\qquad \zeta\in\Sigma_{\eta_0},
\end{equation}
where $F_\star$ is given in \eqref{eqn:feval}.
With $M_\eta$ defined by \eqref{eq:discretization_strip_constant}, for every $\delta_\xi>0$,
\begin{equation}\label{eq:strip_discretization_error}
\left\|
\int_{\R}\Phi(\xi;\tau)\,\mathrm{d}\xi
-
\delta_\xi\sum_{n\in\Z}\Phi(n\delta_\xi;\tau)
\right\|_2
\leq
\frac{2M_\eta}{\exp(2\pi\eta/\delta_\xi)-1},
\end{equation}
where the series converges absolutely.
\end{lem}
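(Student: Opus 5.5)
The plan is to apply the standard trapezoidal-rule error estimate for functions holomorphic and integrable in a strip (the operator-valued analogue of \cite{trapezoidal_rule}, as used in \cite[Lem.~10]{low2025}). That estimate states: if $\Phi$ is holomorphic on $\Sigma_{\eta'}$ for some $\eta'>\eta$, tends to zero uniformly as $\abs{\operatorname{Re}\zeta}\to\infty$ in $\overline{\Sigma}_\eta$, and satisfies $\sup_{\abs{y}<\eta}\int_\R\norm{\Phi(x+\ii y)}_2\,\mathrm{d}x\leq M$, then the trapezoidal error is at most $2M/(\ee^{2\pi\eta/\delta_\xi}-1)$. So it suffices to check these hypotheses for $\Phi(\ccdot;\tau)$ with $M=M_\eta$.

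\textbf{Holomorphy.} By Lemma~\ref{lem:phi}\ref{no:transformedkernel}, $F_\star$ is holomorphic on $\Sigma_{\eta_0}$. By Lemma~\ref{lem:phi}\ref{no:analyticstructure}, $\phi_R$ is holomorphic on $\Sigma_{\pi/2}\supset\Sigma_{\eta_0}$. Since $\mc{U}(\ccdot;\tau)$ is entire by Lemma~\ref{lem:propertiesofU}, the composition $\mc{U}(\phi_R(\zeta);\tau)$ is holomorphic on $\Sigma_{\eta_0}$, and hence so is $\Phi$. Because $\eta<\eta_0$, we have the required margin, for instance on $\Sigma_{(\eta+\eta_0)/2}$. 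On the real axis, $\Phi(\xi;\tau)$ coincides with the transformed integrand of $\m{O}_{f_\star}$.

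\textbf{Bounds.} For $\zeta\in\overline{\Sigma}_\eta$, Lemma~\ref{lem:propertiesofU} gives $\norm{\mc{U}(\phi_R(\zeta);\tau)}_2\leq\exp\big(\abs{\operatorname{Im}\phi_R(\zeta)}\int_0^\tau\norm{\m{L}(s)}_2\,\mathrm{d}s\big)$, using only the worse of the two cases. Lemma~\ref{lem:phi}\ref{no:symmetricimage} then bounds $\abs{\operatorname{Im}\phi_R(\zeta)}$ uniformly by $R\pi\sin\eta/(2-\frac{\pi^2}{2}\sin^2\eta)$. This yields precisely the exponential factor in $M_\eta$, uniformly in the strip. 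Multiplying this by the strip $L^1$ bound \eqref{eq:transformed_kernel_strip_integral} for $F_\star$ gives $\sup_{\abs{y}<\eta}\int_\R\norm{\Phi(x+\ii y;\tau)}_2\,\mathrm{d}x\leq M_\eta$. Uniform decay of $\Phi$ as $\abs{\operatorname{Re}\zeta}\to\infty$ follows from that of $F_\star$ in Lemma~\ref{lem:phi}\ref{no:boundary}, since the $\mc{U}$ factor is uniformly bounded.

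\textbf{Proof of the trapezoidal estimate.} The main point requiring care is the estimate itself in operator form, in case it is not quoted directly. I would prove it by the standard contour argument. Write the error as a contour integral of $\Phi(\zeta)\,g(\zeta)$ around the rectangle with vertices $\pm X\pm\ii\eta$. Here $g$ is $\mp\ii\pi\delta_\xi^{-1}\cot(\pi\zeta/\delta_\xi)$ minus the sign-appropriate constant, equivalently $2\pi\ii/(1-\ee^{\mp2\pi\ii\zeta/\delta_\xi})$ on the upper and lower edges. Let $X\to\infty$ along $X=(k+\tfrac12)\delta_\xi$ so that the vertical sides vanish by uniform decay. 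On the horizontal lines, the kernel is bounded by $1/(\ee^{2\pi\eta/\delta_\xi}-1)$. Each horizontal edge then contributes at most $M_\eta/(\ee^{2\pi\eta/\delta_\xi}-1)$, which gives the factor $2$. Since the strip bound holds only on open lines $\abs{y}<\eta$, I would run the argument at $\eta'<\eta$ and let $\eta'\uparrow\eta$. Finally, absolute convergence of $\sum_n\Phi(n\delta_\xi;\tau)$ follows from $\norm{\Phi(\xi;\tau)}_2\leq\abs{F_\star(\xi)}$ on the real axis (using unitarity) together with the double-exponential decay in \eqref{eq:transformed_kernel_decay}.
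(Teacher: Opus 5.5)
Your proposal is correct and follows the paper's proof. It obtains holomorphy from Lemma~\ref{lem:phi}\ref{no:analyticstructure},\ref{no:transformedkernel} and Lemma~\ref{lem:propertiesofU}, and gets $M_\eta$ by multiplying the uniform bound on $\mc{U}(\phi_R(\zeta);\tau)$ from Lemma~\ref{lem:phi}\ref{no:symmetricimage} with \eqref{eq:transformed_kernel_strip_integral}, before invoking the strip trapezoidal theorem. The only difference is cosmetic: the paper scalarizes by applying \cite[Thm.~5.1]{trapezoidal_rule} to $\m{u}^{\dag}\Phi(\ccdot;\tau)\m{v}$ and taking the supremum over unit vectors, whereas you run the cotangent contour argument directly on the operator-valued function (with the harmless $\eta'\uparrow\eta$ limit), which is equally valid.
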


\begin{proof}
By Lemmas~\ref{lem:phi}\ref{no:analyticstructure},~\ref{lem:phi}\ref{no:transformedkernel}, and~\ref{lem:propertiesofU}, the function $\Phi(\ccdot;\tau)$ is holomorphic on $\Sigma_{\eta}$.
Moreover, Lemma~\ref{lem:phi}\ref{no:symmetricimage} and Lemma~\ref{lem:propertiesofU} give
\begin{equation}
\norm{\mc{U}(\phi_R(\zeta);\tau)}_2
\leq
\exp\!\left(
\frac{R\pi\sin\eta}{2-\frac{\pi^2}{2}\sin^2\eta}
\int_0^\tau\norm{\m{L}(s)}_2\,\mathrm{d}s
\right),
\qquad \zeta\in\overline{\Sigma}_\eta.
\end{equation}
Combining this estimate with Lemma~\ref{lem:phi}\ref{no:boundary}, we see that $\Phi(\zeta;\tau)$ tends to zero uniformly as $\abs{\operatorname{Re}\zeta}\to\infty$ in $\overline{\Sigma}_\eta$, and
\begin{equation}
\sup_{\abs{y}\leq\eta}
\int_\R\norm{\Phi(x+\ii y;\tau)}_2\,\mathrm{d}x
\leq M_\eta.
\end{equation}
The decay estimate \eqref{eq:transformed_kernel_decay} also implies absolute convergence of the trapezoidal sum for every $\delta_\xi>0$. \\

For unit vectors $\m{u},\m{v}\in\C^{2^n}$, the scalar function $\m{u}^{\dag}\Phi(\ccdot;\tau)\m{v}$ satisfies the same analyticity, decay, and integral bounds.
Applying \cite[Thm.~5.1]{trapezoidal_rule} to this function and taking the supremum over $\m{u}$ and $\m{v}$ gives \eqref{eq:strip_discretization_error}.
\end{proof}

The following lemma shows that the infinite trapezoidal sum may be truncated with only a double-logarithmic growth in the truncation range: for fixed kernel parameters, the value of $N\delta_\xi$ needed to achieve error $\varepsilon_{\rm trunc}$ grows as $O(\log\log(1/\varepsilon_{\rm trunc}))$ as $\varepsilon_{\rm trunc} \downarrow 0$.

\begin{lem}[Truncation error]\label{lem:truncation}
Suppose that $\m{L}$ and $\m{H}$ satisfy the matrix hypotheses of Theorem~\ref{thm:LCHS}, and let $\Phi$ be defined by \eqref{eq:transformed_integrand}.
Then, for every $N\in\N$ and $\delta_\xi>0$,
\begin{align}
\Bigg\lVert
\delta_\xi\sum_{n\in\Z}\Phi(n\delta_\xi;\tau)
-
\delta_\xi\sum_{\substack{n\in \Z \\|n|\leq N}}\Phi(n\delta_\xi;\tau)
\Bigg\rVert_2
\leq &\; 
\frac{2\ee^{(1-m)\Delta}}{\pi}
\big(\arctan R-\arctan\phi_R(N\delta_\xi)\big)\nonumber\\
\leq &\; 
\frac{4R\ee^{(1-m)\Delta}}{\pi}
\exp\!\big(-\pi\sinh(N\delta_\xi)\big).\label{eq:quadrature_truncation}
\end{align}
\end{lem}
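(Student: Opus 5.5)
The tail sum can be bounded termwise: the propagator is unitary on the real axis, and on the real axis $\abs{F_\star}$ is even and decreasing, so the tail sum is dominated by the tail integral.

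For real $\xi$, $\phi_R(\xi)\in(-R,R)\subset\R$ by Lemma~\ref{lem:phi}\ref{no:realimage}, so $\mc{U}(\phi_R(\xi);\tau)$ is unitary. Hence
\begin{equation}
\Bigg\lVert\delta_\xi\sum_{\abs{n}>N}\Phi(n\delta_\xi;\tau)\Bigg\rVert_2\leq 2\delta_\xi\sum_{n=N+1}^{\infty}\abs{F_\star(n\delta_\xi)},
\end{equation}
where we used that $\abs{F_\star}$ is even (Lemma~\ref{lem:phi}\ref{no:realaxis}). Since $\abs{F_\star}$ is strictly decreasing on $(0,\infty)$, each term satisfies $\delta_\xi\abs{F_\star(n\delta_\xi)}\leq\int_{(n-1)\delta_\xi}^{n\delta_\xi}\abs{F_\star(\xi)}\,\mathrm{d}\xi$. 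Summing over $n\geq N+1$ bounds the right-hand side by $2\int_{N\delta_\xi}^\infty\abs{F_\star(\xi)}\,\mathrm{d}\xi$.

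Next we evaluate this integral using \eqref{eq:transformed_kernel_modulus}. Substituting $x=\phi_R(\xi)$ gives
\begin{equation}
2\int_{N\delta_\xi}^{\infty}\frac{\ee^{(1-m)\Delta}}{\pi}\frac{\phi_R'(\xi)}{1+\phi_R(\xi)^2}\,\mathrm{d}\xi=\frac{2\ee^{(1-m)\Delta}}{\pi}\big(\arctan R-\arctan\phi_R(N\delta_\xi)\big),
\end{equation}
because $\phi_R(\xi)\uparrow R$ as $\xi\to\infty$. This is the first inequality.

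For the second inequality, the mean value theorem applied to $\arctan$ (derivative at most $1$) gives $\arctan R-\arctan\phi_R(N\delta_\xi)\leq R-\phi_R(N\delta_\xi)$. With $w=\pi\sinh(N\delta_\xi)\geq0$,
\begin{equation}
R-\phi_R=R\big(1-\tanh(w/2)\big)=\frac{2R}{\ee^{w}+1}\leq 2R\ee^{-w}.
\end{equation}
This yields $\frac{4R\ee^{(1-m)\Delta}}{\pi}\exp(-\pi\sinh(N\delta_\xi))$.

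No step is a real obstacle. The only care needed is in the sum-to-integral comparison, which requires monotonicity of $\abs{F_\star}$ on $(0,\infty)$; this is supplied by Lemma~\ref{lem:phi}\ref{no:realaxis}.
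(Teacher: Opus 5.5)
Your proposal is correct and follows the paper's proof. The paper uses the same route: unitarity of $\mc{U}$ on the real axis, evenness and monotonicity of $\abs{F_\star}$ to compare the tail sum with $2\int_{N\delta_\xi}^\infty\abs{F_\star(\xi)}\,\mathrm{d}\xi$, the substitution $x=\phi_R(\xi)$ to get the arctangent difference, and then $\arctan R-\arctan\phi_R(X)\leq R-\phi_R(X)=2R/(\exp(\pi\sinh X)+1)$.
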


\begin{proof}
On the real axis, $\norm{\mc{U}(\phi_R(\xi);\tau)}_2=1$.
By the triangle inequality and Lemma~\ref{lem:phi}\ref{no:realaxis},
\begin{align}
\left\|
\delta_\xi\sum_{n\in\Z}\Phi(n\delta_\xi;\tau)
-
\delta_\xi\sum_{|n|\leq N}\Phi(n\delta_\xi;\tau)
\right\|_2
\leq &\; 2\delta_\xi\sum_{n>N}\abs{F_\star(n\delta_\xi)}\notag\\
\leq &\; 2\int_{N\delta_\xi}^{\infty}\abs{F_\star(\xi)}\,\mathrm{d}\xi\notag\\
=&\;\frac{2\ee^{(1-m)\Delta}}{\pi}
\big(\arctan R-\arctan\phi_R(N\delta_\xi)\big).
\end{align}
Here we used the monotonicity and explicit modulus in \eqref{eq:transformed_kernel_modulus}, together with $\phi_R(\xi)\to R$ as $\xi\to\infty$.
For every $X\geq0$,
\begin{align}
\arctan R-\arctan\phi_R(X)
&\leq R-\phi_R(X)\notag\\
&=\frac{2R}{\exp(\pi\sinh X)+1}
\leq 2R\exp(-\pi\sinh X),
\end{align}
which proves the second inequality.
\end{proof}

The quadrature result is now immediate:
\begin{proof}[Proof of Theorem~\ref{thm:mainquadrature}]
By the change of variables above and Lemma~\ref{lem:quadrature},
\begin{equation}
\left\|
\m{O}_{f_\star}(\tau;R)
-
\delta_\xi\sum_{n\in\Z}\Phi(n\delta_\xi;\tau)
\right\|_2
\leq\frac{2M_\eta}{\exp(2\pi\eta/\delta_\xi)-1}.
\end{equation}
Lemma~\ref{lem:truncation} bounds the omitted terms by
$\frac{4R\ee^{(1-m)\Delta}}{\pi}\exp(-\pi\sinh(N\delta_\xi))$.
The definition of $\Phi$ identifies the finite sum with $\mathcal I_N(\tau;R,\delta_\xi)$, so the result follows from the triangle inequality.
\end{proof}

\subsection{LCU subnormalization}
\label{subsec:quadrature_subnormalization}

It remains to show that the discrete LCU subnormalization converges
to $\alpha_R[f_\star]$, with the error bound stated in
Corollary~\ref{cor:quadrature_subnormalization}.

\begin{proof}[Proof of Corollary~\ref{cor:quadrature_subnormalization}]
For real $\xi$, the map $\phi_R$ is strictly increasing from $\R$ onto
$(-R,R)$, so $\phi_R'(\xi)>0$. By the definition of $\Phi$ in
Lemma~\ref{lem:quadrature}, the expression on the left-hand side of
\eqref{eq:quadrature_subnormalization} before taking the limit is precisely
the coefficient $\ell^1$-norm, including the factor $1/\sqrt{2\pi}$, of the
LCU representation of $\mathcal I_N(\tau;R,\delta_\xi)$. \\

On the real axis, the LCU coefficient density is the restriction of
the holomorphic function
\begin{equation}\label{eq:normalization_density}
g(\zeta)\coloneqq\frac{\ee^{(1-m)\Delta}}{\pi}
\frac{\phi_R'(\zeta)}{1+\phi_R(\zeta)^2}.
\end{equation}
Indeed, \eqref{eqn:feval} and $\varepsilon=(1-m)\ee^{-m\Delta}$ give
$g(\xi)=|f_{\star,0}(\phi_R(\xi))|\phi_R'(\xi)/\sqrt{2\pi}$ for $\xi\in\R$.
The function $g$ is holomorphic on $\Sigma_{\eta}$, since it equals
$F_\star(\zeta)\ee^{\ii\Delta\sinh\zeta}$ there.
The absolute-value density has the explicit holomorphic extension $g$; applying the strip estimate to this function makes the convergence of the discrete cost quantitative.
Using \eqref{eq:phi_strip_derivative} and the denominator estimate \eqref{eq:transformed_kernel_denominator_bound}, we obtain
\begin{equation}
\abs{g(x+\ii y)}
\leq
\frac{2R\ee^{(1-m)\Delta}}
{1-\frac{\pi^2}{4}(1+R^2)\sin^2\eta}
\cosh x\,\ee^{-\pi\cos\eta\,\abs{\sinh x}},
\qquad \abs{y}\leq\eta.
\end{equation}
Thus $g$ tends to zero uniformly as $\abs{x}\to\infty$ in $\overline\Sigma_\eta$, and the substitution $t=\sinh x$ gives
\begin{equation}
\sup_{\abs{y}\leq\eta}\int_\R\abs{g(x+\ii y)}\,\mathrm{d}x
\leq C_\alpha.
\end{equation}
The same trapezoidal theorem used in Lemma~\ref{lem:quadrature} therefore gives
\begin{equation}
\left|\int_\R g(\xi)\,\mathrm{d}\xi
-\delta_\xi\sum_{n\in\Z}g(n\delta_\xi)\right|
\leq\frac{2C_\alpha}{\exp(2\pi\eta/\delta_\xi)-1}.
\end{equation}
On the real axis, $g(\xi)=\abs{F_\star(\xi)}$, so its omitted sum is bounded by the same integral as in Lemma~\ref{lem:truncation}.
This gives the second term in \eqref{eq:quantitative_subnormalization}.
Finally, the change of variables $x=\phi_R(\xi)$ gives
\begin{equation}
\frac{1}{\sqrt{2\pi}}
\int_{\R}
|f_{\star,0}(\phi_R(\xi))|\phi_R'(\xi)\,\mathrm d\xi
=
\frac{1}{\sqrt{2\pi}}
\int_{-R}^{R}|f_{\star,0}(x)|\,\mathrm dx
=
\alpha_R[f_\star].
\end{equation}
The definitions give $C_\alpha\leq M_\eta$. The triangle inequality proves \eqref{eq:quantitative_subnormalization}, and
letting $\delta_\xi\downarrow0$ and $N\delta_\xi\to\infty$ proves
\eqref{eq:quadrature_subnormalization}.
\end{proof}

\section*{Acknowledgments}
We thank all our colleagues at PsiQuantum for helpful discussions. We particularly thank David Jennings for useful comments on the manuscript.

\section*{AI contribution statement}
The project concept and proofs of the main results, including Theorems~\ref{thm:LCHS}, \ref{thm:optimal-kernel}, and \ref{thm:mainquadrature}, were developed by
the authors. OpenAI GPT-5.6, GPT-6.0, and GPT-6.1 were used to assist with
verifying and revising proofs, writing code for the numerical
results, and editing the manuscript. The authors reviewed the
AI-assisted material and take full responsibility for the final
manuscript.

\bibliographystyle{unsrt}
\bibliography{LCHS.bib}  

\pagebreak

\appendix

\section{Hardy spaces on \texorpdfstring{$S_{y_0}$}{Sy0}}
\label{appen:hardy}

Hardy spaces are spaces of holomorphic functions with integrable boundary values.
Although classically formulated on the unit disk and the half-plane
\cite{garnett2007bounded}, the finite-strip version used in this work follows by conformal mapping from the half-plane \cite{bakan2007hardy}.
Throughout this appendix, $H^1(S_{y_0})$ denotes the Hardy space introduced in~\eqref{eqn:def_H1}.
In this appendix, we collect the key definitions and results from Hardy space
theory on $S_{y_0}$, including the case $S_\infty=\H_-$ specified in~\eqref{eqn:lower_strip}.\\

\subsection{Boundary values and Poisson representation}

We first recall the standard boundary value theorem for $H^1(S_{y_0})$ functions.
For finite $y_0$, the result follows, after translation and rescaling, from the strip Hardy space theory of Bakan and Kaijser \cite[Thms.~2.1--2.2, Cor.~2.1]{bakan2007hardy}, while the lower half-plane case follows from
\cite[Chap.~II, Thm.~3.1]{garnett2007bounded}.
We state both results in the notation used throughout this work.
\begin{lem}[Boundary values of $H^1(S_{y_0}) $ functions]
\label{lem:Hardyboundary}
Let $y_0\in(1,\infty]$ and $h\in H^1(S_{y_0})$. Then the upper boundary value 
\begin{equation}
h_0(x)
\coloneqq
\lim_{\delta\downarrow 0}h(x-\ii\delta)
\end{equation}
exists for almost every $x\in\R$ and belongs to $L^1(\R)$. If
$y_0\in(1,\infty)$, the lower boundary value
\begin{equation}
h_{y_0}(x)
\coloneqq
\lim_{\delta\downarrow 0}
h(x-\ii y_0+\ii\delta)
\end{equation}
exists for almost every $x\in\R$ and belongs to $L^1(\R)$.
\end{lem}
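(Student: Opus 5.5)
The plan is to reduce both cases to the classical theory on the upper half-plane $\H_+=\{\operatorname{Im}w>0\}$, where the statement is the standard boundary value theorem for $H^1(\H_+)$.

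\textbf{Half-plane case $y_0=\infty$.} The reflection $\tilde h(w)\coloneqq h(\overline{w})$ is not holomorphic, so instead set $\tilde h(w)\coloneqq h(-w)$ for $w\in\H_+$. This function is holomorphic, since $-w\in\H_-$. Moreover,
\begin{equation}
\int_\R\lvert\tilde h(x+\ii y)\rvert\,\mathrm{d}x=\int_\R\lvert h(x-\ii y)\rvert\,\mathrm{d}x ,
\end{equation}
after the substitution $x\mapsto -x$. Hence $\tilde h\in H^1(\H_+)$ with the same norm. By \cite[Chap.~II, Thm.~3.1]{garnett2007bounded}, $\tilde h$ has nontangential, and in particular vertical, limits almost everywhere. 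The boundary function lies in $L^1(\R)$, and $\tilde h(\ccdot+\ii y)$ converges to it in $L^1$. Undoing the reflection gives existence of $h_0(x)=\lim_{\delta\downarrow0}h(x-\ii\delta)$ almost everywhere, with $h_0\in L^1(\R)$.

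\textbf{Finite strip $y_0\in(1,\infty)$.} First translate and rescale to a standard strip. For example, the map $w\mapsto h(\ii y_0 w/\pi\cdot(-1)\ldots)$ would do, but more simply take $\tilde h(w)\coloneqq h(-w)$ on $\{0<\operatorname{Im}w<y_0\}$, which again has uniformly bounded horizontal $L^1$ norms. Then invoke Bakan--Kaijser \cite[Thms.~2.1--2.2, Cor.~2.1]{bakan2007hardy}, which give almost-everywhere boundary values on both boundary lines, lying in $L^1$.

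Alternatively, I would give a self-contained argument. Fix $0<a<b<y_0$. On the closed substrip $\{a\le y\le b\}$, $h$ is continuous and, by the $L^1$ bounds together with the mean value property on small discs, decays as $\lvert\operatorname{Re}z\rvert\to\infty$. Then split $h=h^{(1)}+h^{(2)}$ by a Cauchy integral over the two lines $\operatorname{Im}z=-a$ and $\operatorname{Im}z=-b$. The term $h^{(1)}$ is holomorphic in the half-plane above the lower line, and $h^{(2)}$ is holomorphic below the upper line. Each piece is then in a half-plane $H^1$ space near the corresponding boundary line of $S_{y_0}$, provided one checks uniform $L^1$ bounds. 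This is the main obstacle: Cauchy integrals of $L^1$ data need not be $L^1$. I would avoid that issue by citing Bakan--Kaijser rather than redoing it.

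Given the half-plane result for each piece near each edge, the existence and $L^1$ property of $h_0$ follow from the upper piece. The piece $h^{(2)}$ is holomorphic across $\R$ near the upper edge, so it contributes continuous boundary values there. The lower boundary value $h_{y_0}$ is obtained symmetrically.
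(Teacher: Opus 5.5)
Your main route is the same as the paper's, so the proposal is correct: for $y_0=\infty$ you reduce via $z\mapsto -z$ to the classical half-plane boundary theorem \cite[Chap.~II, Thm.~3.1]{garnett2007bounded}, and for finite $y_0$ you cite Bakan--Kaijser \cite{bakan2007hardy} after reflection, translation and rescaling, which is all the paper's proof does. The optional self-contained Cauchy-splitting sketch is not needed and is muddled as written: Cauchy integrals over the fixed interior lines $\operatorname{Im}z=-a,-b$ only represent $h$ between those lines, and by your own description it is $h^{(1)}$, not $h^{(2)}$, that is holomorphic across $\R$. You were right to fall back on the citation.
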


A consequence of Lemma~\ref{lem:Hardyboundary} is that Hardy functions are completely determined by their boundary values.
The interior values are recovered from these boundary values through the Poisson integral formula.

\begin{lem}[Poisson representation {\normalfont(finite strip: \cite[Prop.~3.1]{bakan2007hardy}; half-plane: \cite[Chap.~II, Thm.~3.6]{garnett2007bounded})}]\label{lem:poisson_representation}
For $y_0\in(1,\infty)$, define the Poisson kernels associated with
the upper boundary $\R$ and the lower boundary $\R-\ii y_0$,
respectively, by
\begin{equation}\label{eqn:poisson_kernels}
\begin{aligned}
P_0(x,y)
&\coloneqq
\frac{1}{2y_0}
\frac{\sin\!\big(\frac{\pi y}{y_0}\big)}
{\cosh\!\big(\frac{\pi x}{y_0}\big)
 -\cos\!\big(\frac{\pi y}{y_0}\big)},\\
P_{y_0}(x,y)
&\coloneqq
\frac{1}{2y_0}
\frac{\sin\!\big(\frac{\pi y}{y_0}\big)}
{\cosh\!\big(\frac{\pi x}{y_0}\big)
 +\cos\!\big(\frac{\pi y}{y_0}\big)}
\end{aligned}
\quad
(x\in\R,\; y\in(0,y_0)).
\end{equation}
Then every $h\in H^1(S_{y_0})$ satisfies
\begin{equation}\label{eqn:poissonint}
h(x-\ii y)
=
\int_{\R}
P_0(x-x',y)h_0(x')\,\mathrm{d}x'
+
\int_{\R}
P_{y_0}(x-x',y)h_{y_0}(x')\,\mathrm{d}x'
\end{equation}
for every $x\in\R$ and $y\in(0,y_0)$.
When $y_0=\infty$, define
\begin{equation}\label{eqn:halfplane_poisson_kernel}
\hat{P}_0(x,y)
\coloneqq
\frac{1}{\pi}\frac{y}{x^2+y^2}
\qquad
(x\in\R,\; y\in(0,\infty)).
\end{equation}
Then every $h\in H^1(S_\infty)$ satisfies
\begin{equation}\label{eqn:halfplane_poisson_integral}
h(x-\ii y)
=
\int_{\R}
\hat{P}_0(x-x',y)h_0(x')\,\mathrm{d}x'
\end{equation}
for every $x\in\R$ and $y\in (0,\infty)$.
\end{lem}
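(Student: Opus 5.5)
The half-plane case will be a direct citation of the classical theory (Garnett, Chap.~II, Thm.~3.6), so the work lies in the finite strip. The plan is to transport the half-plane Poisson representation to $S_{y_0}$ by a conformal map and then identify the resulting kernels with $P_0$ and $P_{y_0}$ in \eqref{eqn:poisson_kernels}.

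For $y_0=\infty$, Lemma~\ref{lem:Hardyboundary} supplies the boundary function $h_0\in L^1(\R)$. The classical result states that an $H^1$ function on a half-plane is the Poisson integral of its boundary values. Reflecting $z\mapsto\bar z$ converts $\H_-$ to the upper half-plane, and under this reflection the point $x-\ii y$ corresponds to $x+\ii y$. Since the kernel $\hat P_0$ is even in $x$, the formula \eqref{eqn:halfplane_poisson_integral} follows.

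For $y_0\in(1,\infty)$, I would use the map $w=\exp\big(\tfrac{\pi}{y_0}(z+\ii y_0)\big)$, which sends $S_{y_0}$ onto the upper half-plane. It carries the upper boundary $\R$ to the negative real axis and the lower boundary $\R-\ii y_0$ to the positive real axis. The key difficulty is that $h\in H^1(S_{y_0})$ does not map to a classical $H^1$ function on the half-plane, because the Jacobian weight distorts the integrability. I would handle this in one of two ways. The first is to cite Bakan--Kaijser \cite[Prop.~3.1]{bakan2007hardy} directly, after translating and rescaling their strip $\{0<\operatorname{Im}<\pi\}$ or $\{|\operatorname{Im}|<\pi/2\}$ to ours. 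The second is a self-contained argument. For $0<\delta$ small, the function $h_\delta(z)=h(z-\ii\delta)$ is holomorphic and bounded on a neighbourhood of a slightly narrower closed strip; boundedness follows from the $L^1$ control on horizontal lines together with a mean-value estimate. The Poisson formula for the narrower strip can then be obtained from the half-plane formula via the map above, since bounded harmonic functions are Poisson integrals. Finally, letting $\delta\downarrow0$ and using $L^1$ convergence of the horizontal traces to $h_0$ and $h_{y_0}$, which is part of the Bakan--Kaijser boundary theory underlying Lemma~\ref{lem:Hardyboundary}, gives \eqref{eqn:poissonint}. The passage to the limit is justified because the kernels are bounded for fixed interior $y$. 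I would write the convergence statement out explicitly, since that is where all the analytic content lies.

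The identification of the kernels is a computation. Pulling the half-plane kernel $\frac{1}{\pi}\frac{\operatorname{Im}w}{|w-t|^2}\,\mathrm{d}t$ back along $t=\mp\ee^{\pi x'/y_0}$ and simplifying with $|\ee^{a}-\ee^{b+\ii\theta}|^2=2\ee^{a+b}(\cosh(a-b)-\cos\theta)$ yields exactly $\frac{1}{2y_0}\frac{\sin(\pi y/y_0)}{\cosh(\pi(x-x')/y_0)\mp\cos(\pi y/y_0)}$. The minus sign belongs to the $\R$ edge and the plus sign to the $\R-\ii y_0$ edge, which is consistent with $P_0$ blowing up as $y\downarrow0$. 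As a sanity check, I would confirm that the two masses sum to one, which matches the claim $\omega_{y_0}(\partial S_{y_0})=1$ used earlier in the paper. The main obstacle is the boundary-limit and integrability justification. In practice I would rely on the cited Bakan--Kaijser proposition and present the kernel computation as the verification that their kernels agree with \eqref{eqn:poisson_kernels}.
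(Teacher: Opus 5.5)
Your proposal is correct and is essentially the paper's approach: the paper gives no argument of its own beyond citing \cite[Prop.~3.1]{bakan2007hardy} and \cite[Chap.~II, Thm.~3.6]{garnett2007bounded}, and your pull-back of the half-plane kernel along $t=\mp\ee^{\pi x'/y_0}$ correctly reproduces $P_0$ and $P_{y_0}$, using the same substitution the paper uses in Lemma~\ref{lem:poisson_log_integral}. One caveat concerns your self-contained variant: the $L^1$ convergence of horizontal traces it relies on carries the F.~and M.~Riesz-type content, and inside the paper that convergence is \emph{derived} from this lemma via Lemma~\ref{lem:PoissonL1} (Steps~3a and~4 of the proof of Theorem~\ref{thm:LCHS}), so to avoid circularity it must be imported directly from Bakan--Kaijser, as you indicate.
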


The Poisson representation not only reconstructs harmonic functions from their boundary data but also provides an approximation of the boundary values as the boundary is approached.
The following lemma establishes this approximation in $L^1(\mathbb R)$. The convergence is uniform when the boundary data also belong to $C_0(\R)$, defined by
\begin{equation}
\label{eq:C0R}
C_0(\R)
\coloneqq
\bigg\{
g\in C(\R):
\lim_{\lvert x\rvert\to\infty}g(x)=0
\bigg\}.
\end{equation}

\begin{lem}[Poisson integrals and boundary convergence]
\label{lem:PoissonL1}
Let $y_0\in(1,\infty)$ and let
$g_0,g_{y_0}\in L^1(\R)$. For $x-\ii y\in S_{y_0}$, define
\begin{equation}
\label{eq:u_harmonic}
u(x-\ii y)
\coloneqq
\int_{\R}
P_0(x-x',y)g_0(x')\,\mathrm{d}x'
+
\int_{\R}
P_{y_0}(x-x',y)g_{y_0}(x')\,\mathrm{d}x'.
\end{equation}
Then, $u$ is harmonic on $S_{y_0}$. Moreover, for each $y\in (0,y_0)$,  $u(\cdot-\ii y)\in L^1(\R)$ and
\begin{equation}
\lim_{\delta\downarrow0}
\norm{u(\cdot-\ii\delta)-g_0}_1
=
\lim_{\delta\downarrow0}
\norm{u(\cdot-\ii y_0+\ii\delta)-g_{y_0}}_1
=
0.
\end{equation}
If, in addition, $g_0,g_{y_0}\in C_0(\R)$, then
\begin{equation}
\lim_{\delta\downarrow0}
\norm{u(\cdot-\ii\delta)-g_0}_\infty
=
\lim_{\delta\downarrow0}
\norm{u(\cdot-\ii y_0+\ii\delta)-g_{y_0}}_\infty
=
0.
\end{equation}

In the case $y_0=\infty$, let $g_0\in L^1(\R)$ and define
\begin{equation}
u(x-\ii y)
\coloneqq
\int_{\R}
\hat{P}_0(x-x',y)g_0(x')\,\mathrm{d}x'.
\end{equation}
Then, $u$ is harmonic on $S_{\infty}=\H_-$. Moreover, for each $y\in (0,\infty)$,  $u(\cdot-\ii y)\in L^1(\R)$ and
\begin{equation}
\label{eq:L1_convergence_infty}
\lim_{\delta\downarrow0}
\norm{u(\cdot-\ii\delta)-g_0}_1
=
0.
\end{equation}
If, in addition, $g_0\in C_0(\R)$, then
\begin{equation}
\label{eq:uniform_convergence_infty}
\lim_{\delta\downarrow0}
\norm{u(\cdot-\ii\delta)-g_0}_\infty
=
0.
\end{equation}
\end{lem}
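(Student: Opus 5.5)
The plan is to treat Lemma~\ref{lem:PoissonL1} as a statement about approximate identities. In the half-plane case the family $\hat P_0(\ccdot,\delta)$ is exactly such an identity. In the strip case, $P_0(\ccdot,\delta)$ is one as $\delta\downarrow0$, while $P_{y_0}(\ccdot,\delta)$ degenerates to zero mass, and symmetrically at the lower edge.

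The first step is to record the kernel facts. Both kernels are positive and even, and a direct antiderivative (the same computation underlying Lemma~\ref{lem:int_poisson}) gives the masses
\begin{equation*}
\int_\R P_0(x,y)\,\mathrm{d}x=1-\frac{y}{y_0},\qquad \int_\R P_{y_0}(x,y)\,\mathrm{d}x=\frac{y}{y_0},\qquad \int_\R\hat P_0(x,y)\,\mathrm{d}x=1.
\end{equation*}
We also need tail concentration: for every fixed $r>0$, $\int_{\abs{x}>r}P_0(x,\delta)\,\mathrm{d}x\to0$ as $\delta\downarrow0$. This holds because $\sin(\pi\delta/y_0)\to0$ while the denominator stays bounded below by $\cosh(\pi r/y_0)-1$ on $\abs{x}>r$, and the exponential tail of $1/\cosh$ makes the integral converge. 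The analogous statement for $\hat P_0$ follows from scaling. Finally, $\sup_{x}P_{y_0}(x,\delta)\le \sin(\pi\delta/y_0)/(2y_0(1-\cos(\pi\delta/y_0)))$ stays bounded; more to the point, $P_{y_0}(\ccdot,\delta)$ has $L^1$ norm $\delta/y_0\to0$.

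Next come harmonicity and integrability. Each kernel $P_0(x,y)$, $P_{y_0}(x,y)$, $\hat P_0(x,y)$ is, viewed as a function of $x-\ii y$, the real or imaginary part of an explicit holomorphic function; for instance $\hat P_0$ is the real part of $\ii/(\pi(x-\ii y))$ up to sign, and the strip kernels are the analogous $\coth$ expressions. Hence each is harmonic. On compact subsets of the domain the kernels and their derivatives are bounded uniformly in $x'$, so differentiation under the integral sign, justified by $g\in L^1$, gives harmonicity of $u$. Integrability of $u(\ccdot-\ii y)$ follows from Young's inequality, $\norm{K*g}_1\le\norm{K}_1\norm{g}_1$, using the masses above.

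For the $L^1$ convergence at the upper edge, write
\begin{equation*}
u(\ccdot-\ii\delta)-g_0=\big(P_0(\ccdot,\delta)*g_0-(1-\delta/y_0)g_0\big)-(\delta/y_0)g_0+P_{y_0}(\ccdot,\delta)*g_{y_0}.
\end{equation*}
The last two terms have $L^1$ norm at most $(\delta/y_0)(\norm{g_0}_1+\norm{g_{y_0}}_1)\to0$. For the first term, Minkowski's integral inequality bounds its norm by $\int P_0(t,\delta)\norm{g_0(\ccdot-t)-g_0}_1\,\mathrm{d}t$. We split this integral at $\abs{t}=r$: continuity of translation in $L^1$ handles $\abs{t}\le r$, and tail concentration together with the bound $2\norm{g_0}_1$ handles $\abs{t}>r$. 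The lower edge is symmetric under the reflection $y\mapsto y_0-y$, which swaps the roles of the two kernels. The half-plane case is the same argument with no degenerate term.

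For the uniform convergence, the identical decomposition works in $\norm{\ccdot}_\infty$. Here $g\in C_0(\R)$ is bounded and uniformly continuous, so $\sup_x\abs{g(x-t)-g(x)}$ is small for small $\abs{t}$, and the degenerate terms are bounded by $(\delta/y_0)(\norm{g_0}_\infty+\norm{g_{y_0}}_\infty)$.

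The only step that requires real care is the tail concentration of the strip kernel, specifically checking that the $\sin(\pi\delta/y_0)$ prefactor dominates uniformly. This is elementary once the denominator is bounded away from zero, so I expect no genuine obstacle.
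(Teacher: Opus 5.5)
Your proposal is correct and is essentially the paper's proof: the same decomposition of $u(\cdot-\ii\delta)-g_0$, approximate-identity convergence for $P_0(\cdot,\delta)$ (which you prove inline via Minkowski's inequality and continuity of translation, where the paper normalizes to $\rho_\delta$ and cites the approximate identity theorem), Young's inequality with mass $\delta/y_0$ for the degenerate terms, and the reflection $y\mapsto y_0-y$ for the lower edge, with your direct harmonicity argument replacing the paper's citations. One harmless slip: the expression you give for $\sup_x P_{y_0}(x,\delta)$, with $1-\cos(\pi\delta/y_0)$ in the denominator, blows up like $1/(\pi\delta)$ and so does not stay bounded (the true supremum has $2y_0\big(1+\cos(\pi\delta/y_0)\big)$ in the denominator and tends to zero), but you never use it, since only the $L^1$ mass $\delta/y_0$ enters.
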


\begin{proof}

The harmonicity assertions
follow from \cite[Thm.~B.1]{corso2026generalized}, in the finite-strip case, after translation and rescaling, and from \cite[Chap.~I, Sec.~3]{garnett2007bounded} in the half-plane case. It remains to prove the stated $L^1$ and uniform boundary convergence. \\

For $y\in(0,y_0)$, both Poisson kernels are nonnegative. The
substitution $s = \tanh(\tfrac{\pi x}{2y_0})$ gives
\begin{align}
\label{eq:poisson_mass_0}
\norm{P_0(\ccdot,y)}_1
=&\; 
\int_{\R}P_0(x,y)\,\mathrm{d}x \nonumber \\
=&\; \frac{\sin\!\big(\frac{\pi y}{y_0}\big)}{2\pi}
\int_{-1}^{1}
\frac{\mathrm{d}s}
{\sin^2\!\big(\frac{\pi y}{2y_0}\big)
+\cos^2\!\big(\frac{\pi y}{2y_0}\big)s^2}
\nonumber\\
=&\;
\frac{\sin\!\big(\frac{\pi y}{y_0}\big)}{2\pi}
\left[
\frac{
\arctan\!\big(
s\cot(\frac{\pi y}{2y_0})
\big)}
{
\sin\!\big(\frac{\pi y}{2y_0}\big)
\cos\!\big(\frac{\pi y}{2y_0}\big)
}
\right]_{-1}^{1}
\nonumber\\
=&\;
\frac{2}{\pi}
\arctan
\cot\!\bigg(\frac{\pi y}{2y_0}
\bigg)
=
1-\frac{y}{y_0}.
\end{align}
Since $P_{y_0}(x,y)=P_0(x,y_0-y)$, \eqref{eq:poisson_mass_0} gives
\begin{equation}
\|P_{y_0}(\cdot,y)\|_1
=
\|P_0(\cdot,y_0-y)\|_1
=
1-\frac{y_0-y}{y_0}
=
\frac{y}{y_0}.
\end{equation}

Now, Young's convolution inequality 
applied to \eqref{eq:u_harmonic} gives 
\begin{equation}
\norm{u(\cdot-\ii y)}_1
\leq
\bigg(1-\frac{y}{y_0}\bigg)\norm{g_0}_1
+
\frac{y}{y_0}\norm{g_{y_0}}_1,
\end{equation}
so that $u(\cdot-\ii y)\in L^1(\R)$. \\

For $\delta\in (0,y_0)$, define the normalized kernel
\begin{equation}
\rho_\delta(x)
\coloneqq
\frac{1}{1-\frac{\delta}{y_0}}
P_0(x,\delta) \quad (x\in \R).
\end{equation}
The function $\rho_\delta$ has unit $L^1$ norm by \eqref{eq:poisson_mass_0}. Moreover, for every $r>0$,
\begin{equation}
\int_{\R\setminus(-r,r)}
\lvert \rho_\delta(x)\rvert\,\mathrm{d}x
\leq 
\frac{\sin\!\big(\frac{\pi\delta}{y_0}\big)}
{2(y_0-\delta)}
\int_{\R\setminus(-r,r)}
\frac{\mathrm{d}x}
{\cosh\!\big(\frac{\pi x}{y_0}\big)-1}=\frac{y_0}{\pi}\frac{\sin\!\big(\frac{\pi\delta}{y_0}\big)}
{y_0-\delta}\big(\coth\!\big(\tfrac{\pi r}{2y_0}\big)-1\big).
\end{equation}
The right-hand side tends to zero as $\delta\downarrow0$, so $(\rho_\delta)_{\delta\in(0,y_0)}$ is an approximate identity.
Since every function in $C_0(\R)$ is bounded and uniformly continuous, the standard approximate identity theorem \cite[Thm.~1.2.19]{Grafakos2014} gives
\begin{equation}
\label{eq:Qdelta_L1}
\lim_{\delta\downarrow 0}\norm{\rho_\delta\ast g-g}_1= 0 \quad (g\in L^1(\R))
\end{equation}
and
\begin{equation}
\label{eq:Qdelta_uniform}
\lim_{\delta\downarrow 0}\norm{\rho_\delta\ast g-g}_\infty= 0\quad 
(g\in C_0(\R)).
\end{equation}

Since
\begin{equation}
P_0(\ccdot,\delta)
=
\bigg(1-\frac{\delta}{y_0}\bigg)\rho_\delta,
\qquad
\norm{P_{y_0}(\ccdot,\delta)}_1
=
\frac{\delta}{y_0},
\end{equation}
we have
\begin{equation}
\begin{aligned}
u(\cdot-\ii\delta)-g_0
=
\bigg(1-\frac{\delta}{y_0}\bigg)
(\rho_\delta\ast g_0-g_0)+
P_{y_0}(\ccdot,\delta)\ast g_{y_0}
-
\frac{\delta}{y_0}g_0.
\end{aligned}
\end{equation}
For $p=1$, and also for $p=\infty$ when
$g_0,g_{y_0}\in C_0(\R)$, Young's convolution inequality therefore
gives
\begin{equation}
\begin{aligned}
\norm{u(\cdot-\ii\delta)-g_0}_p
\leq{}&
\bigg(1-\frac{\delta}{y_0}\bigg)
\norm{\rho_\delta\ast g_0-g_0}_p+
\frac{\delta}{y_0}
\big(
\norm{g_0}_p+\norm{g_{y_0}}_p\big).
\end{aligned}
\end{equation}
The upper-boundary assertions now follow from
\eqref{eq:Qdelta_L1}--\eqref{eq:Qdelta_uniform}. The identities
\begin{equation}
P_{y_0}(x,y_0-\delta)
=
P_0(x,\delta),
\quad
P_0(x,y_0-\delta)
=
P_{y_0}(x,\delta)
\end{equation}
give the lower-boundary assertions by a similar argument with
$g_0$ and $g_{y_0}$ interchanged. \\

In the half-plane case, define
\begin{equation}
\hat{\rho}_\delta(x)
\coloneqq
\hat{P}_0(x,\delta)
=
\frac{1}{\pi}\frac{\delta}{x^2+\delta^2}.
\end{equation}
Then
\begin{equation}
\lVert \hat{\rho}_{\delta}\rVert_1=\lVert \hat{P}_0(\ccdot,\delta)\rVert_1=\int_{\R}\hat{P}_0(x,\delta)\,\mathrm{d}x=\frac{\delta}{\pi}\int_\R \frac{\mathrm{d}x}{x^2+\delta^2}=1
\end{equation}
and, for every $r\in (0,\infty)$,
\begin{equation}
\int_{\R\setminus(-r,r)}
\hat{\rho}_\delta(x)\,\mathrm{d}x
\leq
\frac{\delta}{\pi}\int_{\R\setminus (-r,r)} \frac{\mathrm{d}x}{x^2}
=\frac{2\delta}{\pi r}.
\end{equation}
Thus 
\begin{equation}
\lim_{\delta\downarrow 0}	\int_{\R\setminus(-r,r)}
\hat{\rho}_\delta(x)\,\mathrm{d}x=0
\end{equation}
and $(\hat{\rho}_\delta)_{\delta\in (0,\infty)}$ is an approximate identity. Young's
convolution inequality gives, for every $y\in(0,\infty)$,
\begin{equation}
\norm{u(\cdot-\ii y)}_1
\leq
\norm{g_0}_1.
\end{equation}
The same approximate identity argument proves the $L^1$ and uniform boundary convergence in the half-plane case.
\end{proof}

\subsection{Inner--outer factorization}

Hardy functions being determined by their boundary values leads to a natural decomposition according to boundary behaviour.
An inner function has unimodular boundary values almost everywhere, while an outer function is determined up to a unimodular constant by the modulus of its boundary values.
More precisely,

\begin{defn}[Inner and outer functions]
\label{def:inner_outer}
Let $y_0\in(1,\infty]$. A function $h_{\mathrm{in}}\in\mc{O}(S_{y_0})$ is called
\emph{inner} if
\begin{equation}
\lvert h_{\mathrm{in}}(z)\rvert\leq 1
\qquad (z\in S_{y_0})
\end{equation}
and its non-tangential boundary values are unimodular almost everywhere
on each boundary component. A nonzero function $h_{\mathrm{out}}\in H^1(S_{y_0})$ is called
\emph{outer} if, for every $x-\ii y\in S_{y_0}$, the following
representation holds. When $y_0\in(1,\infty)$,
\begin{equation}
\label{eqn:outer_poisson_representation}
\begin{aligned}
\log\,\abs{h_{\mathrm{out}}(x-\ii y)}
={}&
\int_{\R}
P_0(x-x',y)\log\,\abs{h_{\mathrm{out},0}(x')}\,
\mathrm{d}x'\\
&+
\int_{\R}
P_{y_0}(x-x',y)
\log\,\abs{h_{\mathrm{out},y_0}(x')}\,
\mathrm{d}x'.
\end{aligned}
\end{equation}
When $y_0=\infty$,
\begin{equation}
\label{eqn:outer_poisson_representation_halfplane}
\log\,\abs{h_{\mathrm{out}}(x-\ii y)}
=
\int_{\R}
\hat{P}_0(x-x',y)\log\,\abs{h_{\mathrm{out},0}(x')}\,
\mathrm{d}x'.
\end{equation}
\end{defn}

It is a standard result that Hardy space functions on the half-plane factorize~\cite[Chap.~11.3]{duren1970theory}.
For finite strips, it follows by transporting the half-plane result through a conformal map.
For completeness, we include a proof.

\begin{thm}[Beurling factorization on $S_{y_0}$]
\label{thm:beurling}
Let $y_0\in(1,\infty]$ and let
$h\in H^1(S_{y_0})$ be nonzero. Then,
\begin{equation}
h=h_{\mathrm{in}}h_{\mathrm{out}},
\end{equation}
where $h_{\mathrm{in}}$ is inner and $h_{\mathrm{out}}$ is outer. This
factorization is unique up to multiplication of the two factors by
reciprocal unimodular constants. Moreover,
\begin{equation}
\label{eq:hout0_equality}
\lvert h_{\mathrm{out},0}(x)\rvert
=
\lvert h_0(x)\rvert
\end{equation}
for almost every $x\in\R$ and, when $y_0\in (1,\infty)$, 
\begin{equation}
\lvert h_{\mathrm{out},y_0}(x)\rvert
=
\lvert h_{y_0}(x)\rvert
\end{equation}
for almost every $x\in\R$. If
$h_{\mathrm{in}}$ is nonconstant, then
\begin{equation}
\lvert h_{\mathrm{in}}(z)\rvert<1
\end{equation}
for every $z\in S_{y_0}$. 
\end{thm}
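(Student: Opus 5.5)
The plan is to transport the classical half-plane theory to the strip by a conformal map; the half-plane case $y_0=\infty$ is exactly \cite[Chap.~11.3]{duren1970theory}, suitably reflected from the upper to the lower half-plane. For finite $y_0$, I would use the map $w=\psi(z)\coloneqq \ee^{\pi (z+\ii y_0)/y_0}$, or a rotated variant, which sends $S_{y_0}$ conformally onto a half-plane. Under this map, $\R$ and $\R-\ii y_0$ go to the positive and negative real half-lines. Hardy spaces do not transform isometrically under this map, because the change of variables introduces the Jacobian $\psi'$. So the first step is to identify the correct weighted transfer. I would write $h=g\circ\psi$ and require that $g(w)\,(w')^{-1}$-type factors lie in $H^1$ of the half-plane. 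Concretely, $h\in H^1(S_{y_0})$ should correspond to $G(w)\coloneqq g(w)/\psi'(\psi^{-1}(w))\in H^1$ of the half-plane, since $\int\lvert h\rvert\,\mathrm{d}x=\int \lvert g\rvert\,\lvert\mathrm{d}w\rvert/\lvert w\rvert\cdot(y_0/\pi)$. Rather than proving the full equivalence of Hardy norms along all interior lines, which is the delicate part, I would rely on Bakan--Kaijser \cite{bakan2007hardy}, whose strip theory gives the boundary values and the Poisson representation of Lemma~\ref{lem:poisson_representation}. Following Garnett \cite{garnett2007bounded}, I would then build the factors directly on the strip.

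The direct construction proceeds in three steps. First, since $h\not\equiv0$ has $L^1$ boundary values and a Poisson representation, the function $\log\lvert h\rvert$ is subharmonic and bounded above by the Poisson integral of $\log\lvert h_0\rvert,\log\lvert h_{y_0}\rvert$. This is the strip analogue of the standard estimate $\log\lvert h\rvert\le P[\log\lvert h_b\rvert]$, obtained by applying Jensen's inequality (Lemma~\ref{lem:Jensen}) to $\lvert h\rvert\le P[\lvert h_b\rvert]$, first along interior lines and then passing to the limit. In particular, $\log\lvert h_b\rvert$ is integrable against the Poisson kernels; its positive part is handled by $\log^+t\le t$ and boundedness of the kernels, and its negative part by finiteness at one interior point, since $h(z_0)\neq0$ for some $z_0$. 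Second, I would define $h_{\mathrm{out}}\coloneqq\exp(U+\ii\tilde U)$, where $U$ is the Poisson integral of the boundary log-moduli from \eqref{eqn:outer_poisson_representation} and $\tilde U$ is a harmonic conjugate, which exists because $S_{y_0}$ is simply connected. By the arithmetic–geometric mean inequality, $\lvert h_{\mathrm{out}}\rvert=\ee^{P[\log\lvert h_b\rvert]}\le P[\lvert h_b\rvert]$, so $h_{\mathrm{out}}\in H^1$ by Young's inequality as in Lemma~\ref{lem:PoissonL1}. Its boundary moduli equal $\lvert h_b\rvert$ almost everywhere, by Fatou-type nontangential convergence of Poisson integrals of $L^1$ data. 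Third, I would set $h_{\mathrm{in}}\coloneqq h/h_{\mathrm{out}}$. The first step gives $\lvert h_{\mathrm{in}}\rvert\le1$, and its boundary moduli are $\lvert h_b\rvert/\lvert h_b\rvert=1$ almost everywhere, so it is inner. This also yields \eqref{eq:hout0_equality}.

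Uniqueness follows from the observation that two outer functions with equal boundary moduli have equal $\log$-moduli by \eqref{eqn:outer_poisson_representation}, so their quotient is a zero-free holomorphic function of modulus one, hence a unimodular constant. Given $h=h_{\mathrm{in}}h_{\mathrm{out}}=\tilde h_{\mathrm{in}}\tilde h_{\mathrm{out}}$ with both outer factors having boundary modulus $\lvert h_b\rvert$, the outer factors agree up to a unimodular constant, and therefore so do the inner factors. The strict inequality for nonconstant inner factors is the maximum modulus principle: $\lvert h_{\mathrm{in}}\rvert\le1$ on the connected open set $S_{y_0}$, so attaining $1$ at an interior point would force $h_{\mathrm{in}}$ to be constant.

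The main obstacle is the first step, namely the subharmonic majorization $\log\lvert h\rvert\le P[\log\lvert h_b\rvert]$ on an unbounded strip with two boundary components. I would first establish it for the translated functions $h(\cdot-\ii\delta)$ on slightly smaller strips, where boundary values are continuous. I would use the Poisson representation together with $\log\lvert h\rvert\le\log P[\lvert h\rvert]$ and a Phragmén–Lindelöf-type control, obtained by subtracting $\epsilon\log$ of a growing harmonic function. I would then let $\delta\downarrow0$ using the $L^1$ boundary convergence of Lemma~\ref{lem:PoissonL1} and upper semicontinuity of the integral of $\log^+$. If this becomes unwieldy, the fallback is the conformal transfer via $\psi$ together with the half-plane theorem, verifying that the Jacobian factor is itself outer.
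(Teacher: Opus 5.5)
Your main route is a different route from the paper's. The paper does not construct the factors on the strip. It pulls $h$ back to $h^\infty(w)=\psi'(w)h(\psi(w))$ with $\psi(w)=\frac{y_0}{\pi}\log w$, and uses Bakan--Kaijser for the identification $H^1(S_{y_0})\cong H^1(\H_-)$. It then factors $h^\infty$ with the classical half-plane theorem and transports the factors back. The only new check is that the Jacobian factor $1/\psi'(\psi^{-1}(z))$ has log-modulus $\log(\pi/y_0)+(\pi/y_0)\operatorname{Re}z$. This equals its own strip Poisson integral, since the kernels are even with combined unit mass, so the transported outer factor satisfies \eqref{eqn:outer_poisson_representation}. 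That is exactly your fallback, including ``verifying that the Jacobian factor is itself outer.'' Your direct construction is longer but self-contained on the strip, with the following features:
\begin{itemize}
\item $h_{\mathrm{out}}=\exp(U+\ii\tilde U)$ has the strip Poisson representation built in, so no Jacobian check is needed.
\item $H^1$ membership follows from Jensen used in the correct direction, $\ee^{P[\log|h_b|]}\le P[|h_b|]$.
\item Your uniqueness and strict-inequality arguments are correct.
\end{itemize}

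The step that fails as written is the justification of $\log|h|\le P[\log|h_b|]$ in your first step. Jensen applied to $|h|\le P[|h_b|]$ only yields $P[\log|h_b|]\le\log P[|h_b|]$. So $\log|h|$ and $P[\log|h_b|]$ are both bounded by $\log P[|h_b|]$, and nothing compares them. Indeed, $v=P[g]$ with $g>0$ nonconstant satisfies $v\le P[g]$ but $\log v>P[\log g]$, so holomorphy has to enter. The majorization must come from subharmonicity of $\log|h|$, as in your last paragraph; the ingredient ``$\log|h|\le\log P[|h|]$'' there is again beside the point. The repaired argument runs as follows:
\begin{itemize}
\item On the sub-strip $\{-y_0+\delta<\operatorname{Im}z<-\delta\}$, $h$ is bounded, because $P_0(\cdot,y)$ and $P_{y_0}(\cdot,y)$ are uniformly bounded for $y\in[\delta,y_0-\delta]$.
\item The maximum principle with Phragm\'en--Lindel\"of control for bounded-above subharmonic functions on a strip then gives the inequality with the sub-strip Poisson kernels.
\item In the limit $\delta\downarrow0$, the $\log^+$ terms converge, because $\log^+$ is $1$-Lipschitz and $h(\cdot-\ii\delta)\to h_0$ in $L^1$.
\item You also need Fatou's lemma, along an a.e.-convergent subsequence, to get lower semicontinuity of the $\log^-$ terms. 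Upper semicontinuity of the $\log^+$ integral alone does not suffice.
\end{itemize}
Finally, in your second step, $\log|h_b|$ is integrable only against the exponentially decaying Poisson weight, not in $L^1(\R)$. So the a.e.\ boundary convergence of $U$ needs the standard weighted (localized) Fatou theorem, not Lemma~\ref{lem:PoissonL1}.
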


\begin{proof}
The case $y_0=\infty$ is given in \cite[Chap.~11.3]{duren1970theory}.
Let
\begin{equation}
    \psi(w)=\frac{y_0}{\pi}\log w,
\end{equation}
which maps $\H_-$ biholomorphically onto $S_{y_0}$, and define
\begin{equation}
    h^\infty(w)
=
\psi'(w)h(\psi(w)).
\end{equation}
By \cite[Thms.~2.1--2.2]{bakan2007hardy}, this identifies
$H^1(S_{y_0})$ with $H^1(\H_-)$.
Applying the classical half-plane factorization to $h^\infty$ and transporting the resulting inner and outer factors back to $S_{y_0}$ yields the stated decomposition. Explicitly, if $h^{\infty}=h^{\infty}_{\mathrm{in}}h^{\infty}_{\mathrm{out}}$, with $h^{\infty}_{\mathrm{in}}$ and $h^{\infty}_{\mathrm{out}}$ respectively inner and outer in $H^1(\H_-)$, then
\begin{equation}
h_{\mathrm{in}}\coloneqq h^{\infty}_{\mathrm{in}}\circ\psi^{-1}, \qquad h_{\mathrm{out}}\coloneqq \frac{h^{\infty}_{\mathrm{out}}\circ\psi^{-1}}{\psi'\circ\psi^{-1}}.
\end{equation}
Under $\psi^{-1}$, the half-plane Poisson kernel \eqref{eqn:halfplane_poisson_kernel} pulls back to the finite-strip kernels \eqref{eqn:poisson_kernels}, so $\log\,\lvert  h^{\infty}_{\mathrm{out}}\circ\psi^{-1}\rvert$ is the finite-strip Poisson integral of its boundary values. Moreover,
\begin{equation}
\log\bigg\lvert\frac{1}{\psi'(\psi^{-1}(z))} \bigg\rvert=\log\frac{\pi}{y_0}+\frac{\pi}{y_0}\operatorname{Re}z 
\end{equation}
equals its own finite-strip Poisson integral, since the Poisson kernels \eqref{eqn:poisson_kernels} are even, decay exponentially, and have combined unit mass. Hence $h_{\mathrm{out}}$ satisfies \eqref{eqn:outer_poisson_representation}. The preservation of boundary modulus, uniqueness up to reciprocal unimodular constants, and the strict inequality $|h_{\mathrm{in}}|<1$ for nonconstant inner functions follow immediately from the corresponding half-plane results under the conformal correspondence.
\end{proof}

The Poisson representation of outer functions gives the following uniqueness result.

\begin{cor}[Uniqueness of outer functions]
\label{cor:outer_uniqueness}
Let $y_0\in(1,\infty)\cup\{\infty\}$, and let
$h_{\mathrm{out}},\tilde{h}_{\mathrm{out}}\in H^1(S_{y_0})$ be
outer functions. Suppose that their boundary values have equal modulus
almost everywhere on $\partial S_{y_0}$. Then, there
exists $\theta\in [0,2\pi)$ such that
\begin{equation}
\tilde{h}_{\mathrm{out}}
=
\ee^{\ii\theta}h_{\mathrm{out}}.
\end{equation}
\end{cor}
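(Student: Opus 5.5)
Since both functions are outer, the plan is to compare them through their Poisson representations and show that their ratio is a holomorphic function of constant modulus one. By Definition~\ref{def:inner_outer}, $\log\lvert h_{\mathrm{out}}\rvert$ is the Poisson integral of the boundary data $\log\lvert h_{\mathrm{out},0}\rvert$, together with $\log\lvert h_{\mathrm{out},y_0}\rvert$ when $y_0<\infty$, via \eqref{eqn:outer_poisson_representation} or \eqref{eqn:outer_poisson_representation_halfplane}. The same holds for $\tilde h_{\mathrm{out}}$. Because the boundary moduli agree almost everywhere, the integrands coincide almost everywhere. Hence, for every $x-\ii y\in S_{y_0}$,
\begin{equation}
\log\lvert\tilde h_{\mathrm{out}}(x-\ii y)\rvert=\log\lvert h_{\mathrm{out}}(x-\ii y)\rvert .
\end{equation}
I will check that these Poisson integrals are finite. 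The definition implicitly requires this, since the left-hand sides are finite wherever the functions are nonzero. In particular, both functions are zero-free on $S_{y_0}$: the representation asserts a finite value of $\log\lvert h_{\mathrm{out}}\rvert$ at every interior point.

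Next I will set $q\coloneqq\tilde h_{\mathrm{out}}/h_{\mathrm{out}}$. Since $h_{\mathrm{out}}$ has no zeros in $S_{y_0}$, the function $q$ is holomorphic on $S_{y_0}$. The identity above gives $\lvert q\rvert\equiv1$ there. A holomorphic function of constant modulus on a connected open set is constant, for instance by the open mapping theorem, or because $\log\lvert q\rvert\equiv 0$ is the real part of a holomorphic logarithm of $q$ on the simply connected domain $S_{y_0}$, which forces the imaginary part to be constant by the Cauchy--Riemann equations. Therefore $q\equiv\ee^{\ii\theta}$ for some $\theta\in[0,2\pi)$, which gives $\tilde h_{\mathrm{out}}=\ee^{\ii\theta}h_{\mathrm{out}}$.

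The only delicate step is the first one: making sure the Poisson representation applies pointwise everywhere, with finite values, so that no exceptional set of interior points arises. I will handle this by noting that Definition~\ref{def:inner_outer} requires the representation at \emph{every} point of $S_{y_0}$. Nonvanishing then follows directly: if $h_{\mathrm{out}}$ vanished somewhere, the left-hand side of the representation would be $-\infty$, while the right-hand side is finite. That the right-hand side is finite follows as in Step~2 of the proof of Proposition~\ref{prop:subnormalization-optimal-kernel}: the positive part of the logarithm is controlled by $\log^+t\leq t$ together with boundedness of the Poisson kernels at each fixed depth, and a nonzero $H^1$ function has finite log-integral. Everything after that is elementary.
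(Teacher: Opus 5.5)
Your proposal is correct and follows the paper's proof. Equality of the Poisson representations gives $\lvert\tilde h_{\mathrm{out}}\rvert=\lvert h_{\mathrm{out}}\rvert$ on $S_{y_0}$, zero-freeness makes the quotient holomorphic and unimodular, and it is therefore constant; the paper uses the maximum-modulus principle where you use the open mapping theorem, and your finiteness argument for zero-freeness fills in a step the paper simply asserts.
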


\begin{proof}
The Poisson representations in Definition~\ref{def:inner_outer}
give
\begin{equation}
\lvert\tilde{h}_{\mathrm{out}}(z)\rvert
=
\lvert h_{\mathrm{out}}(z)\rvert
\qquad (z\in S_{y_0}).
\end{equation}
Outer functions are zero-free in $S_{y_0}$, so the quotient
$\tilde{h}_{\mathrm{out}}/{h_{\mathrm{out}}}$
is holomorphic and has constant modulus one. The maximum-modulus principle
therefore implies that this quotient is constant, which proves the result.
\end{proof}

\subsection{Identities for Poisson kernels}

We conclude by establishing two identities for the Poisson kernels
that are used in the main text.
We will later require explicit expressions for the Poisson mass over finite intervals of the boundary.

\begin{lem}[Truncated Poisson mass]
\label{lem:int_poisson}
The truncated Poisson masses over the boundary interval $[-R,R]$, where
$R>0$, are given by
\begin{equation}
m_{R,y_0}
\coloneqq
\int_{-R}^{R}P_0(x,1)\,\mathrm{d}x
=
\frac{2}{\pi}
\arctan\!\left(
\frac{\tanh\!\big(\frac{\pi R}{2y_0}\big)}
{\tan\!\big(\frac{\pi}{2y_0}\big)}
\right),
\qquad
y_0\in(1,\infty),
\end{equation}
and
\begin{equation}
m_{R,\infty}
\coloneqq
\int_{-R}^{R}\hat P_0(x,1)\,\mathrm{d}x
=
\frac{2}{\pi}\arctan(R).
\end{equation}

The corresponding limiting values are
\begin{equation}\label{eqn:truncated_poisson_mass_limits}
\lim_{R\downarrow0}m_{R,y_0}=0,
\qquad
\lim_{R\to\infty}m_{R,y_0}
=
\begin{dcases}
1-\dfrac{1}{y_0}
& y_0\in(1,\infty)\\
1
& y_0=\infty.
\end{dcases}
\end{equation}
\end{lem}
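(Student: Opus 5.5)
The plan is to compute both integrals in closed form by an explicit substitution, and then read off the limits from the resulting formulas.

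\emph{Half-plane case.} Since $\hat P_0(x,1)=\frac{1}{\pi}\frac{1}{x^2+1}$, we have directly
\begin{equation}
\int_{-R}^R\hat P_0(x,1)\,\mathrm{d}x=\frac{2}{\pi}\arctan R.
\end{equation}
The limits $0$ as $R\downarrow0$ and $1$ as $R\to\infty$ are then immediate.

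\emph{Finite-strip case.} Here I would reuse the substitution $s=\tanh(\frac{\pi x}{2y_0})$ from the proof of Lemma~\ref{lem:PoissonL1}, now with $y=1$ and limits $\pm\tanh(\frac{\pi R}{2y_0})$ in place of $\pm1$. The identity
\begin{equation}
\cosh(2u)-\cos(2v)=\frac{2\big(\sin^2 v+s^2\cos^2 v\big)}{1-s^2}
\end{equation}
holds with $u=\frac{\pi x}{2y_0}$, $v=\frac{\pi}{2y_0}$, and $s=\tanh u$, together with
\begin{equation}
\mathrm{d}x=\frac{2y_0}{\pi}\,\frac{\mathrm{d}s}{1-s^2}.
\end{equation}
The factors $1-s^2$ cancel, and the integrand becomes
\begin{equation}
\frac{\sin(2v)}{2\pi}\,\frac{1}{\sin^2v+\cos^2v\,s^2}.
\end{equation}
Integrating gives
\begin{equation}
\frac{\sin(2v)}{2\pi}\left[\frac{\arctan(s\cot v)}{\sin v\cos v}\right]_{-T}^{T}=\frac{2}{\pi}\arctan\!\big(T\cot v\big),
\end{equation}
where $T=\tanh(\frac{\pi R}{2y_0})$. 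This is exactly the stated formula, since $T\cot v=T/\tan(\frac{\pi}{2y_0})$.

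\emph{Limits.} As $R\downarrow0$ we have $T\to0$, so $m_{R,y_0}\to0$. As $R\to\infty$ we have $T\to1$, so
\begin{equation}
m_{R,y_0}\to\frac{2}{\pi}\arctan\cot\Big(\frac{\pi}{2y_0}\Big)=\frac{2}{\pi}\Big(\frac{\pi}{2}-\frac{\pi}{2y_0}\Big)=1-\frac{1}{y_0},
\end{equation}
using $\frac{\pi}{2y_0}\in(0,\frac{\pi}{2})$, which holds because $y_0>1$. This value is also consistent with the mass computation \eqref{eq:poisson_mass_0} at $y=1$.

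The only step needing care is the hyperbolic identity in the substitution; everything else is bookkeeping. That identity can be checked from $\cosh 2u=\frac{1+s^2}{1-s^2}$ together with $\cos 2v=\cos^2v-\sin^2v$.
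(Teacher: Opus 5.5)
Your proposal is correct and follows the paper's proof essentially step for step: the same substitution $s=\tanh(\pi x/(2y_0))$, the same arctangent antiderivative, the direct computation $\frac{2}{\pi}\arctan R$ in the half-plane case, and the limits read off from $\tanh\to1$ and $\arctan(\cot v)=\frac{\pi}{2}-v$. Your explicit check of the identity $\cosh(2u)-\cos(2v)=2(\sin^2v+s^2\cos^2v)/(1-s^2)$ fills in a step that the paper leaves implicit.
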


\begin{proof}
For $y_0\in(1,\infty)$, the substitution $s = \tanh(\tfrac{\pi x}{2y_0})$ gives
\begin{align}
\int_{-R}^{R}P_0(x,1)\,\mathrm{d}x
=&\; 
\frac{\sin\!\big(\frac{\pi}{y_0}\big)}{2\pi}
\int_{-\tanh(\frac{\pi R}{2y_0})}
^{\tanh(\frac{\pi R}{2y_0})}
\frac{\mathrm{d}s}
{\sin^2\!\big(\frac{\pi}{2y_0}\big)
+\cos^2\!\big(\frac{\pi}{2y_0}\big)s^2}\notag\\
=&\; 
\frac{\sin\!\big(\frac{\pi}{y_0}\big)}{2\pi}
\left[
\frac{
\arctan\!\big(
s\cot(\frac{\pi}{2y_0})
\big)}
{
\sin\!\big(\frac{\pi}{2y_0}\big)
\cos\!\big(\frac{\pi}{2y_0}\big)
}
\right]_
{-\tanh(\frac{\pi R}{2y_0})}
^{\tanh(\frac{\pi R}{2y_0})}\notag\\
=&\; 
\frac{2}{\pi}
\arctan\!\left(
\frac{
\tanh\!\big(\frac{\pi R}{2y_0}\big)}
{\tan\!\big(\frac{\pi}{2y_0}\big)}
\right).
\end{align}

When $y_0=\infty$, equation
\eqref{eqn:halfplane_poisson_kernel} gives
\begin{equation}
m_{R,\infty}
=
\frac{1}{\pi}
\int_{-R}^{R}\frac{\mathrm{d}x}{1+x^2}
=
\frac{2}{\pi}\arctan(R).
\end{equation}
The limits follow directly from the explicit expressions above using the limits $\tanh(t)\to1$ and $\arctan(t)\to\pi/2$ as $t\to\infty$.
\end{proof}

\begin{lem}[Logarithmic Poisson integrals]
\label{lem:poisson_log_integral}
For $y_0\in(1,\infty)$, \begin{equation}\label{eqn:poisson_log_integral}
\begin{aligned}
\int_{\R}
\big(
P_0(x,1)\log P_0(x,1)+
P_{y_0}(x,1)\log P_{y_0}(x,1)
\big)\,\mathrm{d}x
=
-\log\!\left(
4y_0\sin\!\left(\frac{\pi}{y_0}\right)
\right).
\end{aligned}
\end{equation}
In the half-plane case,
\begin{equation}\label{eqn:halfplane_poisson_log_integral}
\int_{\R}
\hat{P}_0(x,1)\log\hat{P}_0(x,1)\,\mathrm{d}x
=
-\log(4\pi).
\end{equation}
Moreover, these integrals are absolutely convergent. 
\end{lem}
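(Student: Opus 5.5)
The plan is to compute the integrals from their explicit representations as logarithms of harmonic-measure densities. The half-plane case comes first because it is simpler. There $\log\hat P_0(x,1)=-\log\pi-\log(1+x^2)$. Since $\hat P_0(\ccdot,1)$ has unit mass, it suffices to evaluate
\begin{equation*}
\frac{1}{\pi}\int_\R\frac{\log(1+x^2)}{1+x^2}\,\mathrm{d}x .
\end{equation*}
I would do this by the Poisson representation of Lemma~\ref{lem:poisson_representation}, not by direct integration. The function $\log(1+x^2)=2\log\lvert x-\ii\rvert$ is the boundary value of the harmonic function $2\log\lvert z-\ii\rvert$ on $\H_-$. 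Since $z-\ii$ is zero-free on $\H_-$, this function is $\operatorname{Re}\big(2\log(z-\ii)\big)$. Its Poisson integral at $z=-\ii$ therefore equals $2\log\lvert -2\ii\rvert=2\log 2$. This gives $-\log\pi-2\log2=-\log(4\pi)$.

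The one point needing care is that $\log\lvert z-\ii\rvert$ grows logarithmically, so it is not in any Hardy class. I would justify the Poisson identity directly. One route is to note that $(z-\ii)^{-2}$ is an outer $H^1(\H_-)$ function, so Definition~\ref{def:inner_outer} applies to $\log\lvert(z-\ii)^{-2}\rvert$. Alternatively, one can compute the elementary integral $\int_\R \log(1+x^2)/(1+x^2)\,\mathrm{d}x=2\pi\log 2$ via the substitution $x=\tan\theta$ and the classical $\int_0^{\pi/2}\log\cos\theta\,\mathrm{d}\theta=-\frac{\pi}{2}\log 2$. The second route is fully self-contained, so I would use it. Absolute convergence is immediate from $\lvert\log\hat P_0\rvert\lesssim 1+\log(1+x^2)$ against the decay $(1+x^2)^{-1}$.

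For the strip case, the same idea applies. The integrand is $\omega\log\omega$ with respect to the harmonic measure $\omega_{y_0}$ at $-\ii$ (unit total mass, by the proof of Lemma~\ref{lem:PoissonL1}). I would write
\begin{equation*}
\log P_0(x,1)=\log\frac{\sin(\pi/y_0)}{4y_0}-\log\Big\lvert\sinh\Big(\frac{\pi}{2y_0}(x-\ii)\Big)\Big\rvert^2 .
\end{equation*}
This uses the identity $\lvert\sinh(u-\ii v)\rvert^2=\frac12(\cosh 2u-\cos 2v)$ already used in Step~4a of the paper. On the lower boundary $x-\ii y_0$, the same expression with $\sinh(\frac{\pi}{2y_0}(x-\ii y_0-\ii))$ gives $\log P_{y_0}(x,1)$. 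Hence the two boundary logarithms are the boundary values of the single function
\begin{equation*}
u(z)=\log\frac{\sin(\pi/y_0)}{4y_0}-2\log\Big\lvert\sinh\Big(\frac{\pi}{2y_0}(z-\ii)\Big)\Big\rvert .
\end{equation*}
This is $\operatorname{Re}G_{1,y_0}$ from the proof of Proposition~\ref{prop:subnormalization-optimal-kernel}, and it is harmonic on $S_{y_0}$ because the $\sinh$ factor is zero-free there. Moreover $e^{G_{1,y_0}}$ is an outer $H^1$ function, so its Poisson representation \eqref{eqn:outer_poisson_representation} at $-\ii$ holds. The left-hand side of the desired identity then equals
\begin{equation*}
u(-\ii)=\log\frac{\sin(\pi/y_0)}{4y_0}-2\log\sin(\pi/y_0)=-\log\big(4y_0\sin(\pi/y_0)\big).
\end{equation*}

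The main obstacle is justifying that $e^{G_{1,y_0}}$ (respectively $(z-\ii)^{-2}/\pi$) is outer, that is, that it has no singular inner factor. This is a circularity risk, since Lemma~\ref{lem:poisson_log_integral} feeds into the outer-function argument. I would avoid the issue by verifying the Poisson identity for $u$ directly. The function $u$ is harmonic and continuous up to $\partial S_{y_0}$, with growth only $O(\lvert\operatorname{Re}z\rvert)$, and the Poisson kernels decay exponentially. I would apply the maximum principle on truncated rectangles $\lvert\operatorname{Re}z\rvert<X$ to $u$ minus its Poisson integral. The linear growth is dominated by the harmonic functions $\cosh(\pi\operatorname{Re}z/(2y_0))\cos(\dots)$ of the Phragmén--Lindelöf type, which permits letting $X\to\infty$. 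Absolute convergence again follows from linear growth of $\lvert\log P\rvert$ against exponential decay of $P$.
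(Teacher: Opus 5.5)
Your proposal is correct. The half-plane case is the same as the paper's: substitute $x=\tan s$ and use $\int_{-\pi/2}^{\pi/2}\log\cos s\,\mathrm{d}s=-\pi\log 2$. The strip case, however, follows a genuinely different route.

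\emph{The paper's route.} The paper stays purely computational. It substitutes $u=\pm\ee^{\pi x/y_0}$ on the two boundary lines, which merges both integrals into a single integral against the shifted Cauchy kernel $\hat P_0(u-\cos a,\sin a)$, with an extra factor $\frac{\pi}{y_0}\lvert u\rvert$ inside the logarithm. It then evaluates the pieces via $u=\cos a+\sin a\tan s$, using the $\pi$-periodicity of $\log\lvert\cos s\rvert$ to show that the $\log\lvert u\rvert$ term vanishes.

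\emph{Your route.} You read the integral as the value at $-\ii$ of the harmonic extension of the data $\log P_0(\ccdot,1)$, $\log P_{y_0}(\ccdot,1)$. You identify that extension as $\operatorname{Re}G_{1,y_0}$ and evaluate it there; your boundary-value and point-evaluation computations check out.

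\emph{How they relate.} The two are the same identity in different clothing. With $w=\ee^{\pi z/y_0}$ and $a=\pi/y_0$, one has $\operatorname{Re}G_{1,y_0}=\log\frac{\sin a}{y_0}+\log\lvert w\rvert-2\log\lvert w-\ee^{\ii a}\rvert$. The paper's two trigonometric integrals are explicit verifications of the half-plane Poisson identities for $\log\lvert w\rvert$ and $\log\lvert w-\ee^{\ii a}\rvert$ at $w=\ee^{-\ii a}$.

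\emph{What your route buys.} It explains where the constant comes from. As a by-product, it shows directly that $\ee^{G_{1,y_0}}$ is outer in the sense of Definition~\ref{def:inner_outer}; the paper obtains outer-ness of $h_\star$ only indirectly, from minimality and Lemma~\ref{lem:outer}. You were right to discard the outer-function shortcut, since under that definition outer-ness is precisely the identity being proved.

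\emph{What it costs.} You need analytic steps that the appendix does not provide. Lemma~\ref{lem:PoissonL1} handles only $L^1$ and $C_0$ data, so you must show separately that the Poisson extension $v$ of continuous data of linear growth
\begin{itemize}
\item is harmonic,
\item attains its boundary values continuously, and
\item grows at most linearly in $\lvert\operatorname{Re}z\rvert$, uniformly in the depth.
\end{itemize}
Only then does your Phragm\'en--Lindel\"of comparison of $u-v$ with $\cosh\big(\frac{\pi\operatorname{Re}z}{2y_0}\big)\cos\big(\frac{\pi}{2y_0}(\operatorname{Im}z+\frac{y_0}{2})\big)$ go through. These steps are standard, but the growth control is genuinely needed: $\operatorname{Im}\ee^{\pi z/y_0}$ is a nonzero harmonic function vanishing on $\partial S_{y_0}$, so uniqueness fails without it.
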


\begin{proof}
Fix $y_0\in(1,\infty)$ and define $a\coloneqq\tfrac{\pi}{y_0}$.
On the upper boundary make the substitution
$u=\exp(\pi x/y_0)$, and on the lower boundary make the
substitution $u=-\exp(\pi x/y_0)$.
Under these substitutions, corresponding to the conformal map $z\mapsto\exp(\pi z/y_0)$ from $S_{y_0}$ to the lower half-plane, the integral over the two strip boundary components combines into a single integral against the half-plane Poisson kernel:
\begin{align}
&\int_{\R}
\big(
P_0(x,1)\log P_0(x,1)
+
P_{y_0}(x,1)\log P_{y_0}(x,1)
\big)\,\mathrm{d}x
\nonumber\\
&\qquad=
\int_{\R}
\hat P_0(u-\cos a,\sin a)
\log\left(
\frac{\pi}{y_0}|u|
\hat P_0(u-\cos a,\sin a)
\right)\mathrm{d}u.
\label{eq:poisson_log_halfplane_reduction}\\
&\qquad= \log\left(\frac{\pi}{y_0}\right)
+
\int_{\R}
\hat P_0(u-\cos a,\sin a)\log\,\abs{u}\,\mathrm{d}u
\nonumber\\
&\qquad\quad+
\int_{\R}
\hat P_0(u-\cos a,\sin a)
\log\hat P_0(u-\cos a,\sin a)\,\mathrm{d}u.
\label{eq:poisson_log_split}
\end{align}
The integrand on the right-hand side in ~\eqref{eq:poisson_log_halfplane_reduction} is $O(|\log\,\abs{u}|)$ near $u=0$ and
$O(\log\,\abs{u}/u^2)$ as $|u|\to\infty$.
It is therefore absolutely integrable, and we may split the logarithm to obtain the final line. \\

In both remaining integrals, make the substitution
\begin{equation}
u=\cos a+\sin a\tan s,
\qquad
s\in\left(-\frac{\pi}{2},\frac{\pi}{2}\right).
\end{equation}
Then
\begin{equation}
\hat P_0(u-\cos a,\sin a)\,\mathrm{d}u
=
\frac{\mathrm{d}s}{\pi}.
\end{equation}
Consequently,
\begin{align}
\int_{\R}
\hat P_0(u-\cos a,\sin a)\log\,\abs{u}\,\mathrm{d}u
=
\frac{1}{\pi}
\int_{-\pi/2}^{\pi/2}
\log\,\left|
\frac{\cos(s-a)}{\cos s}
\right|\mathrm{d}s=0,
\label{eq:shifted_cauchy_log}
\end{align}
because $\log\,\abs{\cos s}$ is $\pi$-periodic and both integrals are
taken over intervals of length $\pi$.
Moreover,
\begin{align}
\int_{\R}
\hat P_0(u-\cos a,\sin a)
\log\hat P_0(u-\cos a,\sin a)\,\mathrm{d}u &=
\frac{1}{\pi}
\int_{-\pi/2}^{\pi/2}
\log\left(
\frac{\cos^2 s}{\pi\sin a}
\right)\mathrm{d}s
\nonumber\\
&=
-\log(\pi\sin a)-2\log 2
\nonumber\\
&=
-\log(4\pi\sin a),
\label{eq:shifted_cauchy_log_density}
\end{align}
where we used
\begin{equation}
\int_{-\pi/2}^{\pi/2}\log(\cos s)\,\mathrm{d}s
=
-\pi\log 2.
\end{equation}
Substituting \eqref{eq:shifted_cauchy_log} and
\eqref{eq:shifted_cauchy_log_density} into
\eqref{eq:poisson_log_split}, gives
\begin{align}
\int_{\R}
\big(
P_0(x,1)\log P_0(x,1)
+
P_{y_0}(x,1)\log P_{y_0}(x,1)
\big)\,\mathrm{d}x
&=
\log\left(\frac{\pi}{y_0}\right)
-\log(4\pi\sin a)
\nonumber\\
&=
-\log\left(
4y_0\sin\left(\frac{\pi}{y_0}\right)
\right).
\end{align}

For the half-plane kernel,
\begin{equation}
\hat P_0(x,1)=\frac{1}{\pi(1+x^2)}.
\end{equation}
The substitution $x=\tan s$ gives
\begin{align}
\int_{\R}
\hat P_0(x,1)\log\hat P_0(x,1)\,\mathrm{d}x
=&\ 
\frac{1}{\pi}
\int_{-\pi/2}^{\pi/2}
\log\bigg(\frac{\cos^2s}{\pi}\bigg)\mathrm{d}s
\nonumber\\
=&\;
-\log\pi-2\log2
=
-\log(4\pi).
\end{align}
The half-plane integrand is also absolutely integrable, since it is
bounded near the origin and is
$O(\log\,\abs{x}/x^2)$ as $|x|\to\infty$.
\end{proof}

\section{Hardy space LCHS and admissibility conditions}

We prove Theorem~\ref{thm:LCHS}, establish the strict relaxation
of the kernel function hypotheses, and characterize exact LCHS kernel functions.

\subsection{Proof of Theorem~\ref{thm:LCHS}}
\label{subsec:LCHSproof}

We first establish interior integrability of admissible kernel functions $f\in \mc{A}(S_{y_0})$. 

\begin{lem}[Interior integrability of admissible kernel functions]\label{lem:squareint}
Let $y_0\in (1,\infty]$ and $f\in\mathcal{A}(S_{y_0})$. Then for each $y\in (0,y_0)\setminus\{1\}$, $f(\,\cdot-\ii y)\in L^1(\R)$.
When $y_0=\infty$, we also have
\begin{equation}
\label{eq:halfplane_uniform_slice_bound}
\norm{f(\,\cdot-\ii y)}_1
\leq
\frac{3}{\sqrt{2\pi}}
\sup_{s>0}\norm{h(\,\cdot-\ii s)}_1
\qquad (y\geq2).
\end{equation}
Here, $h$ is the Hardy space function associated with $f$ via \eqref{eqn:kernel_f_h}.
\end{lem}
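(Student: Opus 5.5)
The plan is to write $f(x-\ii y)=\frac{\ii}{\sqrt{2\pi}}h(x-\ii y)/B(x-\ii y;y_0)$ and bound $1/\lvert B\rvert$ on each horizontal line $\operatorname{Im}z=-y$ with $y\neq 1$. Since $h\in H^1(S_{y_0})$, the slice $h(\cdot-\ii y)$ is in $L^1(\R)$, with norm at most $\sup_{s\in(0,y_0)}\norm{h(\cdot-\ii s)}_1<\infty$ by \eqref{eqn:def_H1}. It therefore suffices to show that $1/B(\cdot-\ii y;y_0)$ is bounded on $\R$ for each fixed $y\in(0,y_0)\setminus\{1\}$. Hölder's inequality ($L^\infty\cdot L^1$) then gives $f(\cdot-\ii y)\in L^1(\R)$.

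In the half-plane case, $1/B(z;\infty)=(z-\ii)/(z+\ii)$. On the line $z=x-\ii y$ this has modulus
\begin{equation*}
\sqrt{\frac{x^2+(y+1)^2}{x^2+(y-1)^2}}\leq\frac{y+1}{\lvert y-1\rvert}.
\end{equation*}
The bound holds because the ratio is monotone in $x^2$ and approaches $1$ as $x\to\infty$, so its supremum is attained at $x=0$. For $y\geq2$ this gives $(y+1)/(y-1)\leq3$, which yields \eqref{eq:halfplane_uniform_slice_bound} directly.

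In the finite-strip case, I will use the identity $\lvert\sinh(u+\ii v)\rvert^2=\tfrac12(\cosh 2u-\cos 2v)$, as in Step 4a. With $a=\pi/(2y_0)$ and $z=x-\ii y$, this gives
\begin{equation*}
\frac{1}{\lvert B(z;y_0)\rvert^2}=\frac{\cosh(2ax)-\cos(2a(y+1))}{\cosh(2ax)-\cos(2a(y-1))}.
\end{equation*}
For $y\in(0,y_0)$, we have $2a(y-1)\in(-\pi/y_0,\pi-\pi/y_0)$. The denominator vanishes only when $\cosh(2ax)=1$ and $\cos(2a(y-1))=1$, that is, only at $x=0$, $y=1$. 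For fixed $y\neq1$ the denominator is therefore bounded below by $1-\cos(2a(y-1))>0$, while the ratio tends to $1$ as $\lvert x\rvert\to\infty$. Hence the ratio is bounded uniformly in $x$.

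The only delicate point is confirming that the denominator does not vanish for any $y\in(0,y_0)\setminus\{1\}$. This amounts to checking the range of $2a(y-1)$ and observing that it avoids $2\pi\Z\setminus\{0\}$, which holds because its length is less than $2\pi$ and it contains $0$ only at $y=1$. Beyond this check, the argument is elementary.
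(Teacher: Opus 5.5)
Your proposal is correct and follows essentially the same route as the paper: write $f=\frac{\mathrm{i}}{\sqrt{2\pi}}h/B$, bound $1/\lvert B\rvert$ uniformly on each line $\operatorname{Im}z=-y$ with $y\neq 1$, and use $(y+1)/(y-1)\leq 3$ for $y\geq 2$ in the half-plane case. The only cosmetic difference is in the finite-strip case: the paper writes the identity as $\lvert\sinh(u\pm\mathrm{i}v)\rvert^2=\sinh^2u+\sin^2v$ and obtains the explicit bound $\max\{1,\sin^2(\frac{\pi}{2y_0}(1+y))/\sin^2(\frac{\pi}{2y_0}(1-y))\}$ by monotonicity, whereas you argue boundedness qualitatively from the positive lower bound on the denominator and the limit $1$ at infinity.
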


\begin{proof}
We write
    \begin{equation}
    \label{eq:f_to_h_thm4}
        f(z) = \frac{\mathrm{i}}{\sqrt{2\pi}}\frac{h(z)}{B(z;y_0)},
    \end{equation}
recalling that $h\in H^1(S_{y_0})$ implies $h(\cdot-\ii y)\in L^1(\R)$ for every $y\in (0,y_0)$. \\    
    
   Fix $y\in (0,y_0)\setminus \{1\}$. We show that $1/B(\cdot-\ii y)$ is bounded on $\R$. If $y_0\in (1,\infty)$, we have
        \begin{align}
            \bigg\lvert\frac{1}{B(x-\mathrm{i}y;y_0)}\bigg\rvert^2 & = \frac{\mathrm{sinh}^2\big(\frac{\pi}{2y_0}x\big)+\sin^2\big(\frac{\pi}{2y_0}(1+y)\big)}{\mathrm{sinh}^2\big(\frac{\pi}{2y_0}x\big)+\sin^2\big(\frac{\pi}{2y_0}(1-y)\big)} \notag\\
            &\leq \max\bigg\{1,\frac{\sin^2\big(\frac{\pi}{2y_0}(1+y)\big)}{\sin^2\big(\frac{\pi}{2y_0}(1-y)\big)} \bigg\}\eqqcolon b_0(y;y_0)^2\label{eq:b0}
        \end{align}
    using the trigonometric identity $\abs{\sinh(u \pm\mathrm{i}v)}^2 = \sinh^2u + \sin^2 v$.
    We observe that $b_0(y;y_0)$ is finite for $y\in (0,y_0)\setminus\{1\}$. \\
    
    Similarly, in the case $y_0 = \infty$,
        \begin{align}
            \bigg\lvert\frac{1}{B(x-\mathrm{i}y;\infty)}\bigg\rvert^2 & = \frac{x^2 + (1+y)^2}{x^2 + (1-y)^2}\notag \\
            &\leq \frac{(1+y)^2}{(1-y)^2} \eqqcolon b_0(y;\infty)^2.\label{eq:b0_infty}
        \end{align}
    which is finite for $y\in (0,\infty)\setminus\{1\}$. \\

    Now, the representation \eqref{eq:f_to_h_thm4} and the estimates \eqref{eq:b0}--\eqref{eq:b0_infty} give, for each $y\in (0,y_0)\setminus \{1\}$, 
        \begin{align}
            \int_{\mathbb{R}}\lvert f(x-\ii y)\rvert\,\mathrm{d}x = &\ \frac{1}{\sqrt{2\pi}}\int_\mathbb{R}\frac{\lvert h(x-\ii y)\rvert}{\lvert B(x-\ii y;y_0)\rvert} \,\mathrm{d}x\notag\\
             \leq &\;  \frac{b_0(y;y_0)}{\sqrt{2\pi}}\int_{\mathbb{R}}\lvert h(x-\ii y)\rvert\,\mathrm{d}x < \infty,
        \end{align}
    as desired. When $y_0=\infty$ and $y\geq2$, \eqref{eq:b0_infty} gives
    \begin{equation}
    \sup_{x\in\R}
    \bigg\lvert\frac{1}{B(x-\ii y;\infty)}\bigg\rvert
    \leq
    \frac{y+1}{y-1}
    \leq3.
    \end{equation}
    Combining this estimate with \eqref{eq:f_to_h_thm4} proves
    \eqref{eq:halfplane_uniform_slice_bound}.
\end{proof}

We can then proceed with the proof of Theorem~\ref{thm:LCHS}. 

\paragraph{Step 1: contour integral representation.}
To treat the two cases with a common contour argument, set $\eta=0$ when
$y_0\in(1,\infty)$ and fix $\eta>0$ when $y_0=\infty$. Thus, the
regularization is used only in the half-plane case. Define
\begin{equation}
\m{L}_\eta(t)\coloneqq \m{L}(t)+\eta\m{I}
\end{equation}
and let
\begin{equation}
\mc{U}_\eta(z;t)
\coloneqq
\mathcal{T}\exp\bigg(
-\ii\int_0^t
\big(z\m{L}_\eta(t')+\m{H}(t')\big)\,\mathrm{d}t'
\bigg).
\end{equation}
Since the scalar term $z\eta\m{I}$ commutes with the remaining
generator,
\begin{equation}
\label{eq:regularized_propagator_factorization}
\mc{U}_\eta(z;t)
=
\mathrm{e}^{-\ii\eta zt}\mc{U}(z;t).
\end{equation}
In particular, $\mc{U}_\eta(\ccdot;t)$ is entire and, by
\cite[Lem.~4]{low2025},
\begin{equation}
\label{eq:regularized_contraction}
\norm{\mc{U}_\eta(x-\ii y;t)}_2
\leq
\mathrm{e}^{-\eta yt}
\leq1
\qquad (x\in\R,\ y>0).
\end{equation}

Fix $\delta\in(0,1)$ and $Y>1$. In the finite-strip case, we take
$Y=y_0-\delta$ and require
$\delta<y_0-1$. In the half-plane case, $\delta$ and $Y$ are arbitrary. For $M>0$, define a negatively-oriented contour by 
\begin{equation}
\label{eq:Gamma}
\Gamma(M,\delta, Y)\coloneqq \Gamma_\text{top}(M,\delta,Y)\cup\Gamma_\text{right}(M,\delta, Y)\cup\Gamma_\text{bottom}(M,\delta, Y)\cup\Gamma_\text{left}(M,\delta, Y),	
\end{equation}
where
\begin{equation}
\label{eq:Gamma_segments}
\begin{split}
\Gamma_\text{top}(M,\delta, Y)\coloneqq &\;  [-M-\ii \delta,M-\ii \delta ], \\
\Gamma_\text{right}(M,\delta, Y)\coloneqq &\;  [M-\ii \delta,M-\ii Y ], \\
\Gamma_\text{bottom}(M,\delta, Y)\coloneqq &\;  [M-\ii Y ,-M-\ii Y ], \\
\Gamma_\text{left}(M,\delta, Y)\coloneqq &\;  [-M-\ii Y, -M-\ii\delta ]. 
\end{split}
\end{equation}
The function $z\mapsto f(z)\mc{U}_\eta(z;t)$ is meromorphic in $z$ on $S_{y_0}$; the only singularity is a simple pole at $z=-\ii$ with residue $\frac{\ii}{\sqrt{2\pi}}\mc{U}_\eta(-\ii;t)$.
Hence by Cauchy's residue theorem, we have
\begin{equation}
\label{eq:Cauchy}
\frac{1}{\sqrt{2\pi}}\oint_{\Gamma(M,\delta,Y)}f(z)\mc{U}_\eta(z;t)\,\mathrm{d}z
=
\mc{U}_\eta(-\ii;t)
=
\mathrm{e}^{-\eta t}\m{U}(t).
\end{equation}

\paragraph{Step 2: vanishing of the vertical contributions.}

We next show that the contributions of the vertical segments in \eqref{eq:Gamma_segments} to the integral \eqref{eq:Cauchy} vanish as $M \to \infty$. Consider the contribution of $\Gamma_{\mathrm{right}}(M,\delta,Y)$. We have, using \eqref{eq:f_to_h_thm4},
    \begin{equation}
\bigg\lVert\int_{\Gamma_{\textup{right}}(M,\delta,Y)} f(z)\mathcal{U}_\eta(z;t)\,\mathrm{d}z\bigg\rVert_2 \leq \frac{1}{\sqrt{2\pi}}\int_{\delta}^{Y} \frac{\lvert h(M-\ii y)\rvert}{\abs{B(M-\ii y;y_0)}}\norm{\mathcal{U}_\eta(M-\ii y;t)}_2 \,\mathrm{d}y.
    \end{equation}
By \eqref{eq:regularized_contraction}, $\lVert\mathcal{U}_\eta(M-\ii y;t)\rVert_2\leq 1$ for $y>0$. Moreover, we show that for sufficiently large $M$, the estimate $\abs{B(M-\ii y;y_0)}^{-1}\leq \sqrt{2}$ holds. For $y_0\in (1,\infty)$, take $M$ large enough so that $\sinh\big(\frac{\pi}{2y_0}M\big)>1$ holds. Then,    
\begin{align}
\bigg\lvert\frac{1}{B(M-\mathrm{i}y;y_0)}\bigg\rvert^2
=&\; 
\frac{\sinh^2\!\big(\frac{\pi}{2y_0}M\big)+\sin^2\!\big(\frac{\pi}{2y_0}(y+1)\big)}
{\sinh^2\!\big(\frac{\pi}{2y_0}M\big)+\sin^2\!\big(\frac{\pi}{2y_0}(y-1)\big)} \notag\\
\leq &\ 
1+\frac{1}{\sinh^2\!\big(\frac{\pi}{2y_0} M\big)}
< 2.
\end{align}
In the case $y_0=\infty$, take $M>Y+1$. Then,
\begin{equation}
\bigg\lvert\frac{1}{B(M-\mathrm{i}y;\infty)}\bigg\rvert^2=\frac{M^2+(1+y)^2}{M^2+(1-y)^2}< 2.
\end{equation}
It follows that, in both cases, for sufficiently large $M$,
    \begin{equation} \label{eqn:subPoisson} \bigg\lVert\int_{\Gamma_{\textup{right}}(M,\delta,Y)} f(z)\mathcal{U}_\eta(z;t)\,\mathrm{d}z\bigg\rVert_2 \leq \frac{1}{\sqrt{\pi}}\int_{\delta}^Y \lvert h(M-\ii y)\rvert \,\mathrm{d}y.
    \end{equation}
To proceed, we use the Poisson integral formula in Lemma~\ref{lem:poisson_representation}. In the case $y_0\in (1,\infty)$, we have
\begin{align}
\int_{\delta}^Y\lvert h(M-\ii y)\rvert \,\mathrm{d}y=&\;  \int_{\delta}^Y \bigg\lvert	\int_{\R} P_{0}(M-x',y)h_0(x')\,\mathrm{d}x'+\int_{\R} P_{y_0}(M-x',y)h_{y_0}(x')\,\mathrm{d}x'  \bigg\rvert\,\mathrm{d}y \nonumber \\
\leq &\; \int_{\delta}^Y\int_\R P_0(M-x',y)\abs{h_0(x')}\,\mathrm{d}x'\,\mathrm{d}y \nonumber \\
&\; +\int_{\delta}^Y\int_\R P_{y_0}(M-x',y)\abs{h_{y_0}(x')}\,\mathrm{d}x'\,\mathrm{d}y. 
\end{align}
Now, Tonelli's theorem gives
\begin{align}
\int_{\delta}^{Y}\int_{\R}
P_0(M-x',y)|h_0(x')|\,\mathrm{d}x'\,\mathrm{d}y
&=
\int_{\R}K_{0}(M-x')\abs{h_0(x')}\,\mathrm{d}x',
\\
\int_{\delta}^{Y}\int_{\R}
P_{y_0}(M-x',y)\abs{h_{y_0}(x')}\,\mathrm{d}x'\,\mathrm{d}y
&=
\int_{\R}K_{y_0}(M-x')\abs{h_{y_0}(x')}\,\mathrm{d}x',
\end{align}
where
\begin{equation}
\label{eq:K_definitions}
K_0(x) := \int_{\delta}^{Y} P_0(x,y)\, \mathrm{d}y,\quad K_{y_0}(x) := \int_{\delta}^{Y} P_{y_0}(x,y)\, \mathrm{d}y.
\end{equation}
Thus, 
\begin{equation}
\label{eq:conv_bound}
\int_{\delta}^Y \abs{h(M-\ii y)}\,\mathrm{d}y\leq (K_0 \ast \abs{h_0})(M)+(K_{y_0} \ast \abs{h_{y_0}})(M)
\end{equation}

Observe that $K_0,K_{y_0}\in L^1(\R)$ as Tonelli's theorem gives
\begin{equation}
    \int_\mathbb{R} K_{0}(x) \,\mathrm{d
    }x = \int_{\delta}^{Y} \bigg(\int_\mathbb{R} P_{0}(x,y)\,\mathrm{d}x \bigg) \,\mathrm{d}y 
    =\int_{\delta}^{Y}\bigg(1-\frac{y}{y_0} \bigg)\,\mathrm{d}y
    =Y-\delta-\frac{Y^2-\delta^2}{2y_0}<\infty
\end{equation}
and
\begin{equation}
\int_{\R} K_{y_0}(x)\,\mathrm{d}x
= \int_{\delta}^{Y}\bigg( \int_{\R} P_{y_0}(x,y)\,\mathrm{d}x \bigg)\,\mathrm{d}y
= \int_{\delta}^{Y}\frac{y}{y_0}\,\mathrm{d}y
= \frac{Y^2-\delta^2}{2y_0}<\infty.
\end{equation}

We next show that $K_0,K_{y_0}\in C_0(\R)$ (recall~\eqref{eq:C0R}).
Since the Poisson kernels $P_0$ and $P_{y_0}$ are continuous on $\R\times[\delta,Y]$, it follows from \eqref{eq:K_definitions} that $K_0$ and $K_{y_0}$ are continuous on $\R$. \\

Moreover, for $x\in\R\setminus\{0\}$ and $y\in[\delta,Y]$, the explicit formulas
\eqref{eqn:poisson_kernels} give
\begin{equation}
0\leq P_0(x,y)
\leq
\frac{1}
{2y_0\big(\!\cosh(\frac{\pi x}{y_0})-1\big)},
\qquad
0\leq P_{y_0}(x,y)
\leq
\frac{1}
{2y_0\big(\!\cosh(\frac{\pi x}{y_0})-1\big)}.
\end{equation}
Consequently,
\begin{equation}
0\leq K_0(x)
\leq
\frac{Y-\delta}
{2y_0\big(\!\cosh(\frac{\pi x}{y_0})-1\big)},
\qquad
0\leq K_{y_0}(x)
\leq
\frac{Y-\delta}
{2y_0\big(\!\cosh(\frac{\pi x}{y_0})-1\big)}.
\end{equation}
The right-hand side tends to zero as $\abs{x}\to\infty$. Hence $K_0,K_{y_0}\in C_0(\R)\subseteq L^\infty(\R)$. \\ 

In the half-plane case, the Poisson representation and Tonelli's
theorem give
\begin{equation}
\begin{aligned}
\int_{\delta}^Y\abs{h(M-\ii y)}\,\mathrm{d}y
&=
\int_{\delta}^Y
\left|
\int_{\R}
\hat{P}_0(M-x',y)h_0(x')\,\mathrm{d}x'
\right|
\mathrm{d}y
\\
&\leq
\int_{\delta}^Y\int_{\R}
\hat{P}_0(M-x',y)\abs{h_0(x')}
\,\mathrm{d}x'\,\mathrm{d}y
\\
&=
\int_{\R}
\hat{K}_0(M-x')\abs{h_0(x')}\,\mathrm{d}x'=(\hat{K}_0\ast\abs{h_0})(M),
\label{eq:halfplane_convolution_bound}
\end{aligned}
\end{equation}
where
\begin{equation}
\hat{K}_0(x)
\coloneqq
\int_{\delta}^Y\hat{P}_0(x,y)\,\mathrm{d}y.
\end{equation}
Since $\hat{P}_0$ is nonnegative, Tonelli's theorem gives
\begin{equation}
\int_{\R}\hat{K}_0(x)\,\mathrm{d}x
=
\int_{\delta}^Y
\left(
\int_{\R}\hat{P}_0(x,y)\,\mathrm{d}x
\right)
\mathrm{d}y
=
\int_{\delta}^Y\mathrm{d}y
=
Y-\delta.
\end{equation}
Thus, $\hat{K}_0\in L^1(\R)$. The function $\hat{K}_0$ also belongs to $C_0(\R)$. Indeed,
$\hat{P}_0$ is continuous on $\R\times[\delta,Y]$, so
$\hat{K}_0$ is continuous on $\R$. Moreover,
\eqref{eqn:halfplane_poisson_kernel} gives
\begin{equation}
0\leq
\hat{K}_0(x)
\leq
\frac{Y(Y-\delta)}
{\pi(x^2+\delta^2)}.
\end{equation}
The right-hand side tends to zero as $\abs{x}\to\infty$. Hence $\hat{K}_0\in C_0(\R)\subseteq L^\infty(\R)$. \\

By Lemma~\ref{lem:Hardyboundary}, all the boundary functions appearing above belong to $L^1(\R)$. Each convolution in \eqref{eq:conv_bound} and \eqref{eq:halfplane_convolution_bound} has the form $\int_{\R}K(M-x')\abs{g(x')}\,\mathrm{d}x'$, where $K\in C_0(\R)\cap L^\infty(\R)$ and $g\in L^1(\R)$. For almost every $x'\in\R$, the integrand tends to zero as $\abs{M}\to\infty$ and is bounded by $\norm{K}_\infty\abs{g(x')}$. The dominated convergence theorem therefore shows that every convolution in \eqref{eq:conv_bound} and \eqref{eq:halfplane_convolution_bound} vanishes as $\abs{M}\to\infty$. Combining these bounds with \eqref{eqn:subPoisson}, and applying the same estimates with $M$ replaced by $-M$, gives, in both cases,
\begin{equation}
\lim_{M\to\infty}
\bigg\lVert
\int_{\Gamma_{\textup{right}}(M,\delta,Y)}
f(z)\mc{U}_\eta(z;t)\,\mathrm{d}z
\bigg\rVert_2
=
\lim_{M\to\infty}
\bigg\lVert
\int_{\Gamma_{\textup{left}}(M,\delta,Y)}
f(z)\mc{U}_\eta(z;t)\,\mathrm{d}z
\bigg\rVert_2
=
0.
\end{equation}
By Lemma~\ref{lem:squareint} and \eqref{eq:regularized_contraction},
the horizontal integrals converge absolutely as $M\to\infty$. Their
orientations give
\begin{align}
\lim_{M\to\infty}
\int_{\Gamma_{\textup{top}}(M,\delta,Y)}
f(z)\mc{U}_\eta(z;t)\,\mathrm{d}z
&=
\int_{\R}
f(x-\ii\delta)\mc{U}_\eta(x-\ii\delta;t)\,\mathrm{d}x,
\\
\lim_{M\to\infty}
\int_{\Gamma_{\textup{bottom}}(M,\delta,Y)}
f(z)\mc{U}_\eta(z;t)\,\mathrm{d}z
&=
-\int_{\R}
f(x-\ii Y)\mc{U}_\eta(x-\ii Y;t)\,\mathrm{d}x.
\label{eqn:bottomMlim}
\end{align}
Taking $M\to\infty$ in \eqref{eq:Cauchy} therefore gives
\begin{equation}
\label{eq:finite_rectangle_identity}
\mathrm{e}^{-\eta t}\m{U}(t)
=
\frac{1}{\sqrt{2\pi}}
\left(
\int_{\R}
f(x-\ii\delta)\mc{U}_\eta(x-\ii\delta;t)\,\mathrm{d}x
-
\int_{\R}
f(x-\ii Y)\mc{U}_\eta(x-\ii Y;t)\,\mathrm{d}x
\right).
\end{equation}
This completes Step~2. \\

\paragraph{Step 3: passage to the lower boundary.}
Steps~1 and~2 apply to a rectangular contour with fixed upper and lower
depths $\delta$ and $Y$, and yield \eqref{eq:finite_rectangle_identity}.
The order of the limits is important. In the finite-strip case, we first
let $M\to\infty$ for fixed $\delta$, with $\eta=0$ and
$Y(\delta)=y_0-\delta$, and then let $\delta\downarrow0$. In the
half-plane case, for fixed $t>0$, we first let $M\to\infty$, then
$Y\to\infty$, then $\eta\downarrow0$, and finally
$\delta\downarrow0$.

\paragraph{Step 3a: finite strip case, $y_0\in(1,\infty)$.}
Set $\eta=0$ and $Y(\delta)=y_0-\delta$. We now take the limit of the
second horizontal integral in \eqref{eq:finite_rectangle_identity} as
$\delta\downarrow0$,
\begin{align}
\lim_{\delta\downarrow0}
\int_{\R}
f(x-\ii Y(\delta))\mathcal{U}(x-\ii Y(\delta);t)\,\mathrm{d}x
&=
\int_{\mathbb{R}}
f_{y_0}(x)\mathcal{U}(x-\ii y_0;t)\,\mathrm{d}x.
\label{eqn:limdeltafinite1}
\end{align}
To justify this limit, Lemmas~\ref{lem:poisson_representation}
and~\ref{lem:PoissonL1} first give
\begin{equation}
\lim_{\delta\downarrow0}
\norm{h(\,\cdot-\ii Y(\delta))-h_{y_0}}_1
=0.
\end{equation}
By \eqref{eq:b0}, the functions
$1/B(\,\cdot-\ii Y(\delta);y_0)$ are uniformly bounded for all
sufficiently small $\delta$ and converge pointwise to
$1/B(\,\cdot-\ii y_0;y_0)$. Combining these facts with
\eqref{eq:f_to_h_thm4} and applying dominated convergence to the term
containing $h_{y_0}$ gives
\begin{equation}
\lim_{\delta\downarrow0}
\norm{f(\,\cdot-\ii Y(\delta))-f_{y_0}}_1
=0.
\end{equation}
Moreover, $\mathcal{U}(\ccdot;t)$ is entire and contractive on $\H_-$. Thus,
\begin{align}
&\bigg\lVert
\int_{\R}
f(x-\ii Y(\delta))\mathcal{U}(x-\ii Y(\delta);t)\,\mathrm{d}x
-
\int_{\R}
f_{y_0}(x)\mathcal{U}(x-\ii y_0;t)\,\mathrm{d}x
\bigg\rVert_2
\nonumber\\
&\leq
\norm{f(\,\cdot-\ii Y(\delta))-f_{y_0}}_1+
\int_{\R}
\abs{f_{y_0}(x)}
\norm{
\mathcal{U}(x-\ii Y(\delta);t)
-
\mathcal{U}(x-\ii y_0;t)
}_2
\,\mathrm{d}x.
\end{align}
The first term tends to zero by the $L^1$ convergence above. In the second
term, entireness gives pointwise convergence of the difference to zero,
while contractivity bounds the integrand by $2\abs{f_{y_0}}\in L^1(\R)$.
The dominated convergence theorem therefore proves
\eqref{eqn:limdeltafinite1}.
Thus, unlike in the half-plane case, the lower horizontal contribution
survives; in \eqref{eq:finite_rectangle_identity}, it converges to the
negative of the right-hand side of \eqref{eqn:limdeltafinite1}.

\paragraph{Step 3b: half-plane case, $y_0=\infty$.}
Fix $\delta\in(0,1)$, $\eta>0$, and $t>0$. In the half-plane case,
$Y>1$ is arbitrary. We now let $Y\to\infty$ in
\eqref{eq:finite_rectangle_identity}.
For $Y\geq2$, \eqref{eq:regularized_contraction} and
\eqref{eq:halfplane_uniform_slice_bound} give
\begin{align}
\bigg\lVert
\int_{\R}
f(x-\ii Y)\mc{U}_\eta(x-\ii Y;t)\,\mathrm{d}x
\bigg\rVert_2\leq &\; 
\mathrm{e}^{-\eta Yt}
\norm{f(\,\cdot-\ii Y)}_1
\nonumber\\
\leq &\; 
\frac{3\mathrm{e}^{-\eta Yt}}{\sqrt{2\pi}}
\sup_{s>0}\norm{h(\,\cdot-\ii s)}_1.
\label{eq:halfplane_bottom_bound}
\end{align}
Since $\eta>0$ and $t>0$, the right-hand side tends to zero as
$Y\to\infty$. Therefore,
\begin{equation}
\label{eq:halfplane_bottom_limit}
\lim_{Y\to\infty}
\bigg\lVert
\int_{\R}
f(x-\ii Y)\mc{U}_\eta(x-\ii Y;t)\,\mathrm{d}x
\bigg\rVert_2
=0.
\end{equation}
Equivalently, by \eqref{eqn:bottomMlim},
\begin{equation}
\lim_{Y\to\infty}\lim_{M\to\infty}
\bigg\lVert
\int_{\Gamma_\textup{bottom}(M,\delta,Y)}
f(z)\mc{U}_\eta(z;t)\,\mathrm{d}z
\bigg\rVert_2
=0.
\end{equation}
Taking $Y\to\infty$ in \eqref{eq:finite_rectangle_identity} now gives
\begin{equation}
\label{eq:halfplane_regularized_top_identity}
\mathrm{e}^{-\eta t}\m{U}(t)
=
\frac{1}{\sqrt{2\pi}}
\int_{\R}
f(x-\ii\delta)\mc{U}_\eta(x-\ii\delta;t)\,\mathrm{d}x.
\end{equation}
The case $t=0$ will be recovered after passage to the upper boundary by
continuity.

\paragraph{Step 4: passage to the upper boundary.}
We first treat the upper boundary in both cases. Lemmas~\ref{lem:poisson_representation}
and~\ref{lem:PoissonL1} give
\begin{equation}
\lim_{\delta\downarrow0}
\norm{h(\,\cdot-\ii\delta)-h_0}_1
=0.
\end{equation}
By \eqref{eq:b0}--\eqref{eq:b0_infty}, the functions
$1/B(\,\cdot-\ii\delta;y_0)$ are uniformly bounded for all sufficiently
small $\delta$ and converge pointwise to $1/B(\,\cdot;y_0)$. Combining
these facts with \eqref{eq:f_to_h_thm4} and applying dominated convergence
to the term containing $h_0$ gives
\begin{equation}
\label{eq:upper_boundary_L1_convergence}
\lim_{\delta\downarrow0}
\norm{f(\,\cdot-\ii\delta)-f_0}_1
=0.
\end{equation}
Moreover, $\mc{U}(\ccdot;t)$ is entire and contractive on $\H_-$. Thus,
\begin{align}
&\bigg\lVert
\int_{\R}
f(x-\ii\delta)\mc{U}(x-\ii\delta;t)\,\mathrm{d}x
-
\int_{\R}
f_0(x)\mc{U}(x;t)\,\mathrm{d}x
\bigg\rVert_2
\nonumber\\
&\quad\leq
\norm{f(\,\cdot-\ii\delta)-f_0}_1
\nonumber\\
&\qquad+
\int_{\R}
\abs{f_0(x)}
\norm{
\mc{U}(x-\ii\delta;t)-\mc{U}(x;t)
}_2
\,\mathrm{d}x.
\end{align}
The first term tends to zero by \eqref{eq:upper_boundary_L1_convergence}.
In the second term, entireness gives pointwise convergence of the
difference to zero, while contractivity in $\H_-$ and unitarity on $\R$
bound the integrand by $2\abs{f_0}\in L^1(\R)$. The dominated convergence
theorem therefore gives
\begin{equation}
\label{eq:upper_boundary_integral_convergence}
\lim_{\delta\downarrow0}
\bigg\lVert
\int_{\R}
f(x-\ii\delta)\mc{U}(x-\ii\delta;t)\,\mathrm{d}x
-
\int_{\R}
f_0(x)\mc{U}(x;t)\,\mathrm{d}x
\bigg\rVert_2
=0.
\end{equation}

Suppose first that $y_0\in(1,\infty)$. In this case, $\eta=0$ and
$Y(\delta)=y_0-\delta$. Taking $\delta\downarrow0$ in
\eqref{eq:finite_rectangle_identity} and using
\eqref{eqn:limdeltafinite1} and
\eqref{eq:upper_boundary_integral_convergence} gives
\begin{equation}
\m{U}(t)
=
\frac{1}{\sqrt{2\pi}}
\int_{\R}f_0(x)\mc{U}(x;t)\,\mathrm{d}x
-
\frac{1}{\sqrt{2\pi}}
\int_{\R}f_{y_0}(x)\mc{U}(x-\ii y_0;t)\,\mathrm{d}x.
\label{eq:finite_strip_boundary_representation}
\end{equation}

Suppose next that $y_0=\infty$. Fix $\delta\in(0,1)$ and $t>0$, and let
$\eta\downarrow0$ in \eqref{eq:halfplane_regularized_top_identity}. The
left-hand side converges to $\m{U}(t)$. On the right-hand side,
\eqref{eq:regularized_propagator_factorization} gives pointwise convergence
of $\mc{U}_\eta(x-\ii\delta;t)$ to $\mc{U}(x-\ii\delta;t)$, while
\eqref{eq:regularized_contraction} bounds the integrand by
$\abs{f(x-\ii\delta)}\in L^1(\R)$. Hence, dominated convergence gives
\begin{equation}
\m{U}(t)
=
\frac{1}{\sqrt{2\pi}}
\int_{\R}
f(x-\ii\delta)\mc{U}(x-\ii\delta;t)\,\mathrm{d}x.
\end{equation}
Taking $\delta\downarrow0$ and using
\eqref{eq:upper_boundary_integral_convergence} now gives
\begin{equation}
\m{U}(t)
=
\frac{1}{\sqrt{2\pi}}
\int_{\R}f_0(x)\mc{U}(x;t)\,\mathrm{d}x
\qquad (t>0).
\label{eq:halfplane_boundary_representation}
\end{equation}
Finally, $\m{U}(t)$ is continuous at $t=0$, and, for each $x\in\R$,
$\mc{U}(x;t)$ converges to $\m{I}$ as $t\downarrow0$. Since
$\mc{U}(x;t)$ is unitary for $x\in\R$, dominated convergence extends
\eqref{eq:halfplane_boundary_representation} to $t=0$. \\

Combining the two cases, we have proved
\begin{equation}
\label{eqn:pre_truncation}
\m{U}(t)
=
\begin{dcases}
\displaystyle
\frac{1}{\sqrt{2\pi}}
\int_{\R}f_0(x)\mc{U}(x;t)\,\mathrm{d}x
-
\frac{1}{\sqrt{2\pi}}
\int_{\R}f_{y_0}(x)\mc{U}(x-\ii y_0;t)\,\mathrm{d}x
& y_0\in(1,\infty)
\\
\displaystyle
\frac{1}{\sqrt{2\pi}}
\int_{\R}f_0(x)\mc{U}(x;t)\,\mathrm{d}x,
& y_0=\infty.
\end{dcases}
\end{equation}

\paragraph{Step 5: truncate the integral.}
Suppose first that $y_0\in(1,\infty)$. Subtracting
\eqref{eqn:pre_truncation} from \eqref{eq:Of} gives
\begin{equation}
\begin{aligned}
\m{O}_f(t;R)-\m{U}(t)
=&\;
-\frac{1}{\sqrt{2\pi}}
\int_{\R\setminus[-R,R]}
f_0(x)\mc{U}(x;t)\,\mathrm{d}x
\nonumber\\
&\;+
\frac{1}{\sqrt{2\pi}}
\int_{\R}
f_{y_0}(x)\mc{U}(x-\ii y_0;t)\,\mathrm{d}x.
\end{aligned}
\end{equation}
For $x\in\R$, the propagator $\mc{U}(x;t)$ is unitary, while
$\norm{\mc{U}(x-\ii y_0;t)}_2\leq1$. The triangle inequality therefore
gives
\begin{align}
\begin{multlined}
   \norm{\m{O}_f(t;R)-\m{U}(t)}_2
\leq
\frac{1}{\sqrt{2\pi}}
\int_{\R\setminus[-R,R]}\abs{f_0(x)}\,\mathrm{d}x +
\frac{1}{\sqrt{2\pi}}
\int_{\R}\abs{f_{y_0}(x)}\,\mathrm{d}x
=\mc{E}_{R,y_0}[f]. 
\end{multlined}
\end{align}

When $y_0=\infty$, the same subtraction gives
\begin{equation}
\m{O}_f(t;R)-\m{U}(t)
=
-\frac{1}{\sqrt{2\pi}}
\int_{\R\setminus[-R,R]}
f_0(x)\mc{U}(x;t)\,\mathrm{d}x.
\end{equation}
Using unitarity on the real axis, we obtain
\begin{equation}
\norm{\m{O}_f(t;R)-\m{U}(t)}_2
\leq
\frac{1}{\sqrt{2\pi}}
\int_{\R\setminus[-R,R]}\abs{f_0(x)}\,\mathrm{d}x
=\mc{E}_{R,\infty}[f].
\end{equation}
This proves Theorem~\ref{thm:LCHS}. \\

\subsection{Strict relaxation of the class of admissible kernel functions}
\label{subsec:assumptionsproof}

We now show that the Hardy space hypotheses strictly relax the
earlier kernel function assumptions when the associated error bound is finite.

\begin{prop}[Strict relaxation of kernel function hypotheses]
\label{prop:assumptions}
Let $y_0\in(1,\infty)$ and suppose that $f\in\mc{M}(\overline{S}_{y_0})$ satisfies
\begin{enumerate}[label=(\roman*)]
	\item $z\mapsto (z+\ii)f(z)$ is in $\mc{O}(\overline{S}_{y_0})$; \label{item:pole}
	\item $\operatorname{Res}_{z=-\ii} f(z)=\frac{\ii}{\sqrt{2\pi}}$; \label{item:residue}
	\item $\lim_{\lvert x\rvert\to\infty} \sup_{y\in (0,y_0)} \lvert f(x-\ii y)\rvert=0$; \label{item:uniform_decay}
	\item $f\rvert_{\partial S_{y_0}}\in L^1(\partial S_{y_0})$.\label{item:finite_error}
\end{enumerate}
Then, $f\rvert_{S_{y_0}}\in \mc{A}(S_{y_0})$.
The converse fails as there exists $f\in\mc{A}(S_{y_0})$ such that the
function $z\mapsto(z+\ii)f(z)$ does not admit a holomorphic extension to
a neighborhood of $\partial S_{y_0}$, and hence does not belong to
$\mc{O}(\overline{S}_{y_0})$.
\end{prop}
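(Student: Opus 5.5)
Set $h(z)\coloneqq -\ii\sqrt{2\pi}\,B(z;y_0)f(z)$, so that \eqref{eqn:kernel_f_h} holds by construction. The plan is then to check three things: $h\in\mc O(S_{y_0})$, the normalization \eqref{eq:norm_h}, and $h\in H^1(S_{y_0})$. The converse will be handled separately by an explicit counterexample.

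For the first two properties, write $B(z;y_0)f(z)=\frac{B(z;y_0)}{z+\ii}\,(z+\ii)f(z)$. The factor $B(z;y_0)/(z+\ii)$ is holomorphic on a neighbourhood of $\overline S_{y_0}$: the zero of $B$ at $-\ii$ is simple and cancels the denominator, and the poles of $B$ sit at $\ii+2n\ii y_0$, none of which lies in $\overline S_{y_0}$ since $y_0>1$. By \ref{item:pole}, the factor $(z+\ii)f(z)$ is also holomorphic there, so $h\in\mc O(\overline S_{y_0})$. Evaluating at $-\ii$ and using \ref{item:residue} gives
\[
h(-\ii)=-\ii\sqrt{2\pi}\,\frac{\ii}{\sqrt{2\pi}}\lim_{z\to-\ii}\frac{B(z;y_0)}{z+\ii},
\]
which is exactly \eqref{eq:norm_h}.

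The Hardy condition is the main step. First, $|B|$ is bounded on $\overline S_{y_0}$; the same $\sinh$ computation as in Lemma~\ref{lem:squareint} works, now for $B$ rather than $1/B$, and the denominator $\abs{\sinh(\frac{\pi}{2y_0}(z-\ii))}$ is bounded below on the closed strip. It therefore suffices to bound $\int_\R|f(x-\ii y)|\,\mathrm dx$ uniformly in $y$, away from the pole. Choose $X>1$ such that $|f|\le 1$ on $\{|x|\ge X\}\cap\overline S_{y_0}$, which \ref{item:uniform_decay} allows. On the rectangle $|x|\le X$, the function $h$ is continuous on a compact set, so the contribution of $|x|\le X$ to $\int_\R|h(x-\ii y)|\,\mathrm dx$ is at most $2X\max|h|$. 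For $|x|\ge X$, the function $g(z)\coloneqq f(z)$ is holomorphic on the half-strip $\{x>X\}\cap S_{y_0}$, continuous up to its boundary, bounded, and tends to $0$ at infinity. I would dominate $|g|$ by a harmonic majorant: apply Lemma~\ref{lem:PoissonL1}-type Poisson estimates to the subharmonic function $|g|$ on the half-strip, with boundary data $|f_0|$, $|f_{y_0}|$ (integrable by \ref{item:finite_error}) and the bounded vertical segment at $x=X$. Its Poisson integral has $L^1$ norm along horizontal lines bounded by $\|f_0\|_1+\|f_{y_0}\|_1$ plus a term from the vertical side that decays in $x$; the decay comes from harmonic measure of the segment seen from $x-\ii y$, which is $O(\ee^{-\pi(x-X)/y_0})$. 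To justify the majorization on the unbounded half-strip I would use Phragmén–Lindelöf for bounded subharmonic functions. Alternatively, one can truncate to rectangles $[X,M]$ and use the fact that the right side $x=M$ contributes $\sup_y|f(M-\ii y)|\to0$ by \ref{item:uniform_decay}, which is the cleaner route. I expect this harmonic-majorant step to be the main obstacle, mainly the bookkeeping of the vertical-side term so that its $x$-integral is finite uniformly in $y$. The left half-strip is treated symmetrically. Combined with the bound on $|B|$, this yields $\sup_y\int_\R|h(x-\ii y)|\,\mathrm dx<\infty$, hence $f|_{S_{y_0}}\in\mc A(S_{y_0})$.

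For the failure of the converse, take the finite-strip analogue of Proposition~\ref{prop:subnormalization-optimal-kernel}, namely $f_\star$ from \eqref{eq:fopt} with finite $y_0$. Its $h_\star$ was shown to lie in $H^1(S_{y_0})$ and to satisfy the normalization, so $f_\star\in\mc A(S_{y_0})$. However, $(z+\ii)f_\star(z)$ contains the factor
\[
\exp\bigg(\frac{\Delta}{\pi\ii}\log\frac{\sinh(\frac{\pi}{2y_0}(z+R))}{\sinh(\frac{\pi}{2y_0}(z-R))}\bigg),
\]
and its modulus jumps on $\R$ at $x=\pm R$ between $\ee^{\Delta}$ and $1$ (with $\Delta\neq0$ for a suitable choice of $\varepsilon$). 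A function that is holomorphic, hence continuous, on a neighbourhood of $R\in\partial S_{y_0}$ cannot have such a jump in modulus, so no such extension exists.
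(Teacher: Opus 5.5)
Your proposal is correct. The definition of $h$, the normalization check, and the counterexample coincide with the paper's argument; the counterexample uses the jump in modulus of the boundary value of $(z+\ii)f_\star$ at $x=R$, with $\Delta\neq0$, which the paper secures by taking $\varepsilon=(1-m_{R,y_0})/2$. The two arguments differ in how they prove $\sup_{y}\int_\R|h(x-\ii y)|\,\mathrm{d}x<\infty$.

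You split the strip into a compact middle rectangle and two half-strips, and on each half-strip you dominate $|f|$ by a harmonic function with three boundary pieces. That requires two ingredients the paper never uses: the harmonic measure of the vertical end, with its $O(\ee^{-\pi(x-X)/y_0})$ decay, and a comparison of half-strip harmonic measure with the full-strip Poisson kernels.

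The paper avoids this by noting that the pole you are steering around is not an obstruction for $h$. The function $h=-\ii\sqrt{2\pi}\,Bf$ is holomorphic on all of $\overline S_{y_0}$, because the Blaschke factor cancels the pole. Moreover, $|B|\le1$ transfers the uniform decay \ref{item:uniform_decay} from $f$ to $h$. The paper then takes $u$ to be the full-strip Poisson extension of $|h_0|$ and $|h_{y_0}|$. It proves $\operatorname{Re}(\ee^{\ii\theta}h)\le u$ by the maximum principle on rectangles $\mc R_M$, with the vertical sides controlled by the decay of $h$. This is exactly your ``cleaner route'', applied once and globally rather than on half-strips. The Poisson mass identities of Lemma~\ref{lem:PoissonL1} then give the explicit bound $(1-y/y_0)\norm{h_0}_1+(y/y_0)\norm{h_{y_0}}_1$.

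Your decomposition also works, but it yields a non-explicit constant involving $X$ and $\max|h|$ on the middle rectangle, and it needs extra harmonic-measure bookkeeping. Its only real advantage is locality. It treats $f$ directly and would apply verbatim if the singularities inside the strip could not all be absorbed into a single holomorphic factor on the closed strip, which is not the situation here.
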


Note that conditions \ref{item:pole}--\ref{item:uniform_decay} coincide with the kernel function assumptions in \cite[Thm.~1]{low2025}, while \ref{item:finite_error} is the additional condition needed to make
the error bound in \cite[Eq.~(8)]{low2025} finite.\\

Adapted to our
notation, the error bound in \cite[Thm.~1]{low2025} is
\begin{equation}\label{eqn:lowadapted}
\mc{E}_{R,y_0}[f]    = \frac{1}{\sqrt{2\pi}}
\int_{\R\setminus[-R,R]} \lvert f(x)\rvert\,\mathrm{d}x
+
\frac{1}{\sqrt{2\pi}}
\int_{\R} \lvert f(x-\ii y_0)\rvert\,\mathrm{d}x;
\end{equation}
since the kernel functions in \cite{low2025} are continuous, the boundary values are well-defined without having to take limits.
Finiteness of the second term requires
$f(\,\cdot-\ii y_0)\in L^1(\R)$.
Under the regularity hypotheses of that theorem, $f|_{\R}$ is locally integrable, so finiteness of the first term for some $R>0$ implies $f|_{\R}\in L^1(\R)$; its tail then tends to zero as $R\to\infty$. Thus the condition
$f|_{\partial S_{y_0}}\in L^1(\partial S_{y_0})$ is implicit in any
application of \cite[Thm.~1]{low2025} for which the displayed error
bound is finite.

\begin{proof}

Conditions \ref{item:pole}--\ref{item:residue} imply that $f$ admits a representation \eqref{eqn:kernel_f_h} with $h\in \mc{O}(\overline{S}_{y_0})$ satisfying \eqref{eq:norm_h}.
Define
\begin{equation}
\label{eq:f_to_h}
h(z)
\coloneqq
\frac{\sqrt{2\pi}}{\ii}f(z)B(z;y_0)=
\frac{\sqrt{2\pi}}{\ii}
\bigl((z+\ii)f(z)\bigr)
\frac{B(z;y_0)}{z+\ii} \quad 
(z\in \overline{S}_{y_0}\setminus\{-\ii\}),
\end{equation}
where $B(\ccdot;y_0)\in \mc{O}(\overline{S}_{y_0})$ is the Blaschke factor defined in \eqref{eq:Blaschke}.
By condition~\ref{item:pole}, the function $(z+\ii)f(z)\in \mc{O}(\overline{S}_{y_0})$.
Moreover, the quotient $B(z;y_0)/(z+\ii)$ extends holomorphically through $z=-\ii$ since $B(z;y_0)$ has a simple zero there.
Consequently, \eqref{eq:f_to_h} extends holomorphically to $\overline{S}_{y_0}$ and rearranging the first equality in \eqref{eq:f_to_h} gives the representation \eqref{eqn:kernel_f_h}.
At $z=-\ii$, condition~\ref{item:residue} gives
\begin{equation}
h(-\ii)=
\frac{\sqrt{2\pi}}{\ii}
\operatorname*{Res}_{z=-\ii}f(z)
\lim_{z\to-\ii}\frac{B(z;y_0)}{z+\ii}
=\lim_{z\to-\ii}\frac{B(z;y_0)}{z+\ii}.
\end{equation}

It remains to show that $h$ satisfies
\begin{equation}
\label{eq:h_sup_norm}
\sup_{y\in (0,y_0)} \int_{\R} \lvert h(x-\ii y)\rvert\,\mathrm{d}x<\infty 
\end{equation}
and hence $h\in H^1(S_{y_0})$. Because $h\in\mc{O}(\overline{S}_{y_0})$, its boundary values are its classical restrictions
\begin{equation}
h_0(x)=h(x)
\quad
h_{y_0}(x)=h(x-\ii y_0) \quad (x\in \R).
\end{equation}
The same holds for $f$, as its pole at $-\ii$ lies strictly inside the strip.
Since $B(\ccdot;y_0)$ is the Blaschke factor for $S_{y_0}$,
\begin{equation}\label{eqn:blaschke_strip_modulus}
    |B(z;y_0)|\le 1
\qquad
(z\in\overline S_{y_0}),
\end{equation}
with equality if and only if $z\in\partial S_{y_0}$.
Hence, condition~\ref{item:finite_error} and the
definition of $h$ in \eqref{eq:f_to_h} give
\begin{equation}
\norm{h_0}_1=
\sqrt{2\pi}\norm{f_0}_1
<\infty,\quad 
\norm{h_{y_0}}_1=
\sqrt{2\pi}\norm{f_{y_0}}_1
<\infty,
\end{equation}
i.e., the boundary values of $h$ belong to $L^1(\R)$.
Since $|B|\le1$ on $\overline S_{y_0}$, condition~\ref{item:uniform_decay} implies
\begin{equation}
\label{eqn:h_uniform_decay}
\lim_{|x|\to\infty}
\sup_{y\in[0,y_0]}
|h(x-\ii y)|
=0.
\end{equation}
We now use the boundary values of $h$ to construct a harmonic majorant for $|h|$.
The decay \eqref{eqn:h_uniform_decay} ensures that the maximum-principle argument extends from bounded rectangles to the whole strip.\\

For $z=x-\ii y\in S_{y_0}$, let $u(z)$ be the Poisson extension of the boundary values $|h_0|$ and $|h_{y_0}|$,
\begin{equation}
u(x-\ii y)
\coloneqq
\int_{\R}
P_0(x-x',y)\lvert h_0(x')\rvert\,\mathrm{d}x'+
\int_{\R}
P_{y_0}(x-x',y)
\lvert h_{y_0}(x')\rvert\,\mathrm{d}x'.
\end{equation}
Applying Lemma~\ref{lem:PoissonL1}
implies that $u$ is harmonic on $S_{y_0}$ and extends continuously to $\overline{S}_{y_0}$ with
\begin{equation}
\label{eqn:u_boundary_values}
u(x)=\lvert h_0(x)\rvert,
\quad
u(x-\ii y_0)=\lvert h_{y_0}(x)\rvert
\quad
(x\in\R).
\end{equation}
Since $|h|$ need not be harmonic, we instead compare $u$ with the harmonic functions $\operatorname{Re}(\ee^{\ii\theta}h)$.
For each $\theta\in\R$, define
\begin{equation}
w_\theta(z)
\coloneqq
\operatorname{Re}\!\big(\ee^{\ii\theta}h(z)\big)-u(z)
\qquad
(z\in\overline{S}_{y_0}).
\end{equation}
Then $w_\theta$ is harmonic on $S_{y_0}$ and extends continuously to $\overline{S}_{y_0}$.
We now prove that $w_\theta\le0$ on $S_{y_0}$.\\

We apply the maximum principle on the bounded rectangle
\begin{equation}
\mc{R}_M
\coloneqq
\big\{
x-\ii y:
x\in(-M,M),\;
y\in(0,y_0)
\big\}\qquad (M>0).
\end{equation}
Since $u=|h|$ on the horizontal sides of $\partial\mc{R}_M$,
\begin{equation}
w_\theta(z)
=
\operatorname{Re}\big(\ee^{\ii\theta}h(z)\big)
-\lvert h(z)\rvert
\leq0.
\end{equation}
Fix $z_\ast\in S_{y_0}$, $\theta\in\R$, and $\eta>0$.
To estimate the vertical sides of $\partial\mc{R}_M$, the decay
\eqref{eqn:h_uniform_decay} allows us to choose $M>\lvert\operatorname{Re}z_\ast\rvert$ sufficiently large that
\begin{equation}
\lvert h(\pm M-\ii y)\rvert\leq\eta
\qquad
(y\in[0,y_0]).
\end{equation}
Since the Poisson kernels and the boundary data defining $u$ are
nonnegative, $u\geq0$.
Hence, on the vertical sides of
$\partial\mc{R}_M$,
\begin{equation}
w_\theta
\leq
\lvert h\rvert-u
\leq
\lvert h\rvert
\leq\eta.
\end{equation}
Therefore $w_\theta\le\eta$ on $\partial\mc{R}_M$ and the maximum principle for harmonic functions gives $w_\theta(z_\ast)\leq\eta$. Moreover, since $z_\ast$ and $\theta$ were arbitrary, letting $\eta\downarrow0$ yields
\begin{equation}
\operatorname{Re}\bigl(\ee^{\ii\theta}h(z)\bigr)
\leq u(z)
\qquad
(z\in S_{y_0},\;\theta\in\R).
\end{equation}
Choosing $\theta$ so that $\operatorname{Re}(\ee^{\ii\theta}h(z))=|h(z)|$, which is possible whenever $h(z)\neq0$, therefore gives
\begin{equation}
\label{eqn:h_poisson_majorant}
\lvert h(z)\rvert\leq u(z)
\qquad
(z\in S_{y_0}),
\end{equation}
while the inequality is trivial if $h(z)=0$.\\

For each $y\in(0,y_0)$, Tonelli's theorem,
\eqref{eqn:h_poisson_majorant}, and the Poisson-kernel mass identities
give
\begin{equation}
\int_{\R}\lvert h(x-\ii y)\rvert\,\mathrm{d}x
\leq 
\int_{\R}u(x-\ii y)\,\mathrm{d}x
=
\left(1-\frac{y}{y_0}\right)\norm{h_0}_1
+
\frac{y}{y_0}\norm{h_{y_0}}_1<\infty
\end{equation}
Hence $h\in H^1(S_{y_0})$, together with the representation
\eqref{eqn:kernel_f_h} and the normalization \eqref{eq:norm_h}, 
proves that $f\rvert_{S_{y_0}}\in\mc{A}(S_{y_0})$. \\

To see that the converse does not hold, fix $R>0$, set
$\varepsilon=(1-m_{R,y_0})/2$, and let $f_\star$ be the function in
Proposition~\ref{prop:subnormalization-optimal-kernel}. By
\eqref{eqn:kernel_f_h}, the unimodularity of $B(\ccdot;y_0)$ on the
boundary, and \eqref{eq:optimal_upper_boundary_modulus}, the modulus of the upper-boundary value of $f_\star$ has unequal one-sided
limits at $x=R$. Hence its $L^1$-equivalence class contains no continuous
representative. Since $x+\ii\neq0$ on the upper boundary,
condition~\ref{item:pole} would give such a continuous representative and
is therefore violated.

\end{proof}

\subsection{Characterization of exact LCHS kernel functions}
\label{subsec:exact_LCHS_characterization}

Taking $R\to\infty$ in Theorem~\ref{thm:LCHS}, the following lemma shows that the scalar identity
\eqref{eq:exact_scalar_LCHS} is sufficient for the exact matrix-valued
LCHS representation.

\begin{lem}[Exact LCHS kernel functions]
\label{lem:exact_LCHS_characterization}
Let $\varphi\in L^1(\R)$. There exists a unique
$f\in\mc{A}(\H_-)$ with boundary value $f_0=\varphi$ almost everywhere
if and only if \eqref{eq:exact_scalar_LCHS} holds.
\end{lem}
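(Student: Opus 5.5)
The plan is to prove necessity from Theorem~\ref{thm:LCHS} applied to a scalar system, and sufficiency by subtracting an explicit admissible kernel and recognising the remainder as the boundary value of a half-plane Hardy function.

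\emph{Necessity.} Let $f\in\mc A(\H_-)$ with $f_0=\varphi$. I will apply Theorem~\ref{thm:LCHS} with $n=0$, so that the matrices are $1\times1$, taking $\m L\equiv1$, $\m H\equiv0$, and an arbitrary $\tau>0$. Then $\mc U(x;t)=\ee^{-\ii xt}$ and $\m U(t)=\ee^{-t}$. Theorem~\ref{thm:LCHS} with $y_0=\infty$ gives, for every $t\in[0,\tau]$,
\[
\Big|\tfrac{1}{\sqrt{2\pi}}\int_{-R}^{R}\varphi(x)\ee^{-\ii xt}\,\mathrm dx-\ee^{-t}\Big|\leq\tfrac{1}{\sqrt{2\pi}}\int_{\R\setminus[-R,R]}\abs{\varphi}\,\mathrm dx .
\]
Since $\varphi\in L^1(\R)$ by Lemma~\ref{lem:Hardyboundary}, the right-hand side tends to zero as $R\to\infty$. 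This yields \eqref{eq:exact_scalar_LCHS} for every $k=t\in[0,\tau]$, and $\tau>0$ is arbitrary.

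\emph{Sufficiency.} Consider the reference kernel $f_1(z)\coloneqq\sqrt{2/\pi}\,(z^2+1)^{-1}$. Its residue at $-\ii$ is $\sqrt{2/\pi}/(-2\ii)=\ii/\sqrt{2\pi}$. Its associated function $h_1=(\sqrt{2\pi}/\ii)f_1B(\ccdot;\infty)=-2\ii\,(z-\ii)^{-2}$ satisfies $\abs{h_1(x-\ii y)}=2/(x^2+(y+1)^2)$, so $h_1\in H^1(\H_-)$ and $h_1(-\ii)=\ii/2$. Hence $f_1\in\mc A(\H_-)$. A direct computation, using $\int_\R\ee^{-\ii kx}(1+x^2)^{-1}\,\mathrm dx=\pi\ee^{-\abs k}$, shows that $f_{1,0}$ satisfies \eqref{eq:exact_scalar_LCHS}.

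Set $\psi\coloneqq\varphi-f_{1,0}\in L^1(\R)$. Then $\int_\R\psi(x)\ee^{-\ii kx}\,\mathrm dx=0$ for all $k\geq0$. The key step, and the main obstacle, is to show that such a $\psi$ is the boundary value of some $\Psi\in H^1(\H_-)$. The natural candidate is the Poisson extension $\Psi(x-\ii y)=\int_\R\hat P_0(x-x',y)\psi(x')\,\mathrm dx'$. Lemma~\ref{lem:PoissonL1} gives that $\Psi$ is harmonic, has uniformly bounded $L^1$ slice norms, and has $L^1$ boundary limit $\psi$. It remains to show that $\Psi$ is holomorphic. I would argue this with the Fourier transform. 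The Poisson kernel at depth $y$ has Fourier multiplier $\ee^{-y\abs k}$, and $\hat\psi$ is supported on $k<0$ in the convention $\hat\psi(k)=\int\psi\,\ee^{-\ii kx}$. Hence $\Psi(\ccdot-\ii y)$ has transform $\ee^{ky}\hat\psi(k)$, which is the transform of a function of $x-\ii y$ alone. Concretely, one checks that $(\partial_x+\ii\partial_y)\Psi$, the Cauchy--Riemann operator in the variable $z=x-\ii y$, has identically vanishing Fourier transform in $x$ for each fixed $y$. If this bookkeeping is delicate, I would instead invoke the classical half-plane characterisation of $H^1$ via one-sided spectrum (the Paley--Wiener/F.~and M.~Riesz theorem, \cite[Chap.~II]{garnett2007bounded}).

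Given $\Psi$, define $f\coloneqq f_1+\Psi$. Its associated function is $h=h_1+(\sqrt{2\pi}/\ii)\Psi B(\ccdot;\infty)$. This lies in $H^1(\H_-)$ because $\abs B\leq1$ on $\H_-$, and it satisfies $h(-\ii)=h_1(-\ii)=\ii/2$ because $B(-\ii;\infty)=0$. Hence $f\in\mc A(\H_-)$, with boundary value $f_0=f_{1,0}+\psi=\varphi$ almost everywhere.

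\emph{Uniqueness.} Suppose $f,\tilde f\in\mc A(\H_-)$ share the boundary value $\varphi$. Their associated functions $h,\tilde h$ then have the same boundary value, because $B$ is unimodular and nonvanishing on $\R$. By the Poisson representation of Lemma~\ref{lem:poisson_representation}, $h-\tilde h\equiv0$, and therefore $f=\tilde f$.
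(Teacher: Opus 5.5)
Your proposal is correct and follows essentially the same route as the paper: necessity from Theorem~\ref{thm:LCHS} in the scalar case $\m{L}=\m{I}$, $\m{H}=0$ with $R\to\infty$; sufficiency by subtracting $\sqrt{2/\pi}\,(z^2+1)^{-1}$, taking the Poisson extension of the remainder and getting holomorphy from its one-sided Fourier spectrum; and uniqueness from the Poisson representation. One slip to fix: the operator that annihilates holomorphic functions of $z=x-\ii y$ is $\partial_x-\ii\partial_y$ (on the Fourier side $\ii k-\ii k=0$), whereas $\partial_x+\ii\partial_y$ sends $\ee^{yk}\hat\psi(k)$ to $2\ii k\,\ee^{yk}\hat\psi(k)\neq0$, and the cleanest route, as in the paper, is to use Fourier inversion to write $\Psi(x-\ii y)=\frac{1}{2\pi}\int_{-\infty}^{0}\hat\psi(k)\ee^{\ii k(x-\ii y)}\,\mathrm{d}k$ and differentiate under the integral sign.
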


\begin{proof}
Suppose first that $\varphi=f_0$ for some $f\in\mc{A}(\H_-)$.
Apply Theorem~\ref{thm:LCHS} with $\m{L}(t)=\m{I}$ and $\m{H}(t)=0$,
and let $R\to\infty$. Since $f_0\in L^1(\R)$, both the truncation
error and the tail of the integral tend to zero, giving
\eqref{eq:exact_scalar_LCHS} for $k>0$. The case $k=0$ follows by
dominated convergence. \\

Conversely, suppose that \eqref{eq:exact_scalar_LCHS} holds. Define
\begin{equation}
q(z)\coloneqq\sqrt{\frac{2}{\pi}}\frac{1}{z^2+1},
\qquad g_0\coloneqq\varphi-q|_{\R},
\qquad
\hat g_0(k)\coloneqq\frac{1}{\sqrt{2\pi}}
\int_{\R}g_0(x)\ee^{-\ii kx}\,\mathrm{d}x.
\end{equation}
The Fourier transform of $q|_{\R}$ is $\ee^{-|k|}$, so
$\hat g_0(k)=0$ for $k\geq0$. Using the half-plane Poisson kernel
in \eqref{eqn:halfplane_poisson_kernel}, set
\begin{equation}
\label{eq:exact_LCHS_poisson_extension}
g(x-\ii y)\coloneqq
\big(\hat P_0(\ccdot,y)*g_0\big)(x)
=\frac{1}{\sqrt{2\pi}}\int_{-\infty}^{0}
\hat g_0(k)\ee^{\ii(x-\ii y)k}\,\mathrm{d}k
\qquad(y>0).
\end{equation}
The second equality follows by Fourier inversion, since
$\hat g_0$ is bounded and multiplication by $\ee^{-y\abs{k}}$ makes
it integrable. Differentiation under the integral is valid locally
uniformly on $\H_-$, so $g$ is holomorphic there. Moreover, positivity
and unit mass of the Poisson kernel give
\begin{equation}
\sup_{y>0}\norm{g(\ccdot-\ii y)}_1\leq\norm{g_0}_1.
\end{equation}
Thus $g\in H^1(\H_-)$, and Poisson boundary convergence identifies
its boundary value with $g_0$ almost everywhere. \\

Set $f=q+g$. With $B(z;\infty)=(z+\ii)/(z-\ii)$, we have
\begin{equation}
B(z;\infty)f(z)
=\sqrt{\frac{2}{\pi}}\frac{1}{(z-\ii)^2}
+B(z;\infty)g(z)\in H^1(\H_-).
\end{equation}
Indeed, $|B(z;\infty)|\leq1$ on $\H_-$ and the first term has
$L^1$ norm $\sqrt{2\pi}/(1+y)$ on the line $\operatorname{Im}z=-y$.
Consequently,
\begin{equation}
h\coloneqq\frac{\sqrt{2\pi}}{\ii}B(\ccdot;\infty)f
\in H^1(\H_-),
\qquad
h(-\ii)=\frac{\ii}{2}
=\lim_{z\to-\ii}\frac{B(z;\infty)}{z+\ii}.
\end{equation}
This proves $f\in\mc{A}(\H_-)$, and its boundary value is
$q|_{\R}+g_0=\varphi$. Finally, if two admissible functions have the
same boundary value, their corresponding Hardy functions $h$ also
have the same boundary value. The Poisson representation in
Lemma~\ref{lem:poisson_representation} then proves uniqueness.
\end{proof}

\end{document}

%% file: cost-table.tex
\begin{tabular*}{\linewidth}{@{\extracolsep{\fill}} c *{3}{S[table-format=2.4]} @{}}
\toprule
&\multicolumn{1}{c}{This work} &\multicolumn{2}{c}{Low--Somma \cite{low2025}}\\
\cmidrule(lr){2-2}\cmidrule(l){3-4}
$\varepsilon$ &\multicolumn{1}{c}{$f_\star$ \eqref{eq:f_optimal}} &\multicolumn{1}{c}{\eqref{eq:generalized_kernel_intro} with $j=2$, $y=1$} &\multicolumn{1}{c}{\eqref{eq:generalized_kernel_intro} with variable $j,y$}\\
\midrule
$10^{-1}$ & 2.0673 & 3.3213 & 2.5589 \\
$10^{-2}$ & 5.1793 & 9.3434 & 7.0591 \\
$10^{-3}$ & 8.5720 & 16.8148 & 12.7546 \\
$10^{-4}$ & 12.1076 & 25.2656 & 19.2720 \\
$10^{-5}$ & 15.7309 & 34.3612 & 26.4191 \\
$10^{-6}$ & 19.4136 & 43.9410 & 34.0847 \\
$10^{-7}$ & 23.1391 & 53.8354 & 42.1281 \\
$10^{-8}$ & 26.8970 & 64.0510 & 50.5333 \\
$10^{-9}$ & 30.6801 & 74.4881 & 59.2398 \\
$10^{-10}$ & 34.4835 & 85.0466 & 68.2266 \\
\bottomrule
\end{tabular*}